%
%
%

\documentclass[times, 11pt]{article} 
\usepackage{times}
\usepackage{amssymb}
\usepackage{amsthm}
\usepackage{amsmath}
\usepackage{graphicx}
\usepackage{latex8}
\usepackage{mathptmx}
\usepackage[scaled]{helvet}
\usepackage{txfonts}
\usepackage{algorithmic}
\usepackage[T1]{fontenc}
\usepackage{algorithm}
\usepackage{algorithmic}

\newtheorem{conjecture}{Conjecture}
\newtheorem{case}{Case}
\newenvironment{proofsketch}{\noindent {\em {Proof (sketch):}}}{$\Box$\vskip \belowdisplayskip}

\newenvironment{theorem1}[2]{\noindent \textbf{{{#1}~\ref{#2}.}}}{}

\def\cR{\mathcal{R}}
\def\cP{\mathcal{P}}

\newcommand{\set}[1]{\left\{#1\right\}}


\pagestyle{empty}

\begin{document}


\hbadness=10000
\vbadness=10000

\title{\LARGE Vertex Sparsifiers and Abstract Rounding Algorithms}

\author{Moses Charikar \thanks{moses@cs.princeton.edu, Center for Computational Intractability,
Department of Computer Science, Princeton University,
supported by NSF awards MSPA-MCS 0528414, CCF 0832797, and AF 0916218} \and Tom Leighton \thanks{ftl@math.mit.edu, Mathematics Department, MIT and Akamai Technologies, Inc.} \and Shi Li \thanks{shili@cs.princeton.edu, Center for Computational Intractability,
Department of Computer Science, Princeton University,
supported by NSF awards MSPA-MCS 0528414, CCF 0832797, and AF 0916218.} \and Ankur Moitra \thanks{moitra@mit.edu, Computer Science and Artificial Intelligence Laboratory,  MIT, This research was supported in part by a Fannie and John Hertz Foundation Fellowship. Part of this work was done while the author was visiting Princeton University.}  }

\maketitle
\thispagestyle{empty}

\begin{abstract}
\emph{The notion of vertex sparsification (in particular cut-sparsification) is introduced in \cite{M}, where it was shown that for any graph $G = (V, E)$ and a subset of $k$ terminals $K \subset V$, there is a polynomial time algorithm to construct a graph $H = (K, E_H)$ \emph{on just the terminal set} so that simultaneously for all cuts $(A, K-A)$, the value of the minimum cut in $G$ separating $A$ from $K -A$ is approximately the same as the value of the corresponding cut in $H$. Then approximation algorithms can be run directly on $H$ as a proxy for running on $G$, yielding approximation guarantees independent of the size of the graph. In this work, we consider how well cuts in the sparsifier $H$ can approximate the minimum cuts in $G$, and whether algorithms that use such reductions need to incur a multiplicative penalty in the approximation guarantee depending on the quality of the sparsifier.}

\emph{We give the first super-constant lower bounds for how well a cut-sparsifier $H$ can simultaneously approximate all minimum cuts in $G$. 
We prove a lower bound of
$\Omega(\log^{1/4} k)$ -- this is polynomially-related to the known upper
bound of $O(\log k/\log \log k)$.
This is an exponential improvement on the $\Omega(\log \log k)$ bound given in \cite{LM} which in fact was for a stronger vertex sparsification guarantee, and did not apply to cut sparsifiers.}

\emph{Despite this negative result, we show that for many natural problems, we do not need to incur a multiplicative penalty for our reduction. Roughly, we show that any rounding algorithm which also works for the $0$-extension relaxation can be used to construct good vertex-sparsifiers for which the optimization problem is easy. Using this, we obtain optimal $O(\log k)$-competitive Steiner oblivious routing schemes, which generalize the results in \cite{R}. We also demonstrate that for a wide range of graph packing problems (which includes maximum concurrent flow, maximum multiflow and multicast routing, among others, as a special case), the integrality gap of the linear program is always at most $O(\log k)$ times the integrality gap restricted to trees. This result helps to explain the 
ubiquity of the $O(\log k)$ guarantees for such problems.
Lastly, we use our ideas to give an efficient construction for vertex-sparsifiers that match the current best existential results -- this was previously open. Our algorithm makes novel use of Earth-mover constraints.}

\end{abstract}

\newpage

\section{Introduction} 


\subsection{Background}

The notion of vertex sparsification (in particular cut-sparsification) is
introduced in \cite{M}: Given a graph $G = (V, E)$ and a subset of terminals $K
\subset V$, the goal is to construct a graph $H = (K, E_H)$ \emph{on just the
terminal set} so that simultaneously for all cuts $(A, K-A)$, the value of the
minimum cut in $G$ separating $A$ from $K -A$ is approximately the same as the
value of the corresponding cut in $H$. If for all cuts $(A, K - A)$, the the
value of the cut in $H$ is at least the value of the corresponding minimum cut
in $G$ and is at most $\alpha$ times this value, then we call $H$ a
cut-sparsifier of quality $\alpha$. 

The motivation for considering such questions is in obtaining approximation
algorithms with guarantees that are independent of the size of the graph. For
many graph partitioning and multicommodity flow questions, the value of the
optimum solution can be approximated given just the values of the minimum cut
separating $A$ from $K -A$ in $G$ (for \emph{every} $A \subset K$). As a result
the value of the optimum solution is approximately preserved, when mapping the
optimization problem to $H$. So approximation algorithms can be run on $H$ as a
proxy for running directly on $G$, and because the size (number of nodes) of $H$
is $|K|$, any approximation algorithm that achieves a $poly(\log
|V|)$-approximation guarantee in general will achieve a $poly(\log |K|)$
approximation guarantee when run on $H$ (provided that the quality $\alpha$ is
also $poly(\log |K|)$). Feasible solutions in $H$ can also be mapped back to
feasible solutions in $G$ for many of these problems, so polynomial 
 time constructions for good cut-sparsifiers yield black box techniques for designing approximation algorithms with guarantees $poly(\log |K|)$ (and independent of the size of the graph). 

In addition to being useful for designing approximation algorithms with improved
guarantees, the notion of cut-sparsification is also a natural generalization of
many methods in combinatorial optimization that attempt to preserve certain cuts
in $G$ (as opposed to all minimum cuts) in a smaller graph $H$ - for example
Gomory-Hu Trees, and Mader's Theorem. Here we consider a number of questions
related to cut-sparsification: 

\begin{enumerate}
\itemsep=0pt
\item Is there a super-constant lower bound on the quality of cut-sparsifiers? Do the best (or even near-best) cut-sparsifiers necessarily result from (a distribution on) contractions?

\item Do we really need to pay a price (in the approximation guarantee) when applying vertex sparsification to an optimization problem?

\item Can we construct (in polynomial time) cut-sparsifiers with quality as good as the current best \emph{existential} results?

\end{enumerate}

We resolve all of these questions in this paper. In the preceding subsections, we will describe what is currently known about each of these questions, our results, and our techniques. 
\footnote{
Recently, it has come to our attention that, 
independent of and concurrent to our work,
Makarychev and Makarychev, and independently,
Englert, Gupta, Krauthgamer, Raecke, Talgam and Talwar
obtained results similar to some in this paper.
}

\subsection{Super-Constant Lower Bounds and Separations}

In \cite{M}, it is proven that in general there are always cut-sparsifiers $H$ of quality at most $O(\log k / \log \log k)$. In fact, if $G$ excludes any fixed minor then this bound improves to $O(1)$. Yet prior to this work, no super-constant lower bound was known for the quality of cut-sparsifiers in general. We prove

\begin{theorem}~\label{thm:lower}
There is an infinite family of graphs that admits no cut-sparsifiers of quality better than $\Omega(\log^{1/4} k)$. 
\end{theorem}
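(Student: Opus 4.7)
The plan is to exhibit an infinite family of graphs $G_n$, each with a terminal set $K_n$ of size $k\to\infty$, such that no graph $H=(K_n,E_H)$ can simultaneously approximate every terminal min-cut of $G_n$ within factor $o(\log^{1/4} k)$. I would build $G_n$ with a $d$-regular expander ``core'' on $m$ Steiner vertices and attach the $k$ terminals to this core through low-capacity interface edges. The parameters $m$, $d$ and the interface capacities would be tuned so that $\mathrm{mincut}_{G_n}(A,K\setminus A)$ is pinned, up to universal constants, by the balanced ``demand'' $|A|\cdot|K\setminus A|/k$. The resulting terminal cut function $f_G$ is then rigid enough that any graph $H$ on $K$ reproducing it up to factor $\alpha$ must have highly constrained structure.

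Assuming for contradiction such an $H$ of quality $\alpha=o(\log^{1/4}k)$, I would use the viewpoint of \cite{M} to interpret $H$ as a routing proxy: any concurrent terminal demand routable in $H$ with congestion $1$ is routable in $G_n$ with congestion at most $\alpha$, and conversely. Expansion of the core makes $G_n$ an efficient router, so I can exhibit a carefully chosen family of $N=k^{\Omega(1)}$ terminal demand patterns --- for instance, obtained from a pseudorandom matching-based construction on $K$ --- all simultaneously routable in $G_n$ with constant congestion.

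The obstruction comes from bounding how many such demand patterns a graph on only $k$ vertices can simultaneously route: a dimension/counting argument on the edge-slots of $H$ (only $\binom{k}{2}$ of them, each carrying a single nonnegative capacity) shows that forcing all $N$ patterns to be realized in $H$ requires congestion $\Omega(\log^{1/4} k)$ on some pattern. Combining this with the routing-proxy implication above yields the desired lower bound on $\alpha$, contradicting the assumption.

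The hard part will be the third step. Cut sparsifiers are a weaker object than the flow/vertex sparsifiers of \cite{LM}, so their $\Omega(\log\log k)$ argument does not apply verbatim and generic flow-cut gap obstructions are too coarse. I expect the technical heart of the proof to lie in choosing the test demands so that each pattern exhibits a single balanced terminal cut whose failure to be preserved can be read off from a simple linear inequality in the edge weights of $H$, and in tracking parameters through the reduction so that the final exponent is exactly $1/4$. This fractional exponent presumably reflects the composition of a square-root-style bound coming from a $0$-extension-type LP with a further square-root loss incurred when passing from routability of a family of demands to preservation of individual cuts --- which is why the bound falls short of the upper bound $O(\log k/\log\log k)$ from \cite{M}.
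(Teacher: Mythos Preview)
Your proposal has a genuine gap at its core: the routing-proxy interpretation you invoke in step two is a property of \emph{flow}-sparsifiers, not cut-sparsifiers. A cut-sparsifier $H$ of quality $\alpha$ only guarantees that terminal min-cuts are preserved within factor $\alpha$; it does \emph{not} guarantee that a demand routable in $H$ is routable in $G$ with congestion $O(\alpha)$, nor the converse. Passing between cuts and flows incurs the flow--cut gap, which is itself $\Theta(\log k)$ in the worst case, so any argument routed through congestion of demand families would at best yield a bound on flow-sparsifiers --- exactly the setting of \cite{LM}, whose $\Omega(\log\log k)$ example, as the paper notes, admits an $O(1)$ cut-sparsifier. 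Your step-three ``dimension/counting argument on the $\binom{k}{2}$ edge slots'' is also not an argument as written: the space of capacitated graphs on $K$ is $\binom{k}{2}$-dimensional, which is far more than enough to encode any single symmetric cut function, so a pure counting of degrees of freedom cannot produce an obstruction. Finally, your speculation that the $1/4$ arises from two composed square-root losses in a flow reduction is not what happens.

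The paper's proof is entirely different and purely cut-based. The graph is explicit: a $d$-dimensional hypercube (so $k=2^d$) with a terminal hung off each vertex by an edge of capacity $\sqrt{d}$. By averaging over the automorphism group (Corollary~\ref{cor:auto}), one may assume the sparsifier's edge weights depend only on Hamming distance, collapsing the unknowns to $d$ numbers $w_1,\dots,w_d$. Two families of test cuts are then played against each other: the \emph{sub-cube cuts} $A_j=\{s:s_1=\cdots=s_j=0\}$ give upper-bound inequalities $\sum_i \min(ij/d,1)\,w_i \le O(\min(j,\sqrt{d}))$, while the \emph{Majority cut} (Hamming ball of radius $d/2$) gives a lower-bound inequality $\sum_i w_i\sqrt{i/d} \ge \Omega(\sqrt{d}/\alpha)$. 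Taking the linear combination $\sum_j j^{-3/2}$ of the sub-cube inequalities shows the left side dominates the Majority coefficient vector, while the right side evaluates to $O(d^{1/4})$, forcing $\alpha=\Omega(d^{1/4})=\Omega(\log^{1/4}k)$. The exponent $1/4$ thus arises from the integral $\sum_{j\le\sqrt{d}} j^{-1/2}\approx d^{1/4}$, not from any flow--cut translation.
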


Some results are known in more general settings. In particular, one could require that the graph $H$ not only approximately preserve minimum cuts but also approximately preserve the congestion of all multicommodity flows (with demands endpoints restricted to be in the terminal set). This notion of vertex-sparsification is referred to as flow-sparsification (see \cite{LM}) and admits a similar definition of quality. 
\cite{LM} gives a lower bound of $\Omega(\log \log k)$ for the quality of flow-sparsifiers.
However, this does not apply to cut sparsifiers and in fact, for the example given in \cite{LM}, there is an $O(1)$-quality cut-sparsifier!

Additionally, there are examples in which cuts can be preserved within a
constant factor, yet flows cannot: Benczur and Karger \cite{BK} proved that
given any graph on $n$ nodes, there is a sparse (weighted) graph $G'$ that approximate all cuts in $G$ within a multiplicative $(1 + \epsilon)$
factor, but one
provably cannot preserve the congestion of all multicommodity flows within a factor better than
$\Omega(\frac{\log n}{\log \log n})$ on a sparse graph (consider the complete graph $K_n$). So here the limits of sparsification are much different
for cuts than for flows.

In this paper, we give a super-constant lower bound on the quality of cut-sparsifiers in general and in fact this implies a stronger lower bound than is given in \cite{LM}. Our bound is polynomially related to the current best upper-bound, which is $O(\log k / \log \log k)$. 

We note that the current best upper bound is actually a reduction from the upper bound on the integrality gap of a particular LP relaxation for the $0$-extension problem \cite{CKR}, \cite{FHRT}. The integrality gap of this LP relaxation is known to be $\Omega(\sqrt{\log k})$. Yet, the best lower bound we are able to obtain here is $\Omega(\log^{1/4} k)$. This leads us to our next question: Do integrality gaps for the $0$-extension LP immediately imply lower bounds for cut-sparsification? This question, as we will see, is essentially equivalent to the question of whether or not the best cut-sparsifiers necessarily come from a distribution on contractions. 

Lower bounds on the quality of cut-sparsifiers (in this paper) and flow-sparsifiers (\cite{LM}) are substantially more complicated than integrality gap examples for the $0$-extension LP relaxation. If the best cut-sparsifiers or flow-sparsifiers were actually always generated from some distribution on contractions in the original graph via strong duality (see Section $3$), any integrality gap would immediately imply a lower bound for cut-sparsificatin or flow-sparsification. But as we demonstrate here, this is not the case:

\begin{theorem}~\label{thm:sep}
There is an infinite family of graphs so that the ratio of the best quality
cut-sparsifier to the best quality cut-sparsifier that can be achieved through a
distribution on contractions is $o(1) = O(\frac{\log\log\log\log k}{\log^2 \log
\log k})$
\end{theorem}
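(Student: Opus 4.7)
The plan is to exhibit an explicit infinite family of graphs $\{G_k\}$ on which every distribution on contractions of $G_k$ yields a cut-sparsifier whose quality is asymptotically larger than the quality of the best unrestricted cut-sparsifier. A key enabling observation is the strong-duality characterization alluded to in the introduction: the optimum quality attainable by a distribution on contractions of $G$ equals the value of the $0$-extension linear-programming relaxation on $G$. Lower bounds on the contraction-based quality are therefore equivalent to $0$-extension integrality-gap constructions, and I would take $G_k$ to be such an instance.

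Concretely, I would let $G_k$ be (a variant of) the expander-based $0$-extension integrality-gap construction of \cite{CKR}: an $n$-vertex graph with $k$ designated terminals and a carefully chosen population of Steiner vertices whose $0$-extension LP gap is $\Omega(\sqrt{\log k})$. Duality then forces every distribution on contractions of $G_k$ to produce a cut-sparsifier of quality at least $g(k) = \Omega(\sqrt{\log k})$. The heart of the proof is then to construct directly a cut-sparsifier $H_k$ on the terminal set of $G_k$ whose quality is asymptotically smaller than $g(k)$. I would define $H_k$ as the optimum of the linear program ``minimize $\alpha$ subject to $c_{G_k}(A) \le c_{H_k}(A) \le \alpha \cdot c_{G_k}(A)$ for all $A \subseteq K$'' and upper-bound its value by exploiting the expander structure of $G_k$ through explicit flow routings among the terminals. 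The reason this can improve on contractions is that $H_k$ only has to reproduce a single scalar per terminal cut, whereas a distribution over contractions must commit to an actual (random) assignment of each Steiner vertex to a terminal; this additional rigidity is precisely what powers the $0$-extension integrality gap in the first place.

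The main obstacle will be calibrating the quantitative improvement to the claimed rate $O(\log\log\log\log k / \log^2\log\log k)$. I expect this to require either a recursive composition of the basic $0$-extension gap gadget, so that a small per-layer saving compounds across a $\Theta(\log\log k)$-deep recursion, or a refined analysis of the CKR-style instance showing that most terminal cuts have enough slack for a non-contraction sparsifier to beat the $0$-extension gap by exactly this much. The most delicate step will be verifying that the chosen edge weights in $H_k$ match all terminal cuts simultaneously within the target tolerance, since small improvements on many cuts can interfere with each other; carrying the induction through while controlling the compounding error across layers is likely to be the hardest part of the argument and the source of the particular iterated-logarithm shape of the bound.
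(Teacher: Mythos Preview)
Your proposal has a genuine gap: you have not identified either the right graph family or the mechanism that actually produces the separation, and your speculation about the source of the iterated-log rate is incorrect.

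The paper does \emph{not} use the CKR expander-based $0$-extension gap instance. It uses the $d$-dimensional hypercube with a terminal $z_s$ attached to each hypercube vertex $y_s$ by an edge of capacity $\sqrt{d}$, where $d=\log k$. The $\Omega(\sqrt{\log k})$ lower bound for contraction-based sparsifiers is proved directly on this graph via an integrality gap for the Cut-Cut relaxation. This choice is essential, because the hypercube structure is what makes the Fourier-analytic machinery applicable to the upper bound.

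The upper-bound side is where your plan breaks down. You propose to solve an LP over all $2^k$ cut constraints and ``exploit expander structure via flow routings''; this is not a construction. The paper's sparsifier is completely explicit: the capacity between $z_s$ and $z_t$ is $\sqrt{d}\cdot \Pr_{u\sim_\rho s}[u=t]$ with $\rho = 1 - 1/\sqrt{d}$, i.e.\ the weights are given by the Bonami--Beckner noise operator $T_\rho$. A symmetric routing argument shows $h'(A)\le O(1)\,h_K(A)$ for every $A$. The hard direction, showing $h'(A)$ is never too small, is proved by writing $h'(A)$ in terms of the Fourier spectrum of the indicator $f_A$, namely $h'(A)=\Theta\bigl(k\sum_S \hat f_S^2 \min(|S|,\sqrt{d})\bigr)$, and then carrying out a case analysis using the Hypercontractive Inequality (for small-set expansion of $H$) and Bourgain's Junta Theorem (to handle balanced cuts with light Fourier tails). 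None of this appears in your plan.

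Finally, the precise rate $O(\log\log\log\log k/\log^2\log\log k)$ is not the result of a recursive composition of gadgets. It falls out of optimizing the parameters $\epsilon,\delta,\beta$ in Bourgain's theorem against the small-set expansion bound; the nested logarithms are artifacts of the $2^{c\sqrt{\log(1/\delta)\log\log(1/\epsilon)}}$ term in Bourgain's quantitative statement. Your recursive-composition idea would not reproduce this shape and, more importantly, would not give you any handle on why a non-contraction sparsifier beats contractions on the base instance.
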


We also note that in order to prove this result we establish a somewhat surprising connection between cut-sparsification and the harmonic analysis of Boolean functions. The particular cut-sparsifier that we construct in order to prove this result is inspired by the noise stability operator, and as a result, we can use 
tools from harmonic analysis (Bourgain's Junta Theorem \cite{Bou} and the Hypercontractive Inequality \cite{Bon}, \cite{Bec}) to analyze the quality of the cut-sparsifier. Casting this question of bounding the quality as a question in harmonic analysis allows us to reason about many cuts simultaneously without worrying about the messy details of the combinatorics. 


\subsection{Abstract Integrality Gaps and Rounding Algorithms}

As described earlier, running an approximation algorithm on the sparsifier $H = (K, E_H)$ as a proxy for the graph $G = (V, E)$ pays an additional price in the approximation guarantee that corresponds to how well $H$ approximates $G$. Here we consider the question of whether this loss can be avoided. 

As a motivating example, consider the problem of Steiner oblivious routing \cite{M}. Previous techniques for constructing Steiner oblivious routing schemes \cite{M}, \cite{LM} first construct a flow-sparsifier $H$ for $G$, construct an oblivious routing scheme in $H$ and then map this back to a Steiner oblivious routing scheme in $G$. Any such approach must pay a price in the competitive ratio, and cannot achieve  an $O(\log k)$-competitive guarantee because (for example) expanders do not admit constant factor flow-sparsifiers \cite{LM}.

So black box reductions pay a price in the competitive ratio, yet here we present a technique for \emph{combining} the flow-sparsification techniques in \cite{LM} and the oblivious routing constructions in \cite{R} into a single step, and we prove that there are $O(\log k)$-competitive Steiner oblivious routing schemes, which is optimal. This result is a corollary of a more general idea:

The constructions of flow-sparsifiers given in \cite{LM} (which is an extension of the techniques in \cite{M}) can be regarded as a dual to the rounding algorithm in \cite{FHRT} for the $0$-extension problem. What we observe here is: Suppose we are given a rounding algorithm that is used to round the fractional solution of some relaxation to an integral solution for some optimization problem. If this rounding algorithm also works for the relaxation for the $0$-extension problem given in \cite{K} (and also used in \cite{CKR}, \cite{FHRT}), then we can use the techniques in \cite{M}, \cite{LM} to obtain \emph{stronger} flow-sparsifiers which are not only good quality flow-sparsifiers, but also for which the optimization problem is easy. So in this way we do not need to pay an additional price in the approximation guarantee in order to replace the dependence on $n$ with a dependence on $k$. With these ideas in mind, what we observe is that the rounding algorithm in \cite{FRT} wh
 ich embed
 s metric spaces into distributions on dominating tree-metrics, can also be used to round the $0$-extension relaxation. This allows us to construct flow-sparsifiers that have $O(\log k)$-quality, and also can be explicitly written as a convex combination of $0$-extensions that are tree-like. On trees, oblivious routing is easy, and so this gives us a way to simultaneously construct good flow-sparsifiers and good oblivious routing schemes on the sparsifier in one step! 

Of course, the rounding algorithm in \cite{FRT} for embedding metric spaces into distributions on dominating tree-metrics is a \emph{very} common first step in rounding fractional relaxations of graph partitioning, graph layout and clustering problems. So for all problems that use this embedding as the main step, we are able to replace the dependence on $n$ with dependence on $k$, and we do not introduce any additional poly-logarithmic factors as in previous work! One can also interpret our result as giving a generalization of the hierarchical decompositions given in \cite{R} for approximating the cuts in a graph $G$ on trees. We state our results more formally, below, and we refer to such a statement as an {\sc Abstract Integrality Gap}.  

\begin{definition}~\label{def:gpp}
We call a fractional packing problem $P$ a graph packing problem if the goal of the dual covering problem $D$ is to minimize the ratio of the total units of distance $\times$ capacity allocated in the graph divided by some monotone increasing function of the distances between terminals.
\end{definition}

This definition is quite general, and captures maximum concurrent flow, maximum multiflow, and multicast routing as special cases, in addition to many other common optimization problems. The integral\footnotemark[1] dual $ID$ problems are generalized sparsest cut, multicut and requirement cut respectively.

\footnotetext[1]{The notion of what constitutes an integral solution depends on the problem. In some cases, it translates to the distances are all $0$ or $1$, and in other cases it can mean something else. The important point is that the notion of integral just defines a class of admissible metrics, as opposed to arbitrary metrics which can arise in the packing problem.}

\begin{theorem}~\label{thm:gpp}
For any graph packing problem $P$, the maximum ratio of the integral dual to the fractional primal is at most $O(\log k)$ times the maximum ratio restricted to trees. 
\end{theorem}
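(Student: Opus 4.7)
Plan: I would reduce the integrality gap of $P$ on $G$ to the integrality gap on trees using the FRT-based rounding of the $0$-extension relaxation described in the previous subsection. That rounding provides a distribution $\{(f_i,\lambda_i)\}$ of $0$-extensions $f_i:V\to K$, each induced by an FRT tree metric on $K$, such that for every metric $\mu$ on $K$ the key inequality
\[
\sum_i \lambda_i \sum_{uv \in E_G} c_G(uv)\, \mu(f_i(u), f_i(v)) \;\le\; O(\log k)\cdot \min_{\nu\,:\,\nu|_K = \mu}\; \sum_{uv \in E_G} c_G(uv)\, \nu(u,v) \qquad (\star)
\]
holds. Let $H_i$ be the pushforward of $G$ by $f_i$: the graph on $K$ with capacities $c_{H_i}(xy) := \sum_{uv:\,f_i(u)=x,\,f_i(v)=y} c_G(uv)$. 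Since each $f_i$ is derived from a tree embedding, the packing instance $H_i$ is equivalent to one on a tree, and hence obeys the tree integrality gap $\gamma := \sup_T ID^*_T/P^*_T$.

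Take an optimal fractional dual $d^{\mathrm{frac}}$ for $G$, so $R_G(d^{\mathrm{frac}}) = D^*_G = P^*_G$ by LP duality, where $R_G(d) := \sum_{uv\in E_G} c_G(uv)\,d(u,v)/f(d|_K)$ is the dual ratio from Definition~\ref{def:gpp}. Setting $\mu := d^{\mathrm{frac}}|_K$, the metric $d^{\mathrm{frac}}$ itself extends $\mu$, so the right-hand side of $(\star)$ is at most $O(\log k)\cdot P^*_G\cdot f(\mu)$. Dividing both sides by $f(\mu)$ yields $\sum_i \lambda_i R_{H_i}(\mu) \le O(\log k)\, P^*_G$, so by averaging there exists some index $i_0$ with $R_{H_{i_0}}(\mu) \le O(\log k)\, P^*_G$. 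In particular $D^*_{H_{i_0}} \le R_{H_{i_0}}(\mu) \le O(\log k)\, P^*_G$.

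Applying the tree integrality gap to $H_{i_0}$ supplies an integral dual metric $\mu_{\mathrm{int}}$ on $K$ with $R_{H_{i_0}}(\mu_{\mathrm{int}}) \le \gamma\, D^*_{H_{i_0}} \le O(\log k)\,\gamma\, P^*_G$. Now pull $\mu_{\mathrm{int}}$ back to $V$ by defining $d^*(u,v) := \mu_{\mathrm{int}}(f_{i_0}(u), f_{i_0}(v))$. The pushforward definition of $c_{H_{i_0}}$ gives the bookkeeping identity $R_G(d^*) = R_{H_{i_0}}(\mu_{\mathrm{int}})$, whence $ID^*_G \le R_G(d^*) \le O(\log k)\,\gamma\, P^*_G$, which is exactly the claimed bound.

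The main obstacle is verifying that the pulled-back object $d^*$ lies in the admissible class of integral duals on $G$ (cf. footnote~1). For the central examples --- generalized sparsest cut, multicut, requirement cut --- an integral dual on $K$ is a combinatorial object (a partition of $K$, a selection of terminal pairs to separate, or a Steiner selection) and pulling it back along the $0$-extension $f_{i_0}$ simply refines that object to $V$, so admissibility is automatic. Articulating this closure property uniformly for every problem covered by Definition~\ref{def:gpp}, and checking that the denominator $f$ is evaluated identically at $d^*|_K = \mu_{\mathrm{int}}$ as it is on $H_{i_0}$, are the only steps that require care beyond the LP/rounding manipulations above.
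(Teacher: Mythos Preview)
Your outline matches the paper's approach closely, but there is one real gap. The pushforward $H_i = G_{f_i}$ is an arbitrary capacitated graph on $K$, not a tree, so the step ``Since each $f_i$ is derived from a tree embedding, the packing instance $H_i$ is equivalent to one on a tree, and hence obeys the tree integrality gap $\gamma$'' is not justified as stated. That $f_i$ arose from an FRT tree does not by itself make $G_{f_i}$ a tree, nor does it force the integrality gap of the packing problem on $G_{f_i}$ to coincide with the tree gap.

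The paper fills exactly this hole via the $0$-decomposition construction $G_{f,T}$: after the FRT rounding produces both a $0$-extension $f$ and a tree $T$ on $K$, every edge of $G_f$ is re-routed along its path in $T$, producing an honest tree $G_{f,T}$ on $K$. The congestion bound (Theorem~\ref{thm:zdexist}) is proved directly for the distribution on these trees, so your inequality $(\star)$ holds with $G_{f_i,T_i}$ in place of $H_i$; now the tree integrality gap applies legitimately to the averaged object. For the pull-back step one more observation is needed: the integral dual on the tree $G_{f,T}$ is specified by edge-lengths on $T$, so the induced $\mu_{\mathrm{int}}$ is a \emph{tree metric} on $T$; Claim~\ref{claim:tree} then gives $cost(G_f,\mu_{\mathrm{int}}) = cost(G_{f,T},\mu_{\mathrm{int}})$, whence your bookkeeping identity becomes $R_G(d^*)=R_{G_{f,T}}(\mu_{\mathrm{int}})$ and the chain of inequalities closes. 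With the $0$-decomposition inserted at this point, your argument is essentially the paper's.
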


%
For a packing problem that fits into this class,
this theorem allows us to reduce bounding the integrality gap in general graphs to bounding the integrality gap on trees, which is often substantially easier than for general graphs (i.e. for the example problems given above). 
We believe that this result helps to explain the intrinsic robustness of fractional packing problems into undirected graphs,
in particular the ubiquity of the $O(\log k)$ bound for
the flow-cut gap for a wide range of multicommodity flow problems.

We also give a polynomial time algorithm to reduce any graph packing problem $P$ to a corresponding problem on a tree: Again, let $K$ be the set of terminals.

\begin{definition}
Let $OPT(P, G)$ be the optimal value of the fractional graph packing problem $P$ on the graph $G$.
\end{definition}

\begin{theorem}~\label{thm:agpp}
There is a polynomial time algorithm to construct a distribution $\mu$ on (a polynomial number of) trees on the terminal set $K$, s.t. $$E_{T \leftarrow \mu}[OPT(P, T)] \leq O(\log k) OPT(P, G)$$ and such that any valid integral dual of cost $C$ (for any tree $T$ in the support of $\mu$) can be immediately transformed into a valid integral dual in $G$ of cost at most $C$. 
\end{theorem}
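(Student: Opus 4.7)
The plan is to construct $\mu$ by applying the FRT tree embedding of \cite{FRT} as a rounding scheme for the 0-extension relaxation of \cite{K}, and then extracting from each sampled FRT tree both a tree $T$ on $K$ and a 0-extension $\phi_T : V \to K$. This is the algorithmic counterpart of Theorem~\ref{thm:gpp}: as the introduction anticipates, the flow-sparsifier construction of \cite{LM} is dual to the \cite{FHRT} rounding of the 0-extension LP, and substituting FRT as the rounding algorithm forces each 0-extension in the distribution to be tree-like.

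Concretely, I would first set up and solve the LP that, by strong duality, computes the best convex combination of 0-extensions whose averaged contraction forms a sparsifier of $G$, restricted to 0-extensions coming from tree metrics; the separation oracle for its dual is exactly FRT applied to the current candidate metric on $V$. Sampling from the resulting distribution yields a tree metric $\tau$ on $V$ that dominates the LP metric in expectation by $O(\log k)$; collapsing every $v \in V$ to its nearest terminal $\phi_T(v) \in K$ (and using Gupta's $O(1)$ Steiner-point removal to realize the resulting metric as a tree $T$ on $K$) produces a tree with capacities equal to the push-forward of $G$'s capacities under $\phi_T$. By construction the support of $\mu$ has polynomial size.

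For the pullback in property (b), given any integral dual $d_T$ on $T$ of cost $C$, define $d_G(u,v) := d_T(\phi_T(u), \phi_T(v))$. Since $\phi_T$ is the identity on $K$, the monotone function of inter-terminal distances that appears in the dual objective is preserved. Integrality of $d_G$ is inherited from $d_T$ because its attained values are a subset of those of $d_T$. The total $\mathrm{distance} \times \mathrm{capacity}$ cost on $G$ equals $\sum_{(u,v) \in E} c_{uv} d_T(\phi_T(u),\phi_T(v))$, which by the push-forward definition of the capacities on $T$ equals $C$. So the pulled-back dual is feasible on $G$ with cost at most $C$. For property (a), LP duality gives that the distribution $\{\phi_T\}$ is an $O(\log k)$-quality flow-sparsifier in the sense of \cite{LM}; each $T$ is edgewise dominated by the contraction $H_{\phi_T}$ (both use the same push-forward capacities, but $T$ arranges them on a tree), and since $OPT(P,\cdot)$ for any graph packing problem is monotone in capacities, $OPT(P,T) \leq OPT(P, H_{\phi_T})$. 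Taking expectations and using the sparsifier guarantee yields $\mathbb{E}_{T \leftarrow \mu}[OPT(P,T)] \leq O(\log k) \cdot OPT(P,G)$.

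The main obstacle I anticipate is getting the distortion down to $O(\log k)$ rather than the naive $O(\log|V|)$ that vanilla FRT would give. I would handle this by exploiting the 0-extension structure: only distortions between pairs whose mass is carried by the LP metric between terminals matter for the sparsifier quality, so the relevant averaging is over $\binom{k}{2}$ pairs rather than $\binom{|V|}{2}$. This lets one either invoke a terminal-aware FRT variant or, more directly, bound the stretch by $O(\log k)$ on the quantities that appear in the sparsifier LP's objective, exactly as in the analogous bound of \cite{LM} on the quality of their non-tree sparsifiers. Everything else -- polynomial support size, polynomial running time via ellipsoid with the FRT-based separation oracle, and the checks above -- is then routine.
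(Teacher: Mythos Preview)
Your overall architecture---run FRT with only the terminals as centers so that the stretch is $O(\log k)$, package the resulting tree together with the induced $0$-extension, and obtain a polynomial-size distribution via a R\"acke-style packing argument with FRT as the oracle---is exactly the paper's route. The pullback map $d_G(u,v) := d_T(\phi_T(u),\phi_T(v))$ for property (b) is also the right one.

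The gap is in your argument for property~(a), and it stems from an inconsistency in how you assign capacities to the tree. For the pullback computation in (b) to give cost exactly $C$, you need
\[
\sum_{(u,v)\in E} c(u,v)\, d_T(\phi_T(u),\phi_T(v)) \;=\; \sum_{(a,b)\in E(T)} c_T(a,b)\, d_T(a,b),
\]
and the left side equals $\sum_{a\neq b} c_{\phi_T}(a,b)\, d_T(a,b)$, which in turn equals $\sum_{(a',b')\in E(T)} d_T(a',b')\bigl(\sum_{a\in T_{a'},\,b\in T_{b'}} c_{\phi_T}(a,b)\bigr)$. So $c_T(a',b')$ must be the \emph{aggregated} push-forward capacity across the fundamental cut of the tree edge $(a',b')$---this is the paper's $c_{f,T}$. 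With those capacities the tree is \emph{not} edgewise dominated by $H_{\phi_T}$; on the contrary, every flow routable in $H_{\phi_T}$ is routable in $T$. Your monotonicity step therefore yields only $OPT(P,T) \geq OPT(P,H_{\phi_T}) \geq OPT(P,G)$, the lower bound, not the upper bound you claim. If instead you keep only the tree-edge push-forward capacities (so that $T$ really is dominated by $H_{\phi_T}$), then the pullback in (b) overshoots: its cost is $\sum_{(a',b')\in E(T)} d_T(a',b')\, c_{f,T}(a',b')$, which is strictly larger than $C$ whenever any non-tree contraction edge exists. You cannot have both directions with the same capacity assignment.

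The paper bypasses this by arguing (a) directly from the congestion bound rather than via monotonicity. Having established $cong_G(\vec H)\le O(\log k)$ for $H=\sum_{f,T}\gamma(f,T)G_{f,T}$, it takes the optimal dual metric $d$ on $G$ and, using the actual routing of each $\gamma(f,T)G_{f,T}$ in $G$, assigns to every tree edge a distance equal to the average length (under $d$) of the paths carrying its flow. This produces, for each $(f,T)$, a feasible dual on $G_{f,T}$ (it dominates $d$ on $K$), and summing the resulting distance$\times$capacity over the whole distribution gives exactly $\sum_e \mathrm{load}_{\vec H}(e)\, d(e) \le cong_G(\vec H)\cdot \sum_e c(e)d(e) = O(\log k)\,\nu(G,K)$. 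That is the missing step: bound $E[\nu(G_{f,T})]$ by exhibiting one cheap feasible dual on each tree, rather than by comparing capacities. (A minor point: Gupta's Steiner-point removal is unnecessary here; the paper obtains the tree on $K$ by pruning terminal-free leaves of the FRT laminar family and absorbing them into sibling terminal leaves.)
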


As a corollary, given an approximation algorithm that achieves an approximation ratio of $C$ for the integral dual to a graph packing problem on trees, we obtain an approximation algorithm with a guarantee of $O(C \log k)$ for general graphs. We will refer to this last result as an {\sc Abstract Rounding Algorithm}.

We also give a polynomial time construction of $O(\log k/ \log \log k)$ quality flow-sparsifiers (and consequently cut-sparsifiers as well), which were previously only known to exist, but finding a polynomial time construction was still open. We accomplish this by performing a lifting (inspired by Earth-mover constraints) on an appropriate linear program. This lifting allows us to implicitly enforce a constraint that previously was difficult to enforce, and required an approximate separation oracle rather than an exact separation oracle.  We give the details in section ~\ref{sec:alift}. 


\section{Maximum Concurrent Flow}

An instance of the maximum concurrent flow problem consists of an undirected graph $G = (V, E)$, a capacity function $c: E \rightarrow \Re^+ $ that assigns a non-negative capacity to each edge, and a set of demands $\{ (s_i, t_i, f_i)\}$ where $s_i, t_i \in V$ and $f_i$ is a non-negative demand. We denote $K = \cup_i \{s_i, t_i\}$. The maximum concurrent flow question asks, given such an instance, what is the largest fraction of the demand that can be simultaneously satisfied? This problem can be formulated as a polynomial-sized linear program, and hence can be solved in polynomial time. However, a more natural formulation of the maximum concurrent flow problem can be written using an exponential number of variables. 

For any $a, b \in V$ let $P_{a, b}$ be the set of all (simple) paths from $a$ to $b$ in $G$. Then the maximum concurrent flow problem and the corresponding dual can be written as :

\[ \begin{array}{rlrl}
    \max        & \lambda  \hspace{5.0pc} & \hspace{4.0pc} \min & \sum_{e} d(e) c(e)\\
    \mbox{s.t.} & & \hspace{4.0pc}  \mbox{s.t.} & \\
                &  \sum_{P \in P_{s_i, t_i}} x(P) \geq \lambda f_i & \hspace{4.0pc} &  \forall_{P \in P_{s_i, t_i}} \sum_{e \in P} d(e) \geq D(s_i, t_i)  \\
                &  \sum_{P \owns e} x(P) \leq c(e) & \hspace{4.0pc} & \sum_i D(s_i, t_i) f_i \geq 1\\
                &  x(P) \geq 0 & \hspace{4.0pc} & d(e) \geq 0, D(s_i, t_i) \geq 0
    \end{array}
\]

For a maximum concurrent flow problem, let $\lambda^*$ denote the optimum.

Let $|K| = k$. Then for a given set of demands $\{s_i, t_i, f_i\}$, we associate
a vector $\vec{f} \in \Re^{k \choose 2}$ in which each coordinate corresponds to
a pair $(x, y) \in {K \choose 2}$ and the value $\vec{f}_{x, y}$ is defined as
the demand $f_i$ for the terminal pair $s_i = x, t_i = y$. 

\begin{definition}
We denote $cong_G(\vec{f}) = \frac{1}{\lambda^*}$
\end{definition}
Or equivalently $cong_G(\vec{f})$ is the minimum $C$ s.t. $\vec{f}$ can be routed in $G$ and the total flow on any edge is at most $C$ times the capacity of the edge.

Throughout we will use the notation that graphs $G_1, G_2$ (on the same node set) are "summed" by taking the union of their edge set (and allowing parallel edges). 


\subsection{Cut Sparsifiers}

Suppose we are given an undirected, capacitated graph $G = (V, E)$ and a set $K \subset V$ of terminals of size $k$. Let $h: 2^V \rightarrow \Re^+$ denote the cut function of $G$: $h(A) = \sum_{(u, v) \in E \mbox{ s.t. } u \in A, v \in V -A} c( u, v)$. We define the function $h_K : 2^K \rightarrow \Re^+$ which we refer to as the terminal cut function on $K$:
$h_K(U) = \min_{A \subset V \mbox{ s.t. } A \cap K = U} h(A)$.

\begin{definition}
$G'$ is a {\em cut-sparsifier} for the graph $G = (V, E)$ and the terminal set $K$ if $G'$ is a graph on just the terminal set $K$ (i.e. $G' = (K, E')$) and if the cut function $h' : 2^K \rightarrow \Re^+$ of $G'$ satisfies (for all $U \subset K$)
$$h_K(U) \leq h'(U)$$ 
\end{definition}

We can define a notion of quality for any particular cut-sparsifier:

\begin{definition}
The {\em quality} of a cut-sparsifier $G'$ is defined as $$max_{U \subset K} \frac{h'(U) }{h_K(U)}$$
\end{definition}

We will abuse notation and define $\frac{0}{0} = 1$ so that when $U$ is disconnected from $K - U$ in $G$ or if $U = \emptyset$ or $U = K$, the ratio of the two cut functions is $1$ and we ignore these cases when computing the worst-case ratio and consequently the quality of a cut-sparsifier.


\subsection{$0$-Extensions}

\begin{definition}
$f: V \rightarrow K$ is a $0$-extension if for all $a \in K$, $f(a) = a$.
\end{definition}

So a $0$-extension $f$ is a clustering of the nodes in $V$ into sets, with the property that each set contains exactly one terminal. 

\begin{definition}
Given a graph $G = (V, E)$ and a set $K \subset V$, and $0$-extension $f$, $G_f = (K, E_f)$ is a capacitated graph in which for all $a, b \in K$, the capacity $c_f(a,b)$ of edge $(a, b) \in E_f$ is $$\sum_{(u,v) \in E \mbox{ s.t. } f(u) = a, f(v) = b} c(u, v)$$
\end{definition}


\section{Lower Bounds for Cut Sparsifiers}\label{sec:lb}

Consider the following construction for a graph $G$. Let $Y$ be the hypercube of size $2^d$ for $d = \log k$. Then for every node $y_s \in Y$ (i.e. $s \in \{0, 1\}^d$), we add a terminal $z_s$ and connect the terminal $z_s$ to $y_s$ using an edge of capacity $\sqrt{d}$. All the edges in the hypercube are given capacity $1$.  We'll use this instance to show 2 lower bounds, one for 0-extension cut sparsifiers and the other for arbitrary cut sparisifers.

\subsection{Lower bound for Cut Sparsifiers from 0-extensions}

In this subseciton, we give an $\Omega(\sqrt{d})$ integrality gap for the semi-metric relaxation of the $0$-extension problem on this graph, even when the semi-metric (actually on all of $V$) is $\ell_1$. Such a bound is actually implicit in the work of \cite{JLS} too. Also , we show a strong duality between the worst case integrality gap for the semi-metric relaxation (when the semi-metric on $V$ must be $\ell_1$) and the quality of the best cut-sparsifer that can result from contractions. This gives an $\Omega(\sqrt{\log k})$ lower bound on how well a distribution on $0$-extensions can approximate the minimum cuts in $G$.

Also, given the graph $G = (V, E)$ a set $K \subset V$ of terminals, and a semi-metric $D$ on $K$ we define the $0$-extension problem as:

\begin{definition}
The \textbf{0-Extension Problem} is defined as $$\min_{\mbox{0-Extensions} f} \sum_{(u,v) \in E} c(u,v) D(f(a), f(b))$$
We denote $OPT(G, K, D)$ as the value of this optimum.
\end{definition}

\begin{definition}
Let $\Delta_U$ denote the cut-metric in which $\Delta_U (u,v) = 1_{|U \cap \{u, v\} |=1}$.
\end{definition}

Also, given an partition $\cP$ of $V$, we will refer to $\Delta_{\cP}$ as the partition metric (induced by $\cP$) which is $1$ if $u$ and $v$ are contained in different subsets of the partition $\cP$, and is $0$ otherwise. 

\[ \begin{array}{rl}
    \min & \sum_{(u, v) \in E} c(u, v) \delta(u, v) \\
    \mbox{s.t.} & \\
                &   \delta \mbox{  is a semi-metric on } V\\
                &  \forall_{t, t' \in K} \delta(t, t') = D(t, t').
    \end{array}
\]

We refer to this linear program as the \textbf{Semi-Metric Relaxation}. For a particular instance $(G, K, D)$ of the $0$-extension problem, we denote the optimal solution to this linear program as $OPT_{sm}(G, K, D)$. 

 \vspace{0.5pc} \begin{theorem}~\label{thm:fhrt}
\cite{FHRT} $$OPT_{sm}(G, K, D) \leq OPT \leq O(\frac{\log k}{\log \log k}) OPT_{sm}(G, K, D)$$
\end{theorem}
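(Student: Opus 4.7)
For any 0-extension $f$, the pullback $\delta(u,v) := D(f(u), f(v))$ is a semi-metric on $V$ and satisfies $\delta(t,t') = D(t,t')$ on $K$ since $f$ fixes terminals, so $\delta$ is feasible for the Semi-Metric Relaxation. Its LP value equals the 0-extension cost $\sum_{(u,v) \in E} c(u,v) D(f(u), f(v))$, giving $OPT_{sm}(G, K, D) \leq OPT(G, K, D)$. The minimum over 0-extensions thus upper bounds the LP optimum trivially.

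\textbf{Hard direction --- reduction to a per-pair bound.} Fix an optimal $\delta$ for the Semi-Metric Relaxation. My plan is to produce a random 0-extension $f : V \to K$ satisfying, for every $u, v \in V$,
\[
\E\bigl[D(f(u), f(v))\bigr] \leq O\!\left(\tfrac{\log k}{\log \log k}\right) \delta(u,v).
\]
Multiplying by $c(u,v)$ and summing over $(u,v) \in E$ yields $\E[\mathrm{cost}(f)] \leq O(\log k/\log \log k) \cdot OPT_{sm}$, so some realization of $f$ attains this, proving $OPT \leq O(\log k/\log \log k) \cdot OPT_{sm}$. Producing such an $f$ is the only real work.

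\textbf{The randomized clustering.} I would use an FHRT-style ball-growing scheme in the metric $\delta$. Sample a uniformly random permutation $\pi$ of $K$, and independently a scale $\theta \in [1/2, 1]$ from a density to be designed. For each non-terminal $u$, pick a local radius $R_u$ (for instance the distance from $u$ to its nearest terminal, or the maximum distance relevant at $u$), and set $f(u) := \pi(j^*)$ where $j^*$ is the least index $j$ with $\delta(u, \pi(j)) \leq \theta R_u$; terminals map to themselves. With $\theta$ drawn from the scale-free density $1/(\theta \ln 2)$ this is essentially the Calinescu--Karloff--Rabani algorithm, whose per-pair bound is the weaker $O(\log k) \delta(u,v)$.

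\textbf{Main obstacle: gaining the $\log \log k$.} The heart of the proof is upgrading the per-pair bound from $O(\log k)$ to $O(\log k / \log \log k)$. My plan follows the FHRT two-regime analysis: partition terminals into annuli $A_i := \{t \in K : \delta(u,t) \in [2^i \delta(u,v), 2^{i+1} \delta(u,v))\}$ and bound the contribution of each annulus to $\E[D(f(u), f(v))]$. The CKR bound loses a full $\log k$ because it is tight when all terminals pile up at a single scale; with a carefully tilted $\theta$-density one instead pays only a $\log(|A_{i+1}|/|A_i|)$-type charge at each annulus. Summing telescopes across the $O(\log k)$ active annuli, and a convexity / entropy inequality applied to the sequence $\{|A_i|\}$ caps the worst case at $O(\log k / \log \log k)$. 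The main technical subtlety is choosing the $\theta$-density so that sparsely populated and densely populated annuli pay the same amortized cost, and arguing this uniformly over all pairs $(u,v)$.
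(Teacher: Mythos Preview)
The paper does not prove this theorem at all; it is stated as a black box with the citation \cite{FHRT} and used later. So there is no ``paper's own proof'' to compare against --- your write-up is a sketch of the original Fakcharoenphol--Harrelson--Rao--Talwar argument, not of anything in this paper.

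As a sketch of the FHRT proof, your outline is broadly on target (random permutation of $K$, random radius scale, per-pair distortion bound, annulus decomposition, telescoping), but a few points are loose enough to flag. First, with $\theta \in [1/2,1]$ and $R_u$ equal to the distance to the \emph{nearest} terminal, the condition $\delta(u,\pi(j)) \le \theta R_u$ can fail for every $j$ when $\theta<1$; you need either $\theta \ge 1$ (as in CKR, where $\alpha \in [1,2]$) or a different choice of $R_u$, and you should commit to one rather than offering ``for instance.'' Second, and more substantively, the sentence ``with a carefully tilted $\theta$-density one instead pays only a $\log(|A_{i+1}|/|A_i|)$-type charge at each annulus'' is doing all the work and is not yet an argument: the FHRT improvement comes from a specific non-uniform radius distribution (roughly, biased toward larger radii) together with a careful case split between ``dense'' and ``sparse'' annuli, and the convexity step that converts the telescoping sum into $O(\log k/\log\log k)$ is a genuine calculation, not a one-liner. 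If you intend this as a proof rather than a pointer to the literature, those two pieces need to be written out; otherwise, citing \cite{FHRT} as the paper does is the honest move.
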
 \vspace{0.5pc} 

If we are given a semi-metric $D$ which is $\ell_1$, we can additionally define a stronger (exponentially) sized linear program.

\[ \begin{array}{rl}
    \min & \sum_{U} \delta(U) h(U) \\
    \mbox{s.t.} & \\
                &  \forall_{t, t' \in K} \sum_{U}\delta(U) \Delta_U(t, t') = D(t, t').
    \end{array}
\]

We will refer to this linear program as the \textbf{Cut-Cut Relaxation}. For a particular instance $(G, K, D)$ of the $0$-extension problem, we denote the optimal solution to this linear program as $OPT_{cc}(G, K, D)$. 

The value of this linear program is that an upper bound on the integrality gap of this linear program (for a particular graph $G$ and a set of terminals $K$) gives an upper bound on the quality of cut-sparsifiers. In fact, a stronger statement is true, and the quality of the best cut-sparsifier that can be achieved through contractions will be exactly equal to the maximum integrality gap of this linear program. The upper bound is given in \cite{M} -and here we exhibit a strong duality:

\begin{definition}
The \em{Contraction Quality} of $G, K$ is defined to be the minimum $\alpha$ such that there is a distribution on $0$-extensions $\gamma$ and $H = \sum_f \gamma(f) G_f$ is a $\alpha$ quality cut-sparsifier.
\end{definition}

\begin{lemma}
Let $\nu$ be the maximum integrality gap of the Cut-Cut Relaxation for a particular graph $G = (V, E)$, a particular set $K \subset V$ of terminals, over all $\ell_1$ semi-metrics $D$ on $K$. Then the Contraction Quality of $G, K$ is exactly $\nu$.
\end{lemma}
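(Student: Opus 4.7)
The plan is to prove the lemma by LP duality, rewriting the Contraction Quality as a min-max program and then recognizing its dual as precisely the worst-case integrality gap of the Cut-Cut Relaxation over $\ell_1$ metrics $D$.

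First I would set up the Contraction Quality as an LP. Let $\alpha^*$ denote the contraction quality of $(G,K)$. Since any $H=\sum_f \gamma(f) G_f$ automatically satisfies $h'(U) \geq h_K(U)$ (because for a $0$-extension $f$, the preimage $f^{-1}(U)$ has $f^{-1}(U)\cap K = U$, so $h_f(U)=h(f^{-1}(U))\geq h_K(U)$), the lower-bound half of the cut-sparsifier definition is free. Hence
\[
\alpha^* = \min_{\gamma} \max_{U\subset K} \frac{\sum_f \gamma(f) h_f(U)}{h_K(U)},
\]
where $\gamma$ ranges over probability distributions on the (finite) set of $0$-extensions. The max over $U$ can be replaced by a max over convex combinations of the ratios, giving a bilinear min-max problem over two simplices. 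By the minimax theorem (equivalently, finite LP duality),
\[
\alpha^* = \max_{\delta}\ \min_{f}\ \frac{\sum_{U\subset K}\delta(U)\,h_f(U)}{\sum_{U\subset K}\delta(U)\,h_K(U)},
\]
where $\delta$ is a nonnegative vector indexed by subsets $U\subset K$.

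Next I would interpret the numerator and denominator in the language of $0$-extensions and the Cut-Cut Relaxation. Define $D_{\delta}(t,t')=\sum_{U\subset K}\delta(U)\,\Delta_U(t,t')$, which is by construction an $\ell_1$ semi-metric on $K$. For the numerator, expanding $h_f(U)=\sum_{(u,v)\in E} c(u,v)\,\Delta_U(f(u),f(v))$ and interchanging sums yields
\[
\sum_{U}\delta(U)\,h_f(U)=\sum_{(u,v)\in E}c(u,v)\,D_{\delta}(f(u),f(v)),
\]
so $\min_f \sum_U \delta(U) h_f(U) = OPT(G,K,D_\delta)$. For the denominator, the vector $\delta$ is itself a feasible solution to the Cut-Cut Relaxation for the metric $D_\delta$ with objective value $\sum_U\delta(U) h_K(U)$ (here we use that in the Cut-Cut LP we may restrict the support of $\delta$ to subsets of $K$, since any $A\subset V$ can be replaced by a subset of $K$ achieving $h_K(A\cap K)$ without changing the induced metric on $K$).

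Finally, I would group the max over $\delta$ by the induced metric $D=D_\delta$. For a fixed $\ell_1$ metric $D$, maximizing $1/\sum_U\delta(U)h_K(U)$ over $\delta$ with $D_\delta = D$ means minimizing $\sum_U\delta(U)h_K(U)$, which by definition is $OPT_{cc}(G,K,D)$. Taking the outer max over $D$ gives
\[
\alpha^* \;=\; \max_{\ell_1\text{ metrics }D}\ \frac{OPT(G,K,D)}{OPT_{cc}(G,K,D)} \;=\;\nu,
\]
as required. The only real subtlety — and the step I expect to require the most care — is the use of the minimax theorem on these exponentially-large but finite simplices and the book-keeping to identify the resulting dual objective with the Cut-Cut integrality gap restricted to $\ell_1$ metrics; the rest is manipulation of definitions.
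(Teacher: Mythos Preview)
Your proof is correct and follows a somewhat different route from the paper's. The paper proves the two inequalities separately: it cites \cite{M} for the direction $\alpha \le \nu$ (a large integrality gap forces any contraction-based sparsifier to have large quality), and then proves $\nu \le \alpha$ by an explicit averaging argument---given the optimal $\alpha$-quality contraction sparsifier $H=\sum_f \gamma(f) G_f$, it shows that for any $\ell_1$ metric $D$ the cost of $D$ against $H$ is at most $\alpha\cdot OPT_{cc}(G,K,D)$, and since this cost is the average over $f\sim\gamma$ of the cost against $G_f$, some single $f$ achieves it, giving $OPT(G,K,D)\le \alpha\cdot OPT_{cc}(G,K,D)$. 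You instead set up the Contraction Quality as a single bilinear min--max, apply the minimax theorem once, and identify the resulting max--min directly with the worst-case Cut--Cut integrality gap. Your approach is more systematic and establishes both directions simultaneously; the paper's is more elementary in that it never invokes minimax explicitly but just uses averaging for one direction and a citation for the other. One small point worth spelling out in your write-up: the passage from the probability-simplex form of $\delta$ (weighting the ratios $h_f(U)/h_K(U)$) to the ratio form $\bigl[\sum_U \delta(U) h_f(U)\bigr]/\bigl[\sum_U \delta(U) h_K(U)\bigr]$ with $\delta$ merely nonnegative is a change of variables $\delta'(U)=\delta(U)/h_K(U)$ followed by scale invariance, and is correct, but a reader might stumble there.
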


\begin{proof}
Let $\alpha$ be the Contraction Quality of $G, K$. Then implicitly in \cite{M}, $\alpha \leq \nu$. Suppose $\gamma$ is a distribution on $0$-extensions s.t. $H = \sum_f \gamma(f) G_f$ is a $\alpha$-quality cut sparsifier. Given any $\ell_1$ semi-metric $D$ on $K$, we can solve the Cut-Cut Linear Program given above. Notice that cut $(U, V-U)$ that is assigned positive weight in an optimal solution must be the minimum cut separating $U \cap K = A$ from $K - A = (V - U) \cap K$ in $G$. If not, we could replace this cut $(U, V-U)$ with the minimum cut separating $A$ from $K - A$ without affecting the feasibility and simultaneously reducing the cost of the solution. So for all $U$ for which $\delta(U) > 0$, $h(U) = h_K(U \cap K)$. 

Consider then the cost of the semi-metric $D$ against the cut-sparsifier $H$ which is defined to be $\sum_{(a, b)} c_H(a,b) D(a,b) = \sum_f \gamma(f) \sum_{(a,b)} c_f(a,b) D(a,b)$ which is just the average cost of $D$ against $G_f$ where $f$ is sampled from the distribution $\gamma$. The Cut-Cut Linear Program gives a decomposition of $D$ into a weighted sum of cut-metrics - i.e. $D(a,b) = \sum_{U} \delta(U) \Delta_U(a,b)$. Also, the cost of $D$ against $H$ is linear in $D$ so this implies that  $$\sum_{(a, b)} c_H(a,b) D(a,b) = \sum_{(a,b)} \sum_{U}  c_H(a,b) \delta(U) \Delta_U(a,b) = \sum_{(a,b)} c_H(a,b) \delta(U) h'(U \cap K)$$
In the last line, we use $\sum_{(a,b)} c_H(a,b) \Delta_U(a,b) = h'(U \cap K)$. Then
$$\sum_{(a, b)} c_H(a,b) D(a,b) \leq \sum_{U} \delta(U) \alpha h_K(U \cap K) = \alpha OPT_{cc}(G, K, D)$$
In the inequality, we have used the fact that $H$ is an $\alpha$-quality cut-sparsifier, and in the last line we have used that $\delta(U) > 0$ implies that $h(U) = h_K(U \cap K)$. This completes the proof because the average cost of $D$ against $G_f$ where $f$ is sampled from $\gamma$ is at most $\alpha OPT_{cc}(G, K, D)$, so there must be some $f$ s.t. the cost against $D$ is at most $\alpha OPT_{cc}(G, K, D)$.
\end{proof}

We will use this strong duality between the Cut-Cut Relaxation and the Contraction Quality to show that for the graph $G$ given above, no distribution on $0$-extensions gives better than an $\Omega(\sqrt{\log k})$ quality cut-sparsifier, and all we need to accomplish this is to demonstrate an integrality gap on the example for the Cut-Cut Relaxation. 

Let's repeat the construction of $G$ here. Let $Y$ be the hypercube of size $2^d$ for $d = \log k$. Then for every node $y_s \in Y$ (i.e. $s \in \{0, 1\}^d$), we add a terminal $z_s$ and connect the terminal $z_s$ to $y_s$ using an edge of capacity $\sqrt{d}$. All the edges in the hypercube are given capacity $1$. 

Then consider the distance assignment to the edges: Each edge connecting a terminal to a node in the hypercube - i.e. an edge of the form $(z_s, y_s)$ is assigned distance $\sqrt{d}$ and every other edge in the graph is assigned distance $1$. Then let $\sigma$ be the shortest path metric on $V$ given these edge distances. 

\begin{claim}
$\sigma$ is an $\ell_1$ semi-metric on $V$, and in fact there is a weighted combination of cuts s.t. $\sigma(u,v) = \sum_{U} \delta(U) \Delta_U(u,v)$ and $\sum_U \delta(U) h(U) = O(k d)$
\end{claim}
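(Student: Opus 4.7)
The plan is to give an explicit nonnegative cut decomposition of $\sigma$; this simultaneously certifies that $\sigma$ is $\ell_1$ and controls the cost $\sum_U \delta(U) h(U)$.

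First I would introduce two families of cuts. For each coordinate $i \in [d]$, let $U_i = \{y_s : s_i = 0\} \cup \{z_s : s_i = 0\}$, where I deliberately place each terminal on the same side as its hypercube neighbor so that the terminal edges do not cross $U_i$. Assign weight $\delta(U_i) = 1$. For each $s \in \{0,1\}^d$, also include the singleton $T_s = \{z_s\}$ with weight $\delta(T_s) = \sqrt{d}$.

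Next I would verify $\sigma(u,v) = \sum_U \delta(U) \Delta_U(u,v)$ case by case. The coordinate cuts reproduce the Hamming distance both between pairs $y_s, y_t$ and between pairs $z_s, z_t$, while the singleton cut $T_s$ adds $\sqrt{d}$ whenever exactly one endpoint is $z_s$. Checking the four cases $(y_s, y_t)$, $(z_s, y_s)$, $(z_s, y_t)$ with $s \neq t$, and $(z_s, z_t)$ matches the shortest-path metric exactly, since the shortest $z_s$-to-$z_t$ path traverses both terminal edges plus a hypercube geodesic. In passing, the existence of a nonnegative cut decomposition yields the $\ell_1$ claim for free via the standard correspondence between elements of the cut cone and $\ell_1$ embeddings.

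Finally I would tally $\sum_U \delta(U) h(U)$. Each coordinate cut $U_i$ is crossed by exactly $2^{d-1}$ hypercube edges of unit capacity and by no terminal edge, so it contributes $2^{d-1}$; summing over $i$ gives $d \cdot 2^{d-1} = kd/2$. Each singleton $T_s$ is crossed only by the edge $(z_s, y_s)$ of capacity $\sqrt{d}$, contributing $\sqrt{d} \cdot \sqrt{d} = d$; summing over the $k = 2^d$ terminals gives $kd$. The grand total is $3kd/2 = O(kd)$, as required.

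There is no serious obstacle here -- the argument is a bookkeeping verification. The only design choices worth flagging are (a) grouping each $z_s$ with $y_s$ inside the coordinate cuts, so that the terminal edges do not inflate $h(U_i)$, and (b) using singletons rather than larger sets to supply the $\sqrt{d}$ gap on each terminal edge, which keeps each $h(T_s)$ equal to the single traversed capacity $\sqrt{d}$.
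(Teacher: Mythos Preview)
Your proposal is correct and follows essentially the same approach as the paper: both use the $d$ coordinate (axis) cuts with weight $1$ together with the $k$ singleton terminal cuts with weight $\sqrt{d}$, and both arrive at the same $O(kd)$ cost tally. Your write-up is in fact more careful than the paper's, explicitly verifying all four distance cases and noting the design choice of grouping $z_s$ with $y_s$ in the coordinate cuts.
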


\begin{proof}
We can take $\delta(U) = 1$ for any cut $(U, V-U)$ s.t. $U = \{z_s \cup y_s | s_i = 1\}$ - i.e. $U$ is the axis-cut corresponding to the $i^{th}$ bit. We also take $\delta(U) = \sqrt{d}$ for each $U = \{z_s\}$. This set of weights will achieve $\sigma(u,v) = \sum_{U} \delta(U) \Delta_U(u,v)$, and also there are $d$ axis cuts each of which has capacity $h(U) = \frac{k}{2}$ and there are $k$ singleton cuts of weight $\sqrt{d}$ and capacity $\sqrt{d}$ so the total cost is $O(k d)$.

\end{proof}

Yet if we take $D$ equal to the restriction of $\sigma$ on $K$, then $OPT(G, K, D) = \Omega(k d^{3/2})$:

\begin{lemma}
$OPT(G, K, D) = \Omega(k d^{3/2})$
\end{lemma}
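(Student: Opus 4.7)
My plan is to identify $y_s \in Y$ and $z_s \in K$ with $s \in \{0,1\}^d$, so that any $0$-extension $f$ is specified by an arbitrary map $\phi : \{0,1\}^d \to \{0,1\}^d$ (via $f(y_s) = z_{\phi(s)}$ and $f(z_s) = z_s$) and the cost splits as
\[
\mathrm{cost}(f) \;=\; \underbrace{\sqrt{d}\,\sum_{s} D(z_s, z_{\phi(s)})}_{T(\phi)} \;+\; \underbrace{\sum_{\{s,t\}:\,\mathrm{Ham}(s,t)=1} D(z_{\phi(s)}, z_{\phi(t)})}_{H(\phi)},
\]
where $D(z_a, z_b) = (2\sqrt{d} + \mathrm{Ham}(a,b))\mathbf{1}[a \neq b]$. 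I would extract two different weak lower bounds, one on $T$ and one on $H$, and let them play off one another. Dropping the additive $2\sqrt d$ in $D$ gives $T(\phi) \geq \sqrt{d}\sum_s \mathrm{Ham}(s, \phi(s)) = \sqrt{d}\sum_{i=1}^{d} \alpha_i$, where $\alpha_i := |\{s : \phi_i(s) \neq s_i\}|$ and $\phi_i$ denotes the $i$-th bit of $\phi$; dropping the Hamming term gives $H(\phi) \geq 2\sqrt{d}\cdot B(\phi)$, where $B(\phi)$ is the number of bichromatic hypercube edges (those with $\phi(s) \neq \phi(t)$).

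The crux, and the main obstacle I expect, is the combinatorial lemma $B(\phi) \geq \sum_{i=1}^{d}(k/2 - \alpha_i)_+$. To prove it, fix a direction $i$ and consider the perfect matching of $\{0,1\}^d$ given by the $k/2$ direction-$i$ edges $(s, s \oplus e_i)$. Let $X_i = \{s : \phi_i(s) = s_i\}$, so $|X_i| = k - \alpha_i$. Since $\{0,1\}^d \setminus X_i$ has only $\alpha_i$ vertices, at most $\alpha_i$ of these matching edges can straddle $X_i$ and its complement; hence at least $(|X_i| - \alpha_i)/2 = k/2 - \alpha_i$ direction-$i$ edges lie entirely inside $X_i$. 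On any such edge $(s, s \oplus e_i)$ we have $\phi_i(s) = s_i$ and $\phi_i(s \oplus e_i) = s_i \oplus 1$, so $\phi(s) \neq \phi(s \oplus e_i)$ and the edge is bichromatic. Because edges in distinct directions are disjoint, summing over $i$ yields the claim.

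To finish, set $\Sigma = \sum_i \alpha_i$; combining the three inequalities and using $\sum_i (k/2 - \alpha_i)_+ \geq (dk/2 - \Sigma)_+$ (by sublinearity of $(\cdot)_+$) I get
\[
\mathrm{cost}(f) \;\geq\; \sqrt{d}\,\Sigma \;+\; 2\sqrt{d}\cdot(dk/2 - \Sigma)_+.
\]
Regarded as a function of $\Sigma \geq 0$, the right-hand side is minimized at $\Sigma = dk/2$: for $\Sigma \geq dk/2$ the first term alone is already $\geq kd^{3/2}/2$, and for $\Sigma < dk/2$ the expression equals $\sqrt{d}(dk - \Sigma) \geq \sqrt{d}\cdot dk/2$. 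Either way every $0$-extension has cost $\Omega(kd^{3/2})$, which is the claimed bound. Once the direction-$i$ matching lemma is established the balancing calculation is automatic, so the only place I see a real subtlety is in phrasing that combinatorial step cleanly.
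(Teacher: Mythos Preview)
Your proof is correct and takes a genuinely different route from the paper's. The paper argues via a heavy/light dichotomy on the sizes of the preimages $f^{-1}(a)$: if most of the mass sits in small preimages, it invokes the small-set edge-isoperimetry of the hypercube to show many hypercube edges are cut (each paying at least $\sqrt d$); if most mass sits in a few large preimages, it uses a Hamming-ball packing argument to show that the terminal edges $(y_s,z_s)$ alone already pay $\Omega(\text{weight}\cdot d)$ in distance (times capacity $\sqrt d$). Either branch gives $\Omega(kd^{3/2})$.

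Your argument is coordinate-wise rather than set-wise: you track, for each bit $i$, the number $\alpha_i$ of vertices where $\phi$ flips bit $i$, and observe that direction-$i$ edges both of whose endpoints have $\phi_i$ agreeing with the identity must be bichromatic. This replaces both isoperimetry and the Hamming-ball bound with a single pigeonhole on a perfect matching, and the final balancing of $T(\phi)\ge\sqrt d\,\Sigma$ against $H(\phi)\ge 2\sqrt d\,(dk/2-\Sigma)_+$ is immediate. The tradeoff: the paper's approach generalizes more readily to other host graphs with known expansion profiles, while yours is more elementary and self-contained for the hypercube (no appeal to small-set expansion or concentration of measure).
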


\begin{proof}
Consider any $0$-extension $f$. And we can define the weight of any terminal $a$ as $weight_f(a) = |f^{-1}(a) | = |\{v | f(v) = a\}|$. Then $\sum_a weight_f(a) = n$ because each node in $V$ is assigned to some terminal. We can define a terminal as heavy with respect to $f$ if $weight_f(a) \geq \sqrt{k}$ and light otherwise. Obviously, $\sum_a weight_f(a) = \sum_{a \mbox{ s.t. } a \mbox{ is light}} weight_f(a) +  \sum_{a \mbox{ s.t. } a \mbox{ is heavy}} weight_f(a)$ so the sum of the sizes of either all heavy terminals or of all light terminals is at least $\frac{n}{2} = \Omega(k)$. 

Suppose that $ \sum_{a \mbox{ s.t. } a \mbox{ is light}} weight_f(a) = \Omega(k)$. For any pair of terminals $a, b$, $D(a,b) \geq \sqrt{d}$. Also for any light terminal $a$, $f^{-1}(a) - \{a\}$ is a subset of the Hypercube of at most $\sqrt{k}$ nodes, and the small-set expansion of the Hypercube implies that the number of edges out of this set is at least $\Omega(weight_f(a) \log k) = \Omega(weight_f(a) d)$. Each such edge pays at least $\sqrt{d}$ cost, because $D(a,b) \geq \sqrt{d}$ for all pairs of terminals. So this implies that the total cost of the $0$-extension $f$ is at least $\sum_{a \mbox{ s.t. } a \mbox{ is light}} \Omega(weight_f(a) d^{3/2})$. 

Suppose that $ \sum_{a \mbox{ s.t. } a \mbox{ is heavy}} weight_f(a) = \Omega(k)$. Consider any heavy terminal $z_t$, and consider any $y_s \in f^{-1}(z_t)$ and $t \neq s$. Then the edge $(y_s, z_s)$ is capacity $\sqrt{d}$ and pays a total distance of $D(z_t, z_s) \geq \sigma(y_t, y_s)$. Consider any set $U$ of $\sqrt{k}$ nodes in the Hypercube. If we attempt to pack these nodes so as to minimize $\sum_{y_s \in U} \sigma(y_s, y_t)$ for some fixed node $y_t$, then the packing that minimizes the quantity is an appropriately sized Hamming ball centered at $y_t$. In a Hamming ball centered at the node $y_t$ of at least $\sqrt{k}$ total nodes, the average distance from $y_t$ is $\Omega(\log k) = \Omega(d)$, and so this implies that $\sum_{y_s \in f^{-1}(z_t)} D(z_t, z_s) \geq \sum_{y_s \in f^{-1}(z_t)} D(y_t, y_s) \geq \Omega(weight_f(z_t) d)$. Each such edge has capacity $\sqrt{d}$ so the total cost of the $0$-extension $f$ is at least $\sum_{a \mbox{ s.t. } a \mbox{ is heavy}} \
 Omega(weight_f(a) d^{3/2})$ 
\end{proof}

And of course using our strong duality result, this integrality gap implies that any cut-sparsifier that results from a distribution on $0$-extensions has quality at least $\Omega(\sqrt{\log k})$, and this matches the current best lower bound on the integrality gap of the Semi-Metric Relaxation for $0$-extension, so in principle this could be the best lower bound we could hope for (if the integrality gap of the Semi-Metric Relaxation is in fact $O(\sqrt{\log k})$ then there are always cut-sparsifiers that results from a distribution on $0$-extensions that are quality at most $O(\sqrt{\log k})$). 

\subsection{Lower bounds for Arbitrary Cut sparsifiers} 

We will in fact use the above example to give a lower bound on the quality of \emph{any} cut-sparisifer. We will show that for the above graph, no cut-sparsifier achieves quality better than $\Omega(\log^{1/4} k)$, and this gives an exponential improvement over the previous lower bound on the quality of flow-sparsifiers (which is even a stronger requirement for sparsifiers, and hence a weaker lower bound). 

The particular example $G$ that we gave above has many symmetries, and we can use these symmetries to justify considering only symmetric cut-sparsifiers. The fact that these cut-sparsifiers can be assumed without loss of generality to have nice symmetry properties, translates to that any such cut-sparsifier $H$ is characterized by a much smaller set of variables rather than one variable for every pair of terminals. In fact, we will be able to reduce the number of variables from $k \choose 2$ to $\log k$. This in turn will allow us to consider a much smaller family of cuts in $G$ in order to derive that the system is infeasible. In fact, we will only consider sub-cube cuts (cuts in which $U = \{z_s \cup y_s | s = [0, 0, 0, .... 0, *, *, ...,*]\}$) and the Hamming ball $U = \{ z_s \cup y_s | d(y_s, y_0) \leq \frac{d}{2}\}$. 

\begin{definition}
The operation $J_s$ for some $s \in \{0, 1\}^d$ which is defined as $J_s(y_t) = y_{t + s \mod 2}$ and $J_s(z_t) = z_{t + s \mod 2}$. Also let $J_s(U) = \cup_{u \in U} J_s(u)$.
\end{definition}

\begin{definition}
For any permutation $\pi: [d] \rightarrow [d]$, $\pi(s) = [s_{\pi(1)}, s_{\pi(2)}, ... s_{\pi(d)}]$. Then the operation $J_\pi$ for any permutation $\pi$ is defined at $J_\pi(y_t) = y_{\pi(t)}$ and $T_{\pi}(z_t) = z_{\pi(t)}$. Also let $J_\pi(U) = \cup_{u \in U} T_{\pi}(u)$.
\end{definition}

\begin{claim}
 For any subset $U \subset V$ and any $s \in \{0, 1\}^d$, $h(U) = h(J_s(U))$.
\end{claim}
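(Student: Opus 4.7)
The plan is to observe that $J_s$ is simply an automorphism of the weighted graph $G$, and that any such automorphism preserves the cut function by definition.

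First I would verify that $J_s$ is a bijection on $V$. On the terminal set $\{z_t\}$ and separately on the hypercube $\{y_t\}$, the map $t \mapsto t \oplus s$ (coordinatewise addition mod $2$) is an involution on $\{0,1\}^d$, hence a bijection. Since $J_s$ sends terminals to terminals and hypercube nodes to hypercube nodes, it is a bijection on $V$.

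Second I would check that $J_s$ preserves the edge set together with capacities. The edges of $G$ fall into two classes: (i) terminal edges $(z_t,y_t)$ of capacity $\sqrt{d}$, and (ii) hypercube edges $(y_t,y_{t'})$ with $\|t \oplus t'\|_1 = 1$, each of capacity $1$. For class (i), $J_s$ sends $(z_t,y_t)$ to $(z_{t\oplus s}, y_{t\oplus s})$, again a terminal edge of capacity $\sqrt{d}$. For class (ii), $J_s$ sends $(y_t,y_{t'})$ to $(y_{t\oplus s}, y_{t'\oplus s})$, and $\|(t\oplus s)\oplus(t'\oplus s)\|_1 = \|t\oplus t'\|_1 = 1$, so this is again a hypercube edge of capacity $1$. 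So $J_s$ is an automorphism of the capacitated graph.

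Finally, since $J_s$ is an automorphism, for any $U \subset V$ it puts the edges across the cut $(U, V \setminus U)$ into capacity-preserving bijection with the edges across $(J_s(U), V \setminus J_s(U))$, giving $h(U) = h(J_s(U))$. There is no real obstacle here; the only thing to be careful about is the bookkeeping that $J_s$ restricts to a bijection on each edge class separately, which is what makes it an automorphism in the first place.
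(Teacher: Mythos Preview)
Your proposal is correct and matches the paper's approach: the paper simply asserts (without giving details) that $J_s$ is an automorphism of the weighted graph $G$ sending $K$ to $K$, and your argument is precisely the verification of that assertion.
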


\begin{claim}
 For any subset $U \subset V$ and any permutation $\pi: [d] \rightarrow [d]$, $h(U) = h(J_\pi(U))$.
\end{claim}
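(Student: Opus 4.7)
The plan is to show that $J_\pi$ is a capacity-preserving automorphism of $G$; once this is established, the equality $h(U) = h(J_\pi(U))$ is immediate because automorphisms carry cuts to cuts of identical weight.

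First, I would verify that $J_\pi$ is a bijection on $V$. Since $\pi$ is a permutation of $[d]$, the induced map $s \mapsto \pi(s) = [s_{\pi(1)}, \ldots, s_{\pi(d)}]$ is a bijection on $\{0,1\}^d$, and so $J_\pi$ permutes the hypercube nodes $\{y_t\}$ and (independently) the terminals $\{z_t\}$. Next I would check edge preservation. There are two types of edges:
\begin{itemize}
\item Hypercube edges: $(y_s, y_t) \in E$ iff $s$ and $t$ differ in exactly one coordinate. Permuting coordinates preserves Hamming distance, so $\pi(s)$ and $\pi(t)$ differ in exactly one coordinate iff $s$ and $t$ do. Hence $(y_s,y_t)$ is a hypercube edge iff $(J_\pi(y_s), J_\pi(y_t))$ is, and both have capacity $1$.
\item Terminal edges: $(y_s, z_s) \in E$ with capacity $\sqrt{d}$, and $J_\pi$ sends it to $(y_{\pi(s)}, z_{\pi(s)})$, which is also a terminal edge of capacity $\sqrt{d}$.
\end{itemize}
Thus $J_\pi$ maps $E$ onto itself and preserves the capacity function $c$.

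Finally, for any $U \subset V$, the cut value is
\[
h(U) = \sum_{\substack{(u,v) \in E \\ u \in U,\ v \notin U}} c(u,v).
\]
Performing the change of variables $u' = J_\pi(u)$, $v' = J_\pi(v)$ and using the facts that $J_\pi$ is a bijection, $c(u,v) = c(J_\pi(u), J_\pi(v))$, and $(u,v) \in E \iff (J_\pi(u), J_\pi(v)) \in E$, the sum becomes exactly $h(J_\pi(U))$. I do not anticipate any real obstacle here — the content of the claim is just that coordinate permutations are symmetries of the construction, and the only thing to be careful about is the presumed typo in the definition (reading $T_\pi$ as $J_\pi$), which does not affect the argument.
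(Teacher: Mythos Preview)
Your proposal is correct and matches the paper's own treatment: the paper does not give a detailed proof of this claim but simply asserts that $J_\pi$ (like $J_s$) is an automorphism of the weighted graph $G$ sending $K$ to $K$, which is exactly what you verify in detail. Your observation about the $T_\pi$/$J_\pi$ typo is also right and harmless.
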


Both of these operations are automorphisms of the weighted graph $G$ and also send the set $K$ to $K$.

\begin{lemma}~\label{lemma:auto}
If there is a cut-sparsifier $H$ for $G$ which has quality $\alpha$, then there is a cut-sparsifier $H'$ which has quality at most $\alpha$ and is invariant under the automorphisms of the weighted graph $G$ that send $K$ to $K$.
\end{lemma}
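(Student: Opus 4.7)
The plan is a standard group-averaging (symmetrization) argument. Let $\Gamma$ denote the finite group of graph automorphisms generated by $\{J_s : s \in \{0,1\}^d\} \cup \{J_\pi : \pi \in S_d\}$. By the two preceding claims each element of $\Gamma$ is an automorphism of the weighted graph $G$, and by construction each sends $K$ to $K$, so $\Gamma$ also acts on the terminal set. The goal is to take an arbitrary cut-sparsifier $H$ of quality $\alpha$, spread it out over its $\Gamma$-orbit, and observe that the averaged graph is both $\Gamma$-invariant and still quality $\alpha$.

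For each $\phi \in \Gamma$, define the pullback $\phi^* H = (K, E_{\phi^*H})$ by setting $c_{\phi^*H}(a,b) := c_H(\phi(a),\phi(b))$. The first step is to verify that each $\phi^* H$ is itself a quality-$\alpha$ cut-sparsifier. For any $U \subseteq K$, relabeling edges by $\phi$ gives $h_{\phi^*H}(U) = h_H(\phi(U))$, while the two claims above (combined with $\phi(K)=K$) give $h_K(\phi(U)) = h_K(U)$. Plugging these into $h_K(\phi(U)) \le h_H(\phi(U)) \le \alpha\, h_K(\phi(U))$ yields $h_K(U) \le h_{\phi^*H}(U) \le \alpha\, h_K(U)$, as desired.

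Now define
\[
H' \;:=\; \frac{1}{|\Gamma|}\sum_{\phi \in \Gamma}\phi^*H,
\]
where the sum of weighted graphs means adding edge capacities (the paper's edge-union convention, extended to weighted edges). Since cut values are linear in edge capacities, $h_{H'}(U) = \frac{1}{|\Gamma|}\sum_{\phi}h_{\phi^*H}(U)$, and hence the per-summand bounds inherit to $H'$, giving quality at most $\alpha$. Invariance under any $\psi \in \Gamma$ follows from the left-regular bijection $\phi \mapsto \phi\psi$ on $\Gamma$:
\[
\psi^*H' \;=\; \frac{1}{|\Gamma|}\sum_{\phi}\psi^*\phi^*H \;=\; \frac{1}{|\Gamma|}\sum_{\phi}(\phi\psi)^*H \;=\; H'.
\]

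There is no real obstacle here; the symmetry group has already been identified by the two preceding claims, and the rest is bookkeeping. The one point worth a sentence of care is that the paper's "sum of graphs" is defined for unweighted parallel-edge unions, so I will note that the construction extends verbatim to nonnegative edge capacities (and the factor $1/|\Gamma|$ can be absorbed by scaling all capacities, which does not affect ratios of cuts and hence does not affect quality).
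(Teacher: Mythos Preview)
Your proof is correct and takes essentially the same approach as the paper: show that each automorphism-translate of $H$ is again a quality-$\alpha$ sparsifier (because $h_K$ is invariant under $\Gamma$), then average over the group. The paper phrases the last step more tersely---it observes that the set of quality-$\alpha$ sparsifiers is convex and hence contains a $\Gamma$-fixed point---whereas you write down the group average explicitly and verify invariance via the regular bijection, but the content is identical.
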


\begin{proof}
Given the cut-sparsifier $H$, we can apply an automorphism $J$ to $G$, and because $h(U) = h(J(U))$, this implies that $h_K(A) = \min_{U \mbox{ s.t. } U \cap K = A} h(U) = \min_{U \mbox{ s.t. } U \cap K = A} h(J(U))$. Also $J(U \cap K) = J(U) \cap J(K) = J(U) \cap K$ so we can re-write this last line as
$$\min_{U \mbox{ s.t. } U \cap K = A} h(J(U)) = \min_{U' \mbox{ s.t. } J(U') \cap K = J(A)}  h(J(U'))$$
And if we set $U' = J^{-1}(U)$ then this last line becomes equivalent to
$$\min_{U' \mbox{ s.t. } J(U') \cap K = J(A)}  h(J(U'))= \min_{U \mbox{ s.t. } U \cap K = J(A)}  h(U)= h_K(J(A))$$
So the result is that $h_K(A) = h_K(J(A))$ and this implies that if we do not re-label $H$ according to $J$, but we do re-label $G$, then for any subset $A$, we are checking whether the minimum cut in $G$ re-labeled according to $J$, that separates $A$ from $K -A$ is close to the cut in $H$ that separates $A$ from $K-A$. The minimum cut in the re-labeled $G$ that separates $A$ from $K -A$, is just the minimum cut in $G$ that separates $J^{-1}(A)$ from $K - J^{-1}(A)$ (because the set $J^{-1}(A)$ is the set that is mapped to $A$ under $J$). So $H$ is an $\alpha$-quality cut-sparsifier for the re-labeled $G$ iff for all $A$:
$$h_K(A) = h_K(J^{-1}(A)) \leq h'(A) \leq \alpha h_K(J^{-1}(A)) = \alpha h_K(A)$$
which is of course true because $H$ is an $\alpha$-quality cut-sparsifier for $G$. 

So alternatively, we could have applied the automorphism $J^{-1}$ to $H$ and not re-labeled $G$, and this resulting graph $H_{J^{-1}}$ would also be an $\alpha$-quality cut-sparsifier for $G$. Also, since the set of $\alpha$-quality cut-sparsifiers is convex (it is defined by a system of inequalities), we can find a cut-sparsifier $H'$ that has quality at most $\alpha$ and is a fixed point of the group of automorphisms, and hence invariant under the automorphisms of $G$ as desired.
\end{proof}

\begin{corollary}\label{cor:auto}
If $\alpha$ is the best quality cut-sparsifier for the above graph $G$, then there is an $\alpha$ quality cut-sparsifier $H$ in which the capacity between two terminals $z_s$ and $z_t$ is only dependent on the Hamming distance $Hamm(s, t)$.
\end{corollary}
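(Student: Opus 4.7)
The plan is to deduce the corollary directly from Lemma \ref{lemma:auto}. By that lemma, there exists an $\alpha$-quality cut-sparsifier $H$ on $K$ which is invariant under all automorphisms of the weighted graph $G$ that send $K$ to itself. In particular, $H$ is invariant under the group $\Gamma$ generated by the translations $\{J_s : s \in \{0,1\}^d\}$ and the coordinate permutations $\{J_\pi : \pi \in S_d\}$, both of which (as already noted in the preceding claims) preserve $h$ and fix $K$ setwise. Invariance of $H$ under $\Gamma$ means that for every pair of terminals $z_s, z_t$ and every $J \in \Gamma$, the capacity of the edge $(z_s, z_t)$ in $H$ equals the capacity of the edge $(J(z_s), J(z_t))$.

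The only thing left to check is that $\Gamma$ acts transitively on pairs of terminals of a fixed Hamming distance. First, I would show that given any pair $(z_s, z_t)$, applying $J_s$ sends it to $(z_0, z_{s \oplus t})$, so every pair is equivalent under $\Gamma$ to a pair of the form $(z_0, z_u)$ where $u = s \oplus t$ satisfies $|u|_1 = \mathrm{Hamm}(s,t)$. Second, for any two strings $u, u' \in \{0,1\}^d$ with $|u|_1 = |u'|_1$, there is a coordinate permutation $\pi$ with $\pi(u) = u'$; the map $J_\pi$ fixes $z_0$ and sends $z_u$ to $z_{u'}$. Composing these two steps yields an element of $\Gamma$ sending $(z_s, z_t)$ to any other pair $(z_{s'}, z_{t'})$ with $\mathrm{Hamm}(s',t') = \mathrm{Hamm}(s,t)$.

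Combining these two ingredients, the capacity $c_H(z_s, z_t)$ is constant on each orbit of $\Gamma$ acting on terminal pairs, and the orbits are precisely the sets of pairs of a common Hamming distance. Hence $c_H(z_s, z_t)$ depends only on $\mathrm{Hamm}(s,t)$, as claimed. No step here looks hard: Lemma \ref{lemma:auto} does all the real work of symmetrization, and the only thing to verify is the transitivity of $\Gamma$ on Hamming-distance classes, which is essentially immediate from the structure of the hypercube automorphism group.
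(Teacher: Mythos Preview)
Your proposal is correct and follows essentially the same approach as the paper: invoke Lemma~\ref{lemma:auto} to obtain a sparsifier invariant under the group generated by $J_s$ and $J_\pi$, and then observe that this group acts transitively on pairs of terminals at a fixed Hamming distance. Your writeup is simply more explicit about the transitivity argument (reducing to $(z_0,z_u)$ via $J_s$, then matching Hamming-weight-$r$ strings via $J_\pi$), which the paper compresses into a single sentence.
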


\begin{proof}
Given any quadruple $z_s, z_t$ and $z_{s'}, z_{t'}$ s.t. $Hamm(s, t) = Hamm(s', t')$, there is a concatenation of operations from $J_s$, $J_\pi$ that sends $s$ to $s'$ and $t$ to $t'$. This concatenation of operations $J$ is in the group of automorphisms that send $K$ to $K$, and hence we can assume that $H$ is invariant under this operation which implies that $c_H(s,t) = c_H(s',t')$.
\end{proof}

One can regard any cut-sparsifier (not just ones that result from contractions) as a set of $k \choose 2$ variables, one for the capacity of each edge in $H$. Then the constraints that $H$ be an $\alpha$-quality cut-sparsifier are just a system of inequalities, one for each subset $A \subset K$ that enforces that the cut in $H$ is at least as large as the minimum cut in $G$ (i.e. $h'(A) \geq h_K(A)$) and one enforcing that the cut is not too large (i.e. $h'(A) \leq \alpha h_K(A)$). Then in general, one can derive lower bounds on the quality of cut-sparsifiers by showing that if $\alpha$ is not large enough, then this system of inequalities is infeasible meaning that there is not cut-sparsifier achieving quality $\alpha$. Unlike the above argument, this form of a lower bound is much stronger and does not assume anything about how the cut-sparsifier is generated. 

\begin{theorem1}{Theorem}{thm:lower}
For $\alpha = \Omega(\log^{1/4} k)$, there is no cut-sparsifier $H$ for $G$ which has quality at most $\alpha$.
\end{theorem1}

\vspace{0.5pc}

\begin{proofsketch}
Assume that there is a cut-sparsifier $H'$ of quality at most $\alpha$. Then using the above corollary, there is a cut-sparsifier $H$ of quality at most $\alpha$ in which the weight from $a$ to $b$ is only a function of $Hamm(a,b)$. Then for each $i \in [d]$, we can define a variable $w_i$ as the total weight of edges incident to any terminal of length $i$. I.e. $w_i = \sum_{b \mbox{ s.t. } Hamm(a,b) = i} c_H(a,b)$.

For simplicity, here we will assume that all cuts in the sparsifier $H$ are at most the cost of the corresponding minimum cut in $G$ and at least $\frac{1}{\alpha}$ times the corresponding minimum cut. This of course is an identical set of constraints that we get from dividing the standard definition that we use in this paper for $\alpha$-quality cut-sparsifiers by $\alpha$. 

We need to derive a contradiction from the system of inequalities that characterize the set of $\alpha$-quality cut sparsifiers for $G$. As we noted, we will consider only the sub-cube cuts (cuts in which $U = \{z_s \cup y_s | s = [0, 0, 0, .... 0, *, *, ... *]\}$) and the Hamming ball $U = \{ z_s \cup y_s | d(y_s, y_0) \leq \frac{d}{2}\}$, which we refer to as the Majority Cut.

Consider the Majority Cut: There are $\Theta(k)$ terminals on each side of the cut, and most terminals have Hamming weight close to $\frac{d}{2}$. In fact, we can sort the terminals by Hamming weight and each weight level around Hamming weight $\frac{d}{2}$ has roughly a $\Theta(\frac{1}{\sqrt{d}})$ fraction of the terminals. Any terminal of Hamming weight $\frac{d}{2} - \sqrt{i}$ has roughly a constant fraction of their weight $w_i$ crossing the cut in $H$, because choosing a random terminal Hamming distance $i$ from any such terminal corresponds to flipping $i$ coordinates at random, and throughout this process there are almost an equal number of $1$s and $0$s so this process is well-approximated by a random walk starting at $\sqrt{i}$ on the integers, which equally likely moves forwards and backwards at each step for $i$ total steps, and asking the probability that the walk ends at a negative integer. 

In particular, for any terminal of Hamming weight $\frac{d}{2} - t$, the fraction of the weight $w_i$ that crosses the Majority Cut is $O(exp\{-\frac{t^2}{i})$. So the total weight of length $i$ edges (i.e. edges connecting two terminals at Hamming distance $i$) cut by the Majority Cut is $O(w_i |\{z_s | Hamm(s, 0) \geq \frac{d}{2} - \sqrt{i}\}|) = O(w_i \sqrt{i / d})k$ because each weight close to the boundary of the Majority cut contains roughly a $\Theta(\frac{1}{\sqrt{d}})$ fraction of the terminals. 
So the total weight of edges crossing the Majority Cut in $H$ is
$O(k \sum_{i = 1}^d w_i \sqrt{i / d})$

And the total weight crossing the minimum cut in $G$ separating $A = \{ z_s  | d(y_s, y_0) \leq \frac{d}{2}\}$ from $K - A$ is $\Theta(k \sqrt{d})$. And because the cuts in $H$ are at least $\frac{1}{\alpha}$ times the corresponding minimum cut in $G$, this implies
$\sum_{i = 1}^d w_i \sqrt{i / d} \geq \Omega(\frac{\sqrt{d}}{\alpha})$

Next, we consider the set of sub-cube cuts. For $j \in [d]$, let $A_j = \{z_s | s_1 = 0, s_2 = 0, .. s_j = 0\}$. Then the minimum cut in $G$ separating $A_j$ from $K - A_j$ is $\Theta(|A_j| \min(j, \sqrt{d}))$, because each node in the Hypercube which has the first $j$ coordinates as zero has $j$ edges out of the sub-cube, and when $j > \sqrt{d}$, we would instead choose cutting each terminal $z_s \in A_j$ from the graph directly by cutting the edge $(y_s, z_s)$. 

Also, for any terminal in $A_j$, the fraction of length $i$ edges that cross the cut is approximately $1 - (1 - \frac{j}{d})^i = \Theta(\min(\frac{ij}{d}, 1))$. So the constraints that each cut in $H$ be at most the corresponding minimum cut in $G$ give the inequalities
$ \sum_{i = 1}^d \min(\frac{i j}{d}, 1) w_i \leq O(\min(j, \sqrt{d}))$

We refer to the above constraint as $B_j$. Multiply each $B_j$ constraint by $\frac{1}{j^{3/2}}$ and adding up the constraints yields a linear combination of the variables $w_i$ on the left-hand side. The coefficient of any $w_i$ is $$\sum_{j=1}^{d-1}  \frac{\min(\frac{i j}{d}, 1)}{j^{3/2}} \geq \sum_{j=1}^{d/i} \frac{\frac{i j}{d}}{j^{3/2}}$$

And using the Integration Rule this is $\Omega(\sqrt{\frac{i}{d}})$. 

This implies that the coefficients of the constraint $B$ resulting from adding up $\frac{1}{j^{3/2}}$ times each $B_j$ for each $w_i$ are at least as a constant times the coefficient of $w_i$ in the Majority Cut Inequality. So we get

$$\sum_{j=1}^{d-1} \frac{1}{j^{3/2}}\min(j, \sqrt{d}) \geq \Omega \Big (\sum_{j=1}^{d-1}  \frac{1}{j^{3/2}} \sum_{i = 1}^d \min(\frac{i j}{d}, 1) w_i \Big) \geq \Omega \Big(\sum_{i = 1}^d w_i \sqrt{\frac{i}{d}} \Big) \geq \Omega \Big(\frac{\sqrt{d}}{\alpha} \Big)$$

And we can evaluate the constant $\sum_{j=1}^{d-1} j^{-3/2} \min(j, \sqrt{d}) =\sum_{j=1}^{\sqrt{d}} j^{-1/2} + \sqrt{d} \sum_{j=\sqrt{d}+1}^{d-1} j^{-3/2}$
using the Integration Rule, this evaluates to $O(d^{1/4})$. This implies $O(d^{1/4}) \geq \frac{\sqrt{d}}{\alpha}$ and in particular this implies $\alpha \geq \Omega(d^{1/4})$. So the quality of the best cut-sparsifier for $H$ is at least $\Omega(\log^{1/4} k)$. 
\end{proofsketch}

We note that this is the first super-constant lower bound on the quality of cut-sparsifiers. Recent work gives a super-constant lower bound on the quality of flow-sparsifiers in an infinite family of expander-like graphs. However, for this family there are constant-quality cut-sparsifiers. In fact, lower bounds for cut-sparsifiers imply lower bounds for flow-sparsifiers, so we are able to improve the lower bound of $\Omega(\log \log k)$ in the previous work for flow-sparsifiers by an exponential factor to $\Omega(\log^{1/4} k)$, and this is the first lower bound that is tight to within a polynomial factor of the current best upper bound of $O(\frac{\log k}{\log \log k})$. 

This bound is not as good as the lower bound we obtained earlier in the restricted case in which the cut-sparsifier is generated as a convex combination of $0$-extension graphs $G_f$. As we will demonstrate, there are actually cut-sparsifiers that achieve quality $o(\sqrt{\log k})$ for $G$, and so in general restricting to convex combinations of $0$-extensions is sub-optimal, and we leave open the possibility that the ideas in this improved bound may result in better constructions of cut (or flow)-sparsifiers that are able to beat the current best upper bound on the integrality gap of the $0$-extension linear program.


\section{Noise Sensitive Cut-Sparsifiers}

In Appendix~\ref{sec:aha}, we give a brief introduction to the harmonic analysis of Boolean functions, along with formal statements that we will use in the proof of our main theorem in this section. 

\subsection{A Candidate Cut-Sparsifier}

Here we give a cut-sparsifier $H$ which will achieve quality $o(\sqrt{\log k})$ for the graph $G$ given in Section~\ref{sec:lb}, which is asymptotically better than the best cut-sparsifier that can be generated from contractions. 

As we noted, we can assume that the weight assigned between a pair of terminals in $H$, $c_H(a,b)$ is only a function of the Hamming distance from $a$ to $b$. In $G$, the minimum cut separating any singleton terminal $\{z_s\}$ from $K - \{z_s\}$ is just the cut that deletes the edge $(z_s, y_s)$. So the capacity of this cut is $\sqrt{d}$. We want a good cut-sparsifier to approximately preserve this cut, so the total capacity incident to any terminal in $H$ will also be $\sqrt{d}$ - i.e. $c'(\{z_s\}) = \sqrt{d}$. 

We distribute this capacity among the other terminals as follows: We sample $t \sim_\rho s$, and allocate an infinitesimal fraction of the total weight $\sqrt{d}$ to the edge $(z_s, z_t)$. Equivalently, the capacity of the edge connecting $z_s$ and $z_t$ is just $Pr_{u \sim_\rho t}[u = s] \sqrt{d}$. We choose $\rho = 1 - \frac{1}{\sqrt{d}}$. This choice of $\rho$ corresponds to flipping each bit in $t$ with probability $\Theta(\frac{1}{\sqrt{d}})$ when generating $u$ from $t$. We prove that the graph $H$ has cuts at most the corresponding minimum-cut in $G$. 

This cut-sparsifier $H$ has cuts at most the corresponding minimum-cut in $G$. In fact, a stronger statement is true: $\vec{H}$ can be routed as a flow in $G$ with congestion $O(1)$. Consider the following explicit routing scheme for $\vec{H}$: Route the $\sqrt{d}$ total flow in $\vec{H}$ out of $z_s$ to the node $y_s$ in $G$. Now we need to route these flows through the Hypercube in a way that does not incur too much congestion on any edge. Our routing scheme for routing the edge from $z_s$ to $z_t$ in $\vec{H}$ from $y_s$ to $y_t$ will be symmetric with respect to the edges in the Hypercube: choose a random permutation of the bits $\pi : [d] \rightarrow [d]$, and given $u \sim_\rho t$, fix each bit in the order defined by $\pi$. So consider $i_1 = \pi(1)$. If $t_{i_1} \neq u_{i_1}$, and the flow is currently at the node $x$, then flip the $i_1^{th}$ bit of $x$, and continue for $i_2 = \pi(2)$, $i_3, ... i_d = \pi(d)$. 

Each permutation $\pi$ defines a routing scheme, and we can average over all permutations $\pi$ and this results in a routing scheme that routes $\vec{H}$ in $G$. 

\begin{claim}
This routing scheme is symmetric with respect to the automorphisms $J_s$ and $J_\pi$ of $G$ defined above. 
\end{claim}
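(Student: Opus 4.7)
The plan is to verify that the flow assigned to any hypercube edge $e$ equals the flow assigned to $J_s(e)$ and to $J_\pi(e)$, for every bit-shift automorphism $J_s$ and every coordinate permutation $J_\pi$. The routing scheme is a composition of two ingredients — (i) the capacities $c_H(z_s, z_t)$ of the commodities being routed, and (ii) the distribution over shortest paths used to carry each commodity from $y_s$ to $y_t$ — so I would establish invariance of each ingredient separately and then combine them.

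For the capacities, I would simply note that $c_H(z_s, z_t) = \Pr_{u \sim_\rho t}[u=s]\sqrt{d}$ depends only on $\mathrm{Hamm}(s,t)$, and both $J_{s_0}$ and $J_{\pi_0}$ preserve Hamming distance; hence $c_H(J(z_s), J(z_t)) = c_H(z_s, z_t)$ for either automorphism $J$. For the path distribution, let $P(\pi; s, t)$ denote the (deterministic) path from $y_s$ to $y_t$ produced when the chosen permutation is $\pi$. The scheme routes the commodity between $z_s$ and $z_t$ uniformly over $\{P(\pi; s, t)\}_{\pi \in S_d}$.

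For a bit-shift $J_{s_0}$: the path $P(\pi; s, t)$ visits the vertices obtained by successively flipping the bits in $\{i : s_i \neq t_i\}$ in the order specified by $\pi$. Applying $J_{s_0}$ to every vertex in this path XORs each vertex with $s_0$, and the resulting sequence is precisely $P(\pi; s\oplus s_0, t\oplus s_0)$, because the set of differing coordinates between $s\oplus s_0$ and $t\oplus s_0$ is identical to that between $s$ and $t$. Hence $J_{s_0}$ carries the set of paths for $(z_s, z_t)$ bijectively onto the set for $(J_{s_0}(z_s), J_{s_0}(z_t))$ at the same capacity, and the two contributions to flow on $e$ and on $J_{s_0}(e)$ coincide after summing over all pairs.

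For a coordinate permutation $J_{\pi_0}$: a direct check shows $J_{\pi_0}\bigl(P(\pi; s, t)\bigr) = P(\pi_0 \circ \pi; \pi_0(s), \pi_0(t))$, since permuting coordinates relabels which bit is flipped at each step. Because the distribution over $\pi$ is uniform on $S_d$ and is therefore invariant under left-multiplication by the fixed element $\pi_0$, the induced distribution over paths for $(J_{\pi_0}(z_s), J_{\pi_0}(z_t))$ is again uniform over $\{P(\pi; \pi_0(s), \pi_0(t))\}$. Combining this with the capacity invariance above yields equality of the flow on $e$ and on $J_{\pi_0}(e)$. The main bookkeeping obstacle is the second case, where one has to distinguish the permutation used in the definition of the routing from the permutation applied as an automorphism and verify that the composition acts as claimed on the sequence of bit-flips; everything else is essentially a direct unfolding of definitions.
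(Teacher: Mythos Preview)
Your argument is correct. The paper itself states this claim without proof, treating it as evident from the construction, so your explicit decomposition into (i) invariance of the commodity capacities $c_H$ under Hamming-distance-preserving maps and (ii) equivariance of the random-permutation path distribution is a genuine addition rather than a different route.

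One minor bookkeeping remark on the second case: with the paper's convention $(\pi_0(s))_i = s_{\pi_0(i)}$, flipping bit $i$ of $v$ corresponds to flipping bit $\pi_0^{-1}(i)$ of $\pi_0(v)$, so the exact identity is $J_{\pi_0}\bigl(P(\pi; s, t)\bigr) = P(\pi_0^{-1}\circ\pi;\, \pi_0(s), \pi_0(t))$ rather than $P(\pi_0\circ\pi;\,\cdot\,,\cdot)$. This does not affect your conclusion, since the uniform distribution on $S_d$ is invariant under left-multiplication by $\pi_0^{-1}$ just as well as by $\pi_0$; your own caveat about tracking the composition carefully was well placed.
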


\begin{corollary}
The congestion on any edge in the Hypercube incurred by this routing scheme is the same. 
\end{corollary}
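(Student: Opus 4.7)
The plan is to deduce the corollary directly from the symmetry claim immediately preceding it, by showing that the automorphism group generated by $\{J_s : s \in \{0,1\}^d\}$ and $\{J_\pi : \pi \in S_d\}$ acts \emph{transitively} on the edge set of the Hypercube. Once transitivity is established, any automorphism-invariant quantity defined on hypercube edges (in particular, the congestion of the averaged routing scheme) must be constant, and the corollary follows.

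First I would set up notation: a hypercube edge has the form $e = (y_s, y_{s \oplus e_i})$ for some $s \in \{0,1\}^d$ and some standard basis vector $e_i$. The key structural step is to verify transitivity. Given any two such edges $(y_s, y_{s \oplus e_i})$ and $(y_{s'}, y_{s' \oplus e_j})$, first apply the translation $J_{s \oplus s'}$ to map the first edge onto an edge of the form $(y_{s'}, y_{s' \oplus e_i})$, and then apply a coordinate permutation $J_\pi$ with $\pi(i) = j$ (and fixing the other coordinates of $s'$ appropriately, which can be arranged by composing with a further translation if needed to land exactly on the target). Each step in this composition is in the automorphism group and each fixes the set $K$, so the full composition does as well.

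Next I would invoke the preceding claim: the averaged routing scheme (obtained by averaging over all bit-order permutations $\pi$ and over the random coupling $u \sim_\rho t$) is preserved by every automorphism $J$ in this group, in the sense that the congestion profile satisfies $\mathrm{cong}(J(e)) = \mathrm{cong}(e)$ for every edge $e$. Formally, applying $J$ to $G$ permutes the terminals and the edges, and by the symmetric definition of $H$ (the weight $c_H(z_s,z_t)$ depends only on $\mathrm{Hamm}(s,t)$) and of the routing scheme (which averages over a random permutation of bit positions), the pushforward of the routing by $J$ is again the same routing. Combining this with transitivity from the previous paragraph, every edge's congestion equals the congestion on a single fixed reference edge, e.g.\ $(y_0, y_{e_1})$, so all hypercube edges carry the same congestion.

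The main (very mild) obstacle is just being careful that the composition used to prove transitivity really lies inside the automorphism group generated by the $J_s$ and $J_\pi$ and maps $K$ to $K$; once that bookkeeping is in place, the symmetry argument is automatic. No quantitative estimate of the congestion value is required for the corollary itself (that will presumably be computed in the next step of the paper by actually evaluating the congestion on the single canonical edge).
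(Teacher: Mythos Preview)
Your approach is correct and is exactly the argument the paper has in mind: the corollary is stated without proof as an immediate consequence of the preceding symmetry claim, and the missing step is precisely the observation that the group generated by the $J_s$ and $J_\pi$ acts transitively on hypercube edges, which you verify. Your slight hedging about needing a further translation after the permutation is warranted (applying $J_\pi$ moves $y_{s'}$ in general), but the three-step composition $J_{s'} \circ J_\pi \circ J_s$ you implicitly describe does the job cleanly.
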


\begin{lemma}
The above routing scheme will achieve congestion at most $O(1)$ for routing $\vec{H}$ in $G$.
\end{lemma}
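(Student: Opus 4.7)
The plan is to combine the symmetry already established by the preceding claim and corollary with a simple volume count. By the corollary, every hypercube edge carries the same amount of flow under this (permutation-averaged) routing scheme, so to bound the congestion on hypercube edges it suffices to compute the total flow through all hypercube edges and divide by the number of such edges. The terminal edges $(z_s, y_s)$ will be handled separately by inspecting the total demand incident to a single terminal.

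First I would dispose of the terminal edges. By construction $\sum_{t \ne s} c_H(z_s, z_t) = \sqrt d \cdot \sum_{t} \Pr_{u \sim_\rho s}[u = t] = \sqrt d$, and all of this flow is pushed across the single edge $(z_s, y_s)$, which has capacity $\sqrt d$. So each terminal edge carries congestion exactly $1$. Next, I would turn to the hypercube edges. For any fixed pair $s, t$, once the bits on which they differ are selected in the order dictated by a permutation $\pi$, the resulting path from $y_s$ to $y_t$ has length exactly $\mathrm{Hamm}(s,t)$, regardless of $\pi$. So the total flow (summed over all hypercube edges, counting multiplicities) equals
\[
\sum_{s,t} c_H(z_s, z_t)\, \mathrm{Hamm}(s,t) \;=\; \sqrt d \, \sum_s \mathbb{E}_{t \sim_\rho s}[\mathrm{Hamm}(s,t)].
\]
Since $\rho = 1 - 1/\sqrt d$ corresponds to flipping each bit independently with probability $\Theta(1/\sqrt d)$, the inner expectation is $\Theta(\sqrt d)$, so the sum on the right is $\Theta(k \cdot d)$.

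Finally, I would divide by the number of hypercube edges, which is $d \cdot 2^{d-1} = \Theta(k d)$. Combined with the corollary that all hypercube edges carry equal load under the permutation-averaged routing scheme, this yields congestion $O(1)$ on every hypercube edge. Together with the previous paragraph, every edge in $G$ has congestion $O(1)$, completing the proof.

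The only genuinely non-trivial ingredient is the symmetry corollary — once we accept that the load is uniform over hypercube edges, everything reduces to a total-volume computation controlled by the mean Hamming distance of the noise distribution. The main obstacle I would anticipate is purely bookkeeping: making sure the symmetry group $\langle J_s, J_\pi \rangle$ really acts transitively on the hypercube edges, so that "the congestion on any edge in the Hypercube is the same'' applies uniformly rather than only within orbits. This follows because $J_s$ translates any edge to an edge incident to the all-zeros vertex and $J_\pi$ then permutes which coordinate direction the edge points in, so the action is transitive.
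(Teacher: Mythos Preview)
Your proposal is correct and is essentially the same argument as the paper's: both use the symmetry corollary to reduce to an average-congestion computation, compute the total flow volume on hypercube edges as (total $H$-capacity) $\times$ (expected path length) $= \Theta(k\sqrt d)\cdot\Theta(\sqrt d)=\Theta(kd)$, divide by the $\Theta(kd)$ hypercube edges, and handle the terminal edges by the degree-$\sqrt d$ observation. Your explicit verification that $\langle J_s, J_\pi\rangle$ acts transitively on hypercube edges is a welcome addition that the paper leaves implicit.
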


\begin{proof}
Since the congestion of any edge in the Hypercube under this routing scheme is the same, we can calculate the worst case congestion on any edge by calculating the average congestion. Using a symmetry argument, we can consider any fixed terminal $z_s$ and calculate the expected increase in average congestion when sampling a random permutation $\pi: [d] \rightarrow [d]$ and routing all the edges out of $z_s$ in $H$ using $\pi$. This expected value will be $k$ times the average congestion, and hence the worst-case congestion of routing $\vec{H}$ in $G$ according to the above routing scheme. 

As we noted above, we can define $H$ equivalently as arising from the random process of sampling $u \sim_\rho t$, and routing an infinitesimal fraction of the $\sqrt{d}$ total capacity out of $z_t$ to $z_u$, and repeating until all of the $\sqrt{d}$ capacity is allocated. We can then calculate the the expected increase in average congestion (under a random permutation $\pi$) caused by routing the edges out of $z_s$ as the expected increase in average congestion divided by the total fraction of the $\sqrt{d}$ capacity allocated when we choose the target $u$ from $u \sim_\rho t$. In particular, if we allocated a $\Delta $ fraction of the $\sqrt{d}$ capacity, the expected increase in total congestion is just the total capacity that we route multiplied by the length of the path. Of course, the length of this path is just the number of bits in which $u$ and $t$ differ, which in expectation is $\Theta(\sqrt{d})$ by our choice of $\rho$. 

So in this procedure, we allocate $\Delta  \sqrt{d}$  total capacity, and the expected increase in total congestion is the total capacity routed $\Delta \sqrt{d}$ times the expected path length $\Theta(\sqrt{d})$. We repeat this procedure $\frac{1}{\Delta}$ times, and so the expected increase in total congestion caused by routing the edges out of $z_t$ in $G$ is $\Theta(d)$. If we perform this procedure for each terminal, the resulting total congestion is $\Theta(kd)$, and because there are $\frac{kd}{2}$ edges in the Hypercube, the average congestion is $\Theta(1)$ which implies that the worst-case congestion on any edge in the Hypercube is also $O(1)$, as desired. Also, the congestion on any edge $(z_s, y_s)$ is $1$ because there is a total of $\sqrt{d}$ capacity out of $z_s$ in $H$, and this is the only flow routed on this edge, which has capacity $\sqrt{d}$ in $G$ by construction. So the worst-case congestion on any edge in the above routing scheme is $O(1)$. 
\end{proof}

\vspace{0.5pc}

\begin{corollary}\label{cor:cutlow}
For any $A \subset K$, $h'(A) \leq O(1) h_K(A)$.
\end{corollary}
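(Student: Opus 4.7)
The plan is to deduce the cut bound directly from the flow routing already constructed in the preceding lemma. The lemma shows that the complete weighted graph $\vec{H}$ (viewed as a multicommodity demand, one commodity per pair of terminals with demand equal to $c_H(a,b)$) can be routed in $G$ with congestion $O(1)$. I will use the standard fact that any congestion-$C$ routing of a demand graph certifies that every cut in the demand graph is at most $C$ times the corresponding cut in the host graph.

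Concretely, fix any $A \subset K$, and let $U \subset V$ be the minimizer that achieves $h_K(A) = h(U)$, so $U \cap K = A$. I would argue as follows. Because the routing has congestion at most $c$ (some absolute constant), the total flow crossing the edge cut $(U, V \setminus U)$ is at most $c \cdot h(U) = c \cdot h_K(A)$. On the other hand, every terminal pair $(a,b)$ with $a \in A$ and $b \in K \setminus A$ contributes a commodity of value $c_H(a,b)$ that must cross from $U$ to $V \setminus U$, since $a \in U$ and $b \notin U$. Summing these contributions gives exactly $h'(A)$, the value of the cut $A$ in $H$. Therefore $h'(A) \leq c \cdot h_K(A) = O(1)\, h_K(A)$.

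The only step that needs to be stated carefully is the identification of the flow crossing $(U, V \setminus U)$ with $h'(A)$; this is the usual ``flow across a cut equals demand across the cut'' observation applied commodity-by-commodity. There is no real obstacle here: all the work has already been done in the congestion analysis of the preceding lemma, and the corollary is essentially a one-line consequence via the flow/cut inequality. I would present it as a short paragraph, cite the lemma for the $O(1)$-congestion routing, and conclude with the inequality above, noting that since $U$ was chosen to minimize $h(U)$ subject to $U \cap K = A$, the bound holds with $h_K(A)$ on the right-hand side.
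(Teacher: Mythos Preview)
Your proposal is correct and follows essentially the same approach as the paper: both take the minimum cut $U$ realizing $h_K(A)$, invoke the preceding lemma to route $\vec{H}$ in $G$ with $O(1)$ congestion, and observe that the $h'(A)$ units of demand crossing from $A$ to $K\setminus A$ must traverse the cut $(U,V\setminus U)$, yielding $h'(A)\le O(1)\,h_K(A)$. Your write-up is in fact slightly more explicit than the paper's about the flow-across-a-cut step, but the argument is the same.
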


\vspace{0.5pc}

\begin{proof}
Consider any set $A \subset K$. Let $U$ be the minimum cut in $G$ separating $A$ from $K - A$. Then the total flow routed from $A$ to $K - A$ in $\vec{H}$ is just $h'(A)$, and if this flow can be routed in $G$ with congestion $O(1)$, this implies that the total capacity crossing the cut from $U$ to $V-U$ is at least $\Omega(1) h'(A)$. And of course the total capacity crossing the cut from $U$ to $V-U$ is just $h_K(A)$ by the definition of $U$, which implies the corollary.
\end{proof}

So we know that the cuts in $H$ are never too much larger than the corresponding minimum cut in $G$, and all that remains to show that the quality of $H$ is $o(\sqrt{\log k})$ is to show that the cuts in $H$ are never too small. We conjecture that the quality of $H$ is actually $\Theta(\log^{1/4} k)$, and this seems natural since the quality of $H$ just restricted to the Majority Cut and the sub-cube cuts is actually $\Theta(\log^{1/4} k)$, and often the Boolean functions corresponding to these cuts serve as extremal examples in the harmonic analysis of Boolean functions. In fact, our lower bound on the quality of any cut-sparsifier for $G$ is based only on analyzing these cuts so in a sense, our lower bound is tight given the choice of cuts in $G$ that we used to derive infeasibility in the system of equalities characterizing $\alpha$-quality cut-sparsifiers. 

\subsection{A Fourier Theoretic Characterization of Cuts in $H$}

Here we give a simple formula for the size of a cut in $H$, given the Fourier representation of the cut. So here we consider cuts $A \subset K$ to be Boolean functions of the form $f_A: \{-1, +1\}^d \rightarrow \{-1, +1\}$ s.t. $f_A(s) = +1$ iff $z_s \in A$. 

\begin{lemma}
$h'(A) = k \frac{\sqrt{d}}{2} \frac{1 - NS_{\rho}[f_A(x)]}{2}$
\end{lemma}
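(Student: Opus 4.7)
The plan is to translate the combinatorial description of the cut $h'(A)$ directly into an expectation under the $\rho$-correlated distribution on the hypercube, and then invoke the Fourier/probabilistic definition of noise stability.

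First I would write out the cut by summing edge capacities across $A$: using the construction $c_H(z_s,z_t) = \sqrt{d}\cdot\Pr_{u\sim_\rho s}[u=t]$ (which is symmetric in $s,t$ since $\rho$-correlation is symmetric), we have
\[
h'(A) \;=\; \sum_{s:\,f_A(s)=+1}\;\sum_{t:\,f_A(t)=-1} c_H(z_s,z_t)
\;=\; \sqrt{d}\sum_{s:\,f_A(s)=+1}\Pr_{u\sim_\rho s}\bigl[f_A(u)=-1\bigr].
\]
The inner sum is exactly $k$ times the expectation $\E_{s\text{ uniform}}\!\bigl[\mathbf{1}_{f_A(s)=+1}\cdot\Pr_{u\sim_\rho s}[f_A(u)=-1]\bigr]$, which in turn equals $k\cdot\Pr_{s,\,u\sim_\rho s}\bigl[f_A(s)=+1,\,f_A(u)=-1\bigr]$.

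Next I would compute this joint probability in terms of $NS_\rho[f_A]$. By definition, $NS_\rho[f_A]=\E_{s,u\sim_\rho s}[f_A(s)f_A(u)]$, so
\[
\Pr_{s,u\sim_\rho s}[f_A(s)\neq f_A(u)] \;=\; \tfrac{1-NS_\rho[f_A]}{2}.
\]
Because the joint distribution of $(s,u)$ with $s$ uniform and $u\sim_\rho s$ is symmetric under swapping $s\leftrightarrow u$, the two "disagreement" events $\{f_A(s)=+1,f_A(u)=-1\}$ and $\{f_A(s)=-1,f_A(u)=+1\}$ are equally likely, each with probability $\tfrac{1}{2}\cdot\tfrac{1-NS_\rho[f_A]}{2}$.

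Putting the pieces together gives
\[
h'(A) \;=\; \sqrt{d}\cdot k\cdot\tfrac{1}{2}\cdot\tfrac{1-NS_\rho[f_A]}{2} \;=\; k\,\tfrac{\sqrt{d}}{2}\,\tfrac{1-NS_\rho[f_A]}{2},
\]
as claimed. No genuine obstacle is expected; the only thing to be careful about is not double-counting edges (the double sum over ordered pairs $(s,t)$ with $s\in A$, $t\notin A$ already counts each cut edge exactly once) and keeping track of the factor-of-two arising from the symmetry between the two "disagreement" events.
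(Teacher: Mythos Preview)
Your proof is correct and follows essentially the same approach as the paper: both compute $h'(A)$ by interpreting the edge capacities as $\rho$-correlated transition probabilities, reducing the cut value to $\Pr_{s,u\sim_\rho s}[f_A(s)\neq f_A(u)] = \tfrac{1-NS_\rho[f_A]}{2}$ (with an extra factor of $\tfrac12$ from symmetry/total capacity). The only cosmetic difference is that the paper phrases the same calculation via an infinitesimal sampling process rather than your direct double sum.
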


\begin{proof}
We can again use the infinitesimal characterization for $H$, in which we choose $u \sim_\rho t$ and allocate $\Delta$ units of capacity from $z_s $ to $z_t$ and repeat until all $\sqrt{d}$ units of capacity are spent. 

If we instead choose $z_s$ uniformly at random, and then choose  $u \sim_\rho t$ and allocate $\Delta$ units of capacity from $z_s $ to $z_t$, and repeat this procedure until all $k \frac{\sqrt{d}}{2}$ units of capacity are spent, then at each step the expected contribution to the cut is exactly $\Delta \frac{1 - NS_{\rho}[f_A(x)]}{2}$ because $\frac{1 - NS_{\rho}[f_A(x)]}{2}$ is exactly the probability that if we choose $t$ uniformly at random, and $u \sim_\rho t$ that $f_A(u) \neq f_A(t)$ which means that this edge contributes to the cut. We repeat this procedure $\frac{k \sqrt{d}}{2 \Delta}$ times, so this implies the lemma. 
\end{proof}

\vspace{0.5pc}

\begin{lemma}\label{lemma:fcut}
$h'(A) = \Theta \Big (k \sum_S \hat{f}_S^2 \min(|S|, \sqrt{d}) \Big )$
\end{lemma}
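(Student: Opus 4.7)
The plan is to start from the formula established in the previous lemma,
$$h'(A) = \frac{k\sqrt{d}}{2} \cdot \frac{1 - NS_{\rho}[f_A]}{2},$$
and plug in the standard Fourier expansion of the noise stability operator. Since $NS_{\rho}[f] = \sum_S \rho^{|S|} \hat{f}_S^2$ and $\sum_S \hat{f}_S^2 = 1$ for a $\pm1$-valued function (by Parseval), we can rewrite
$$\frac{1 - NS_{\rho}[f_A]}{2} = \frac{1}{2}\sum_S (1-\rho^{|S|})\hat{f}_S^2,$$
which gives the exact identity
$$h'(A) = \frac{k\sqrt{d}}{4}\sum_S (1-\rho^{|S|})\hat{f}_S^2.$$
The lemma will then follow once I show that, for the choice $\rho = 1 - 1/\sqrt{d}$, we have $\sqrt{d}\,(1-\rho^{|S|}) = \Theta(\min(|S|,\sqrt{d}))$ for all $S \subseteq [d]$.

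The verification of this last estimate is a short case analysis. When $|S| \leq \sqrt{d}$, use $1 - (1-x)^m \in [mx/2,\, mx]$ for $x \in [0,1/2]$ and $mx \leq 1$ (or equivalently expand $\rho^{|S|}$ via Bernoulli/Taylor) to conclude $1 - \rho^{|S|} = \Theta(|S|/\sqrt{d})$, whence $\sqrt{d}(1-\rho^{|S|}) = \Theta(|S|) = \Theta(\min(|S|,\sqrt{d}))$. When $|S| > \sqrt{d}$, use the inequality $\rho^{|S|} \leq e^{-|S|/\sqrt{d}} \leq e^{-1}$ to conclude $1 - \rho^{|S|} \in [1 - 1/e, 1]$, hence $\sqrt{d}(1-\rho^{|S|}) = \Theta(\sqrt{d}) = \Theta(\min(|S|,\sqrt{d}))$. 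Combining the two regimes gives the claimed $\Theta(\min(|S|,\sqrt{d}))$ behavior of the coefficient of $\hat{f}_S^2$, up to absolute constants that can be absorbed into the $\Theta(\cdot)$ of the lemma.

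There is no real obstacle here: once the previous lemma's identity is in hand, this lemma is a routine translation into the frequency domain plus elementary estimates on $(1-1/\sqrt{d})^{|S|}$. The only point that needs a small amount of care is making sure the two-sided bound $\Theta(\min(|S|,\sqrt{d}))$ holds uniformly across both regimes with universal constants independent of $d$; this is taken care of by choosing explicit constants in the Bernoulli-type inequality for small $|S|$ and in the exponential bound for large $|S|$. Summing the resulting per-frequency estimate against $\hat{f}_S^2 \geq 0$ preserves the $\Theta$ and yields the stated form of $h'(A)$.
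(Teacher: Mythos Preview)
Your proposal is correct and follows essentially the same route as the paper: start from the previous lemma's identity $h'(A)=\tfrac{k\sqrt d}{4}(1-NS_\rho[f_A])$, expand $1-NS_\rho$ via Parseval as $\sum_S(1-\rho^{|S|})\hat f_S^2$, and then do the same two-regime estimate on $1-(1-1/\sqrt d)^{|S|}$. The only difference is that you spell out the Bernoulli-type and exponential bounds explicitly, whereas the paper simply asserts $\Theta(|S|/\sqrt d)$ and $\Theta(1)$ in the two cases.
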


\vspace{0.5pc}

\begin{proof}
Using the setting $\rho = 1 - \frac{1}{\sqrt{d}}$, we can compute $h'(A)$ using the above lemma:

$$h'(A) = k \frac{\sqrt{d}}{4} (1 - NS_{\rho}[f_A(x)])$$

And using Parseval's Theorem, $\sum_S \hat{f}_S^2 = ||f||_2 = 1$, so we can replace $1$ with $\sum_S \hat{f}_S^2$ in the above equation and this implies

$$h'(A) = k \frac{\sqrt{d}}{4} \sum_S \hat{f}_S^2 (1 - (1 - \frac{1}{\sqrt{d}})^{|S|})$$

Consider the term $(1 - (1 - \frac{1}{\sqrt{d}})^{|S|})$. For $|S| \leq \sqrt{d}$, this term is $\Theta(\frac{|S|}{\sqrt{d}})$, and if $|S| \geq \sqrt{d}$, this term is $\Theta(1)$. So this implies

$$h'(A) = \Theta \Big (k \sum_S \hat{f}_S^2 \min(|S|, \sqrt{d}) \Big )$$
\end{proof}

\subsection{Small Set Expansion of $H$}

The edge-isoperimetric constant of the Hypercube is $1$, but on subsets of the cube that are imbalanced, the Hypercube expands more than this. 

\begin{definition}
For a given set $A \subset \{-1, +1\}^{[d]}$, we define $bal(A) = \frac{1}{k} \min(|A|, k - |A|)$ as the balance of the set $A$. 
\end{definition}

Given any set $A  \subset \{-1, +1\}^{[d]}$ of balance $b = bal(A)$, the number of edges crossing the cut $(A, \{-1, +1\}^{[d]} - A)$ in the Hypercube is $\Omega(b k \log \frac{1}{b})$. So the Hypercube expands better on small sets, and we will prove a similar small set expansion result for the cut-sparsifier $H$. In fact, for any set $A \subset K$ (which we will associated with a subset of $\{-1, +1\}^{[d]} $ and abuse notation), $h'(A) \geq bal(A) k \Omega(\min(\log \frac{1}{bal(A)}, \sqrt{d}))$. We will prove this result using the Hypercontractive Inequality. 

\begin{lemma}\label{lemma:imb}
$h'(A) \geq bal(A) k \Omega(\min(\log \frac{1}{bal(A)}, \sqrt{d}))$
\end{lemma}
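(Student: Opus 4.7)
The plan is to reduce the bound to a small-set expansion statement for the noise operator $T_\rho$ and then invoke the hypercontractive inequality. Assume without loss of generality that $b := bal(A) = |A|/k \leq 1/2$, and identify $A$ with the indicator $g = \mathbf{1}_A : \{-1,+1\}^d \to \{0,1\}$, so that $f_A = 2g - 1$.

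First, rather than using the Fourier expression directly, I would return to the intermediate formula $h'(A) = k\,\tfrac{\sqrt{d}}{4}(1 - NS_\rho[f_A])$ with $\rho = 1 - 1/\sqrt{d}$. A short computation using $T_\rho f_A = 2 T_\rho g - 1$ and $\E[g] = \E[T_\rho g] = b$ gives
\[
1 - NS_\rho[f_A] \;=\; 4\bigl(b - \E[g\cdot T_\rho g]\bigr),
\]
so the lemma reduces to showing $\E[g \cdot T_\rho g] \leq b - \Omega(b \cdot \min(\log(1/b)/\sqrt{d},\,1))$.

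Second, I would apply the Bonami--Beckner hypercontractive inequality in its standard two-function form: $\E[g \cdot T_\rho g] \leq \|g\|_{1+\rho}^2$. Since $g$ is $\{0,1\}$-valued, $\|g\|_{1+\rho}^2 = b^{2/(1+\rho)}$, yielding
\[
1 - NS_\rho[f_A] \;\geq\; 4b\bigl(1 - b^{(1-\rho)/(1+\rho)}\bigr).
\]
Plugging in $\rho = 1 - 1/\sqrt{d}$ gives $(1-\rho)/(1+\rho) = \Theta(1/\sqrt{d})$, so the exponent on $b$ is $\Theta(1/\sqrt{d})$. Using $1 - e^{-x} = \Theta(\min(x,1))$ on $b^{\Theta(1/\sqrt{d})} = \exp(-\Theta(\log(1/b)/\sqrt{d}))$ produces
\[
1 - NS_\rho[f_A] \;\geq\; \Omega\!\left(b \cdot \min\!\left(\frac{\log(1/b)}{\sqrt{d}},\,1\right)\right).
\]
Multiplying by $k\sqrt{d}/4$ immediately gives the claim $h'(A) \geq bal(A)\,k\cdot \Omega(\min(\log(1/bal(A)),\sqrt{d}))$.

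The main obstacle is the routine but careful asymptotic analysis of $1 - b^{(1-\rho)/(1+\rho)}$ for our specific $\rho$; everything else is mechanical. One sanity check is the two extremes: for balanced $A$ (so $b \approx 1/2$), the bound degrades to the trivial $\Omega(bk)$, which matches the fact that the edge-isoperimetric constant of $H$ should be $\Theta(1)$; for very small sets with $\log(1/b) \gg \sqrt{d}$, the bound saturates at $\Omega(bk\sqrt{d})$, matching the cost of simply severing each terminal's singleton cut, which is the tightest possible. Alternatively, one can prove the same bound by splitting $\sum_S \hat{f}_S^2 \min(|S|,\sqrt{d})$ in Lemma~\ref{lemma:fcut} into low- and high-degree parts and bounding the low-degree mass using the level-$\leq t$ hypercontractivity estimate; this gives the same two regimes but the direct two-function hypercontractivity calculation above is cleaner.
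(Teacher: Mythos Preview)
Your argument is correct and uses the same underlying tool as the paper --- hypercontractivity applied to the indicator $g=\mathbf{1}_A$ --- but the packaging is noticeably cleaner. The paper's proof introduces an auxiliary noise parameter $\gamma$, applies the one-function bound $\|T_{\sqrt{1-\gamma}}g\|_2\le\|g\|_{2-\gamma}$, passes through the Fourier characterization of Lemma~\ref{lemma:fcut} to get $\sum_S\hat f_S^2\min(|S|,1/\gamma)\ge\Omega\!\big(\tfrac{b}{\gamma}(1-e^{-(\gamma/2)\ln(1/b)})\big)$, and then must optimize $\gamma$ subject to $1/\gamma\le\sqrt d$, splitting into the two regimes $\ln(1/b)\lesssim\sqrt d$ and $\ln(1/b)\gtrsim\sqrt d$ by hand. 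You instead plug in the actual $\rho=1-1/\sqrt d$ from the start, use the two-function (equivalently, $\|T_{\sqrt\rho}g\|_2\le\|g\|_{1+\rho}$) form to bound $\E[g\,T_\rho g]\le b^{2/(1+\rho)}$, and let the single estimate $1-e^{-x}=\Theta(\min(x,1))$ absorb the case split. Both routes yield the same bound; yours avoids the detour through Lemma~\ref{lemma:fcut} and the parameter optimization, while the paper's route has the minor advantage of making the connection to the Fourier-side formula explicit.
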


\vspace{0.5pc}

\begin{proof}
Assume that $|A| \leq |\{-1, +1\}^{[d]} - A|$ without loss of generality. Throughout just this proof, we will use the notation that $f_A : \{-1, +1\}^d \rightarrow \{0, 1\}$ and $f_A(s) = 1$ iff $s \in A$. Also we will denote $b = bal(A)$. 

Let $\gamma << 1$ be chose later. Then we will invoke the Hypercontractive inequality with $q =2$, $p = 2 - \gamma$, and $\rho = \sqrt{\frac{p-1}{q-1}} = \sqrt{1 - \gamma}$. Then

$$||f||_p = E_x[f(x)^p]^{1/p} = b^{1/p} \approx b^{1/2 (1 + \gamma/2)}$$

Also $||T_{\rho}(f(x))||_q = ||T_{\rho}(f(x))||_2 = \sqrt{\sum_S \rho^{2 |S|} \hat{f}_S^2}$. So the Hypercontractive Inequality implies

$$\sum_S \rho^{2 |S|} \hat{f}_S^2 \leq b^{1 + \gamma/2} = b e^{-\frac{\gamma}{2} \ln\frac{1}{b}}$$

And $\rho^{2 |S|} = (1 - \gamma)^{|S|}$. Using Parseval's Theorem, $\sum_S \hat{f}_S^2 = ||f||_2^2  = b$, and so we can re-write the above inequality as

$$b - b e^{\frac{-\gamma}{2} \ln\frac{1}{b}} \leq b - \sum_S (1 - \gamma)^{ |S|} \hat{f}_S^2 \leq b^{1 - \gamma/2}  = \sum_S \hat{f}_S^2 (1 - (1 - \gamma)^{ |S|}) = \Theta \Big (\sum_S \hat{f}_S^2 \gamma \min(|S|, \frac{1}{\gamma}) \Big )$$

This implies

$$\frac{b}{\gamma} (1 - e^{-\frac{\gamma}{2} \ln\frac{1}{b}}) \leq \Theta \Big (\sum_S \hat{f}_S^2  \min(|S|, \frac{1}{\gamma}) \Big )$$

And as long as $\frac{1}{\gamma} \leq \sqrt{d}$, 

$$\sum_S \hat{f}_S^2  \min(|S|, \frac{1}{\gamma}) \leq \frac{1}{k}O(h'(A)) = O \Big ( \sum_S \hat{f}_S^2 \min(|S|, \sqrt{d}) \Big )$$

If $\frac{\gamma}{2} \ln\frac{1}{b} \leq 1$, then $e^{-\frac{\gamma}{2} \ln\frac{1}{b}} = 1 - \Omega(\frac{\gamma}{2} \ln\frac{1}{b})$ which implies

$$\frac{b}{\gamma} (1 - e^{-\frac{\gamma}{2} \ln\frac{1}{b}}) \geq \Omega( b \ln \frac{1}{b})$$

However if $\ln \frac{1}{b} = \Omega(\sqrt{d})$, then we cannot choose $\gamma$ to be small enough (we must choose $\frac{1}{\gamma} \leq \sqrt{d}$) in order to make $\frac{\gamma}{2} \ln\frac{1}{b}$ small. 

So the only remaining case is when $\ln \frac{1}{b} = \Omega(\sqrt{d})$. Then notice that the quantity $(1 - e^{-\frac{\gamma}{2} \ln\frac{1}{b}})$ is increasing with decreasing $b$. So we can lower bound this term by substituting $b = e^{-\Theta(\sqrt{d})}$. If we choose $\gamma = \frac{1}{\sqrt{d}}$ then this implies

$$\frac{1}{\gamma}(1 - e^{-\frac{\gamma}{2} \ln\frac{1}{b}}) = \Omega(\sqrt{d})$$

And this in turn implies that

$$\frac{b}{\gamma}(1 - e^{-\frac{\gamma}{2} \ln\frac{1}{b}}) = \Omega(b \sqrt{d})$$
which yields $h'(A) \geq \Omega(bk \sqrt{d})$. So in either case, $h'(A)$ is lower bounded by either $\Omega(bk \sqrt{d}) $ or $\Omega(bk \ln \frac{1}{b})$, as desired. 

\end{proof}

\subsection{Interpolating Between Cuts via Bourgain's Junta Theorem}

In this section, we show that the quality of the cut-sparsifier $H$ is $o(\sqrt{\log k})$, thus beating how well the best distribution on $0$-extensions can approximate cuts in $G$ by a super-constant factor. 

We will first give an outline of how we intend to combine Bourgain's Junta Theorem, and the small set expansion of $H$ in order to yield this result. In a previous section, we gave a Fourier theoretic characterization of the cut function of $H$. We will consider an arbitrary cut $A \subset K$ and assume for simplicity that $|A| \leq |K-A|$. If the Boolean function $f_A$ that corresponds to this cut has significant mass at the tail of the spectrum, this will imply (by our Fourier theoretic characterization of the cut function) that the capacity of the corresponding cut in $H$ is $\omega(k)$. Every cut in $G$ has capacity at most $O(k \sqrt{d})$ because we can just cut every edge $(z_s, y_s)$ for each terminal $z_s \in A$, and each such edge has capacity $\sqrt{d}$. Then in this case, the ratio of the minimum cut in $G$ to the corresponding cut in $H$ is $o(\sqrt{d})$. 

But if the tail of the Fourier spectrum of $f_A$ is not significant, and applying Bourgain's Junta Theorem implies that the function $f_A$ is close to a junta. Any junta will have a small cut in $G$ (we can take axis cuts corresponding to each variable in the junta) and so for any function that is different from a junta on a vanishing fraction of the inputs, we will be able to construct a cut in $G$ (not necessarily minimum) that has capacity $o(k \sqrt{d})$. On all balanced cuts (i.e. $|A| = \Theta(k)$), the capacity of the cut in $H$ will be $\Omega(k)$, so again in this case the ratio of the minimum cut in $G$ to the corresponding cut in $H$ is $o(\sqrt{d})$. 

So the only remaining case is when $|A| = o(k)$, and from the small set expansion of $H$ the capacity of the cut in $H$ is $\omega(|A|)$ because the cut is imbalanced. Yet the minimum cut in $G$ is again at most $|A| \sqrt{d}$, so in this case as well the ratio of the minimum cut in $G$ to the corresponding cut in $H$ is $o(\sqrt{d})$. 

\begin{theorem1}{Theorem}{thm:sep}
There is an infinite family of graphs for which the quality of the best
cut-sparsifier is $\Omega(\frac{\log^2 \log \log k}{\log\log\log\log k})$ better
than the best that a distribution on $0$-extensions can achieve. 
\end{theorem1}

We repeat Bourgain's Junta Theorem:

\begin{theorem}
[Bourgain] \cite{Bou}, \cite{KN}
Let $f \{-1, +1\}^d \rightarrow \{-1, +1\}$ be a Boolean function. Then fix any $\epsilon, \delta \in (0, 1/10)$.  Suppose that
$$\sum_S (1-\epsilon)^{|S|} \hat{f}_S^2 \geq 1 - \delta$$
then for every $\beta > 0$, $f$ is a
$$\Big (2^{c \sqrt{\log 1/\delta \log \log 1/\epsilon}}\Big ( \frac{\delta}{\sqrt{\epsilon}} + 4^{1/\epsilon} \sqrt{\beta}\Big), \frac{1}{\epsilon \beta} \Big ) \mbox{-junta}$$
\end{theorem}

We will choose:

$\frac{1}{\epsilon} = \frac{1}{32} \log d$

$\frac{1}{\beta} = d^{1/4}$

$\frac{1}{\delta'} = \log^{2/3} d$

And also let $b = bal(A) = \frac{|A|}{k}$, and remember for simplicity we have
assumed that $|A| \leq |K-A|$, so $b \leq \frac{1}{2}$.

$\delta = \delta'b$

\begin{lemma}
If $\sum_S (1-\epsilon)^{|S|} \hat{f}_S^2 \leq 1 - \delta$ then this implies
$\sum_S \hat{f}_S^2 \min(|S|, \sqrt{d}) \geq \Omega \Big
(\frac{\delta}{\epsilon} \Big ) = \Omega(b\log^{1/3} d)$
\end{lemma}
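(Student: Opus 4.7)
The plan is to unpack the hypothesis, reinterpret it as a statement about the Fourier tail, and then compare the weight function $1-(1-\epsilon)^{|S|}$ to $\min(|S|,\sqrt d)$ levelwise, taking advantage of the parameter choice $1/\epsilon = (\log d)/32$. Since $f_A$ is $\pm 1$-valued we have Parseval's identity $\sum_S \hat f_S^2 = 1$, so the hypothesis $\sum_S (1-\epsilon)^{|S|}\hat f_S^2 \le 1-\delta$ is equivalent to
\[
\sum_S \bigl(1-(1-\epsilon)^{|S|}\bigr)\hat f_S^2 \;\ge\; \delta.
\]
This is the only form in which the hypothesis will be used.

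The next step is the elementary inequality $1-(1-\epsilon)^{|S|} \le \min(\epsilon|S|,1) = \epsilon\min(|S|,1/\epsilon)$, which follows from Bernoulli together with the trivial bound. Substituting gives
\[
\delta \;\le\; \epsilon \sum_S \hat f_S^2 \min(|S|, 1/\epsilon).
\]
Now I invoke the parameter choice: since $1/\epsilon = (\log d)/32$ and this is $o(\sqrt d)$ for large $d$, we have $\min(|S|,1/\epsilon) \le \min(|S|,\sqrt d)$ termwise. Plugging this in and dividing by $\epsilon$ yields
\[
\sum_S \hat f_S^2 \min(|S|,\sqrt d) \;\ge\; \delta/\epsilon.
\]

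Finally, I substitute the specific values $1/\epsilon = (\log d)/32$ and $\delta = \delta' b = b/\log^{2/3} d$ to get $\delta/\epsilon = b\log^{1/3}d / 32 = \Omega(b\log^{1/3} d)$, which is the claimed bound. There is no serious obstacle here: the lemma is a direct manipulation of the Fourier expression, and the only thing to be careful about is verifying that $1/\epsilon \le \sqrt d$ so that the truncation at $1/\epsilon$ is dominated by the truncation at $\sqrt d$ appearing in Lemma~\ref{lemma:fcut}. Informally, the content is that each Fourier coefficient at level $|S|$ contributes to the hypothesis weight $1-(1-\epsilon)^{|S|}$ at a rate $\epsilon$ times smaller than it contributes to the cut weight $\min(|S|,\sqrt d)$, so a deficit of $\delta$ in the former forces an excess of $\delta/\epsilon$ in the latter.
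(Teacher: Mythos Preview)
Your proof is correct and follows essentially the same approach as the paper's own proof: rewrite the hypothesis via Parseval as $\sum_S(1-(1-\epsilon)^{|S|})\hat f_S^2\ge\delta$, bound $1-(1-\epsilon)^{|S|}\le\epsilon\min(|S|,1/\epsilon)$, and then use $1/\epsilon=O(\log d)\le\sqrt d$ to pass to $\min(|S|,\sqrt d)$. If anything, your version is slightly cleaner since you invoke Bernoulli explicitly and avoid the unnecessary big-$O$ on that step.
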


\begin{proof}
The condition $\sum_S (1-\epsilon)^{|S|} \hat{f}_S^2 \geq 1 - \delta$ implies $\delta \leq 1 - \sum_S (1-\epsilon)^{|S|} \hat{f}_S^2 = O(\sum_S \hat{f}_S^2 \min(|S| \epsilon, 1)) $ and rearranging terms this implies
$$\frac{\delta}{\epsilon} \leq O(\sum_S \hat{f}_S^2 \min(|S|, \frac{1}{\epsilon})) = O(\sum_S \hat{f}_S^2 \min(|S|, \sqrt{d}))$$
where the last line follows because $\frac{1}{\epsilon} = O(\log d) \leq O(\sqrt{d})$.
\end{proof}

So combining this lemma and Lemma~\ref{lemma:fcut}: if the conditions of
Bourgain's Junta Theorem are not met, then the capacity of the cut in the
sparsifier is $\Omega(kb \log^{1/3} d)$. And of course, the capacity of the
minimum cut in $G$ is at most $kb \sqrt{d}$, because for each $z_s \in A$ we
could separate $A$ from $K -A $ by cutting the edge $(z_s, y_s)$, each of which
has capacity $\sqrt{d}$. 

\begin{case}
If the conditions of Bourgain's Junta Theorem are not met, then the ratio of the
minimum cut in $G$ separating $A$ from $K -A$ to the corresponding cut in $H$ is
at most $O(\frac{\sqrt{d}}{\log^{1/3}d})$.
\end{case}

But what if the conditions of Bourgain's Junta Theorem are met? We can check
what Bourgain's Junta Theorem implies for the given choice of parameters.  We
first consider the case when $b$ is not too small. In particular, for our choice of parameters the following 3 inequalities hold:
\begin{eqnarray}
2^{c \sqrt{\log 1/\delta \log \log 1/\epsilon}} &\leq& \log^{1/24} d
\label{inequ:b-is-small-1}\\
\frac{\delta}{\sqrt{\epsilon}} &\geq& 4^{1/\epsilon}\sqrt{\beta}
\label{inequ:b-is-small-2}\\
\sqrt{d}b\log^{-1/8}d &\geq& d^{1/4}\log d \label{inequ:b-is-small-3}
\end{eqnarray}

\begin{claim}
If (\ref{inequ:b-is-small-2}) is true, $\left( \frac{\delta}{\sqrt{\epsilon}} +
4^{1/\epsilon}
\sqrt{\beta}\right) = O\left(b\log^{-1/6} d\right)$
\end{claim}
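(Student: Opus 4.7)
The plan is to verify the claim by direct substitution, leveraging the hypothesis (\ref{inequ:b-is-small-2}) to reduce the sum to a single dominant term, after which only a short algebraic manipulation remains.

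First I would observe that the assumption $\frac{\delta}{\sqrt{\epsilon}} \geq 4^{1/\epsilon}\sqrt{\beta}$ immediately yields
\[
\frac{\delta}{\sqrt{\epsilon}} + 4^{1/\epsilon}\sqrt{\beta} \;\leq\; 2\cdot\frac{\delta}{\sqrt{\epsilon}},
\]
so it suffices to prove $\frac{\delta}{\sqrt{\epsilon}} = O\!\left(b\log^{-1/6} d\right)$.

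Next, I would substitute the specific parameter values fixed earlier in this section: $\epsilon = 32/\log d$ and $\delta = \delta' b = b/\log^{2/3} d$. This gives
\[
\frac{\delta}{\sqrt{\epsilon}} \;=\; \frac{b}{\log^{2/3} d}\cdot\frac{\sqrt{\log d}}{\sqrt{32}} \;=\; \frac{b\,\log^{1/2} d}{\sqrt{32}\,\log^{2/3} d} \;=\; \frac{b}{\sqrt{32}\,\log^{1/6} d},
\]
which is exactly $O(b\log^{-1/6} d)$. Combined with the previous paragraph, this proves the claim.

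There is no real obstacle here; the only thing to be careful about is that we are genuinely allowed to dominate the $4^{1/\epsilon}\sqrt{\beta}$ term by $\delta/\sqrt{\epsilon}$, which is precisely what hypothesis (\ref{inequ:b-is-small-2}) supplies. (As a sanity check for the broader argument, one can note that this hypothesis itself follows from the three listed inequalities together with the ranges of $b$ under consideration, but that verification belongs to the companion claims handling inequalities (\ref{inequ:b-is-small-1})--(\ref{inequ:b-is-small-3}), not to this one.)
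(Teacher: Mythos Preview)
Your proof is correct and is exactly the direct verification the paper intends; the paper itself states this claim without proof, and your substitution of the fixed parameter values $\epsilon = 32/\log d$ and $\delta = b/\log^{2/3} d$, combined with the doubling bound supplied by hypothesis~(\ref{inequ:b-is-small-2}), is the natural (and essentially only) way to check it.
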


\begin{claim}
If (\ref{inequ:b-is-small-1}) and (\ref{inequ:b-is-small-2}) are true, $2^{c
\sqrt{\log 1/\delta \log \log 1/\epsilon}}\Big (
\frac{\delta}{\sqrt{\epsilon}} + 4^{1/\epsilon} \sqrt{\beta}\Big) =
O\left(b\log^{-1/8}d\right)$
\end{claim}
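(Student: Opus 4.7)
The plan is essentially to multiply together the two hypotheses: inequality (\ref{inequ:b-is-small-1}) gives a direct bound on the prefactor $2^{c\sqrt{\log 1/\delta\, \log\log 1/\epsilon}}$, while the preceding Claim (whose hypothesis is exactly (\ref{inequ:b-is-small-2})) gives a bound on the bracketed expression $\left(\frac{\delta}{\sqrt{\epsilon}} + 4^{1/\epsilon}\sqrt{\beta}\right)$. So the proof is a one-line combination followed by an exponent check.

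More concretely, I would first invoke the previous Claim: under (\ref{inequ:b-is-small-2}),
\[
\frac{\delta}{\sqrt{\epsilon}} + 4^{1/\epsilon}\sqrt{\beta} \;=\; O\!\left(b\,\log^{-1/6} d\right).
\]
Then I would apply (\ref{inequ:b-is-small-1}) directly to bound the Bourgain prefactor by $\log^{1/24} d$. Multiplying the two estimates gives
\[
2^{c\sqrt{\log 1/\delta\,\log\log 1/\epsilon}}\left(\frac{\delta}{\sqrt{\epsilon}} + 4^{1/\epsilon}\sqrt{\beta}\right) \;\leq\; \log^{1/24} d \cdot O\!\left(b\,\log^{-1/6} d\right) \;=\; O\!\left(b\,\log^{\,1/24 - 1/6} d\right),
\]
and the exponent arithmetic $\tfrac{1}{24} - \tfrac{1}{6} = \tfrac{1}{24} - \tfrac{4}{24} = -\tfrac{1}{8}$ gives exactly the claimed $O(b\log^{-1/8} d)$ bound.

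There is essentially no obstacle here beyond bookkeeping; the substantive work was done in establishing (\ref{inequ:b-is-small-1}) and in the previous claim. The only place a reader might want reassurance is that the exponents $1/24$ and $1/6$ (which come respectively from the choice $1/\delta' = \log^{2/3} d$ feeding into the $\sqrt{\log 1/\delta \log\log 1/\epsilon}$ term, and from the denominator $\log^{1/6}$ in the preceding claim) leave a genuinely negative residue of $-1/8$. This slack of $1/8$ in the exponent of $\log d$ is what ultimately drives the improvement over the $\sqrt{\log k}$ barrier in the proof of Theorem~\ref{thm:sep}: the quantity $b\log^{-1/8}d$ will be used to control the Hamming distance between $f_A$ and a junta, and comparing to the trivial bound $b\sqrt{d}$ on the minimum cut in $G$ yields the sub-$\sqrt{\log k}$ factor.
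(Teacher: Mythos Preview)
Your proof is correct and is exactly the intended derivation: the paper states this claim without proof, and the only sensible reading is to multiply the bound $\log^{1/24} d$ from (\ref{inequ:b-is-small-1}) against the $O(b\log^{-1/6} d)$ bound from the preceding claim, yielding $O(b\log^{-1/8} d)$ via $\tfrac{1}{24}-\tfrac{1}{6}=-\tfrac{1}{8}$.
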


So when we apply Bourgain's Junta Theorem, if the conditions are met (for our
given choice of parameters), we get that $f_A$ is an
$\left(O\left(b\log^{-1/8}d\right),
O(d^{1/4} \log d)\right)$-junta. 

\begin{lemma}
If $f_A$ is a $(\nu, j)$-junta, then $h_K(A) \leq k \nu \sqrt{d} + j \frac{k}{2}$
\end{lemma}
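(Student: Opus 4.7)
The plan is to produce an explicit cut $(U, V-U)$ in $G$ with $U \cap K = A$ whose capacity is at most $k\nu\sqrt{d} + jk/2$; then $h_K(A)$, being the minimum such cut, will satisfy the claimed bound. The construction exploits the fact that in $G$ we have two very different edge types available: cheap hypercube edges (capacity $1$) and expensive terminal-attachment edges (capacity $\sqrt{d}$). The idea is to use the junta $g$ to decide the side of each hypercube vertex $y_s$, but force each terminal $z_s$ onto the side dictated by $f_A$ (so that $U\cap K = A$ is automatic); the disagreement set between $g$ and $f_A$ is then paid for using the (few) terminal-attachment edges, while the hypercube-cut cost is controlled by the junta's small support.

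Concretely, let $g:\{-1,+1\}^d\to\{-1,+1\}$ be a function depending only on a set of coordinates $J\subseteq[d]$ with $|J|\le j$ such that $\Pr_s[g(s)\ne f_A(s)]\le\nu$; such $g$ exists by the $(\nu,j)$-junta hypothesis. Define $U\subseteq V$ by: $z_s\in U$ iff $f_A(s)=+1$ (equivalently, iff $z_s\in A$), and $y_s\in U$ iff $g(s)=+1$. By construction $U\cap K = A$, so any upper bound on the capacity of $(U,V-U)$ is an upper bound on $h_K(A)$.

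Now I would bound the capacity by summing over the two edge types. A hypercube edge $(y_s,y_t)$ with $s,t$ differing in coordinate $i$ is cut only if $g(s)\ne g(t)$, which requires $i\in J$ since $g$ depends only on coordinates in $J$. For each $i\in J$ there are exactly $k/2$ such edges, each of capacity $1$, contributing at most $|J|\cdot k/2 \le jk/2$ to the cut. A terminal-attachment edge $(y_s,z_s)$ is cut iff $f_A(s)\ne g(s)$, which happens for at most $\nu k$ values of $s$; each such edge has capacity $\sqrt d$, contributing at most $\nu k\sqrt{d}$. Summing gives capacity $\le jk/2 + \nu k\sqrt{d}$, which is the desired bound.

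There is really no substantial obstacle here; the only thing to be careful about is keeping the two sidedness choices (for $y_s$ versus $z_s$) independent so that one pays for the junta approximation exactly through the $\sqrt{d}$-capacity terminal edges, rather than double-counting or trying to route disagreements through the hypercube. The bound is tight up to constants: hypercube edges along any coordinate cost $k/2$, and terminal edges cost $\sqrt{d}$ apiece, which is precisely why these two terms appear in the right-hand side.
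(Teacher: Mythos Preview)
Your proof is correct and is essentially the same as the paper's: both take the $j$-junta $g$ approximating $f_A$, place each hypercube vertex $y_s$ according to $g(s)$ and each terminal $z_s$ according to $f_A(s)$, and then bound the cut by the $\le jk/2$ hypercube edges along the junta's coordinates plus the $\le \nu k$ terminal edges of capacity $\sqrt{d}$ at the disagreement points. Your version is slightly more explicit in defining the set $U$ and verifying $U\cap K=A$, but the argument is identical.
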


\begin{proof}
Let $g$ be a $j$-junta s.t. $Pr_x[f_A(x) \neq g(x)] \leq \nu$. Then we can disconnect the set of nodes on the Hypercube where $g$ takes a value $+1$ from the set of nodes where $g$ takes a value $-1$ by performing an axis cut for each variable that $g$ depends on. Each such axis cut, cuts $\frac{k}{2}$ edges in the Hypercube, so the total cost of cutting these edges is $j \frac{k}{2}$ and then we can alternatively cut the edge $(z_s, y_s)$ for any $s$ s.t. $f_A(s) \neq g(s)$, and this will be a cut separating $A$ from $K -A$ and these extra edges cut are each capacity $\sqrt{d}$ and we cut at most $\nu k$ of these edges in total.
\end{proof}

So if $f_A$ is an$\left(O\left(b\log^{-1/8}d\right), O(d^{1/4} \log
d)\right)$-junta and (\ref{inequ:b-is-small-3}) holds, then $h_K(A)
\leq O\left(\frac{kb \sqrt{d}}{\log^{1/8}d }\right)$. 

\begin{case}
Suppose the conditions of Bourgain's Junta Theorem are met, and
(\ref{inequ:b-is-small-1})(\ref{inequ:b-is-small-2}) and
(\ref{inequ:b-is-small-3}) are true, then the ratio
of the minimum cut in $G$ separating $A$ from $K -A$ to the corresponding cut in
$H$ is at most $O(\frac{\sqrt{d}}{\log^{1/8}d})$.
\end{case}

\begin{proof}
Lemma~\ref{lemma:imb} also implies that the edge expansion of $H$ is
$\Omega(1)$, so given a cut $|A|$, $h'(A) \geq \Omega (|A|) = \Omega(kb)$. Yet
under the conditions of this case, the capacity of
the cut in $G$ is $O\left(\frac{kb\sqrt{d}}{\log^{1/8}k
}\right)$ and this implies the
statement.
\end{proof}

So, the only remaining case is when the conditions of Bourgain's Junta Theorem
are met at least 1 of the 3 conditions is not true. Yet we can apply Lemma
~\ref{lemma:imb} directly
to get that in this case $h'(A) = \omega(|A|)$ and of course $h_K(A) \leq
|A|\sqrt{d}$.

\begin{case}
 Suppose the conditions of Bourgain's Junta Theorem are met, and at least 1 of
the 3 inequalities is not true, then the ratio of the minimum cut in $G$
separating $A$ from
$K-A$ to the corresponding cut in $H$ is at most $O(\frac{\sqrt{d}\log\log\log
d}{\log^2\log d})$.
\end{case}

\begin{proof}
If (\ref{inequ:b-is-small-1}) is false, $\log(1/\delta') + \log(1/b) =
\log(1/\delta) > \frac{(\log
\log^{1/30}d/c)^2}{\log\log 1/\epsilon}=\Omega\left(\frac{\log^2\log
d}{\log\log\log d}\right)$. Since $1 / \delta' = O(\log\log d)$, it must be the
case that $\log(1/b) = \Omega\left(\frac{\log^2\log
d}{\log\log\log d}\right)$.

If (\ref{inequ:b-is-small-2}) is false, $b <
\frac{4^{1/\epsilon}\sqrt\beta\sqrt\epsilon}{\delta'}=O(d^{-1/8}\log^{1/6}d)$,
and $\log (1/b) = \Omega(\log d)$.

If (\ref{inequ:b-is-small-3}) is false, $b < d^{-1/4}\log^{9/8}d$ and $\log
(1/b) = \Omega(\log d)$.

The minimum of the 3 bounds is the first one. So, $\log(1/b) =
\Omega\left(\frac{\log^2\log
d}{\log\log\log d}\right)$ if at least 1 of the 3 conditions is false. Applying
Lemma \ref{lemma:imb}, we get that $h'(A)
\geq \Omega(|A|\log\frac1b)=\Omega(|A|\frac{\log^2\log d}{\log\log\log d})$. 
And yet $h_K(A) \leq |A|\sqrt{d}$, and this implies the statement. 
Combining the cases, this implies that the quality of $H$ is
$O(\frac{\sqrt{d}\log\log\log d}{\log^2 \log d})$.
\end{proof}

\begin{conjecture}
The quality of $H$ as a cut-sparsifier for $G$ is $O(d^{1/4})$
\end{conjecture}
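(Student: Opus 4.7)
\medskip
\noindent\textit{Proof proposal (sketch of an approach for the conjecture).}

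The plan is to tighten the three-case analysis used for Theorem~\ref{thm:sep} so that every case loses only a factor of $d^{1/4}$ instead of $d^{1/2}/\mathrm{polylog}(d)$. The key identity we keep using is the Fourier expression $h'(A)=\Theta\bigl(k\sum_{S}\hat f_{A,S}^{\,2}\min(|S|,\sqrt d)\bigr)$ from Lemma~\ref{lemma:fcut}, and the trivial upper bound $h_K(A)\le |A|\sqrt d$. Write $b=\mathrm{bal}(A)$ and define the ``low'' and ``high'' Fourier masses
\begin{equation*}
L(f_A)=\sum_{0<|S|\le\sqrt d}\hat f_{A,S}^{\,2}\,|S|,\qquad H(f_A)=\sum_{|S|>\sqrt d}\hat f_{A,S}^{\,2},
\end{equation*}
so that $h'(A)=\Theta\bigl(k\,(L(f_A)+\sqrt d\,H(f_A))\bigr)$. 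The conjecture will follow once we show that every $A$ of balance $b$ admits a cut in $G$ of cost at most $O(d^{1/4})\bigl(L(f_A)+\sqrt d\,H(f_A)\bigr)\cdot k$.

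The first step is to handle the ``high frequency'' regime directly: if $H(f_A)\ge b/\sqrt d$, then $h'(A)=\Omega(kb)$ and $h_K(A)\le kb\sqrt d$ trivially gives quality $O(\sqrt d)$; combined with the small set expansion bound of Lemma~\ref{lemma:imb}, $h'(A)=\Omega(kb\log(1/b))$, so whenever $\log(1/b)\ge\sqrt d$ we already get quality $O(\sqrt d/\log(1/b))\le O(d^{1/4})$. So we may assume $H(f_A)\le b/\sqrt d$ and $\log(1/b)\le\sqrt d$, i.e.\ essentially all the Fourier mass of $f_A$ sits at levels $|S|\le\sqrt d$.

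The second step is the crucial one and replaces the use of Bourgain's theorem by a sharper, scale-aware junta approximation. Split the low spectrum at the optimal threshold $\tau\approx d^{1/4}$: let $f_A^{\le\tau}$ and $f_A^{>\tau}$ denote the projections onto Fourier levels $\le\tau$ and in $(\tau,\sqrt d]$ respectively. The hypercontractive inequality (Bonami--Beckner) applied with parameter $\rho=\sqrt{p-1}$ and $p$ tuned so that $\rho^{2\tau}$ matches the balance $b$ tells us that $\|f_A^{\le\tau}\|_2^2$ is close to $b$ only if $f_A^{\le\tau}$ is concentrated on a small number of coordinates; quantitatively, we expect a bound of the form ``$f_A^{\le\tau}$ is $\varepsilon$-close to a $J$-junta'' with $J\cdot\tau=O(L(f_A)/b)$ and $\varepsilon=O(H(f_A)+\|f_A^{>\tau}\|_2^2)$. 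The axis cuts for the junta cost $Jk/2$ in the hypercube and the $\varepsilon k$ mis-classified terminals cost $\sqrt d$ apiece, so
\begin{equation*}
h_K(A)\ \le\ \tfrac{k}{2}J + k\varepsilon\sqrt d \ \le\ O\!\left(\frac{k\,L(f_A)}{b\,\tau}+k\sqrt d\,H(f_A)+k\sqrt d\,\|f_A^{>\tau}\|_2^2/1\right).
\end{equation*}
Choosing $\tau=d^{1/4}$ balances the two terms and yields $h_K(A)/h'(A)=O(d^{1/4})$ in this regime.

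The third step is the balanced, low-Fourier-weight case $b=\Theta(1)$, which is the extremal case for the Majority function. Here one directly computes that Majority achieves quality $\Theta(d^{1/4})$ via Sheppard's formula $NS_\rho[\mathrm{Maj}]=\Theta(\arccos\rho)=\Theta(d^{-1/4})$ for our choice $\rho=1-1/\sqrt d$, and the Kindler--Safra stability theorem (a refinement of Bourgain's) shows any Boolean $f$ with $NS_\rho[f]\le(1+o(1))NS_\rho[\mathrm{Maj}]$ is close to a halfspace, for which the axis-cut construction again gives a cut of size $O(kd^{1/4})$.

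The hard part will be the second step: the existing Bourgain/KKL-style junta theorems lose polylogarithmic factors in exactly the regime we care about (moderate balance, Fourier mass just below level $\sqrt d$), and a direct $(\tau,\varepsilon,J)$ tradeoff with the scaling above is not, to my knowledge, available off-the-shelf. Proving the required sharper statement --- essentially a ``level-$\tau$ Kindler--Safra theorem'' calibrated to the threshold $\tau=d^{1/4}$ --- appears to be the main obstacle, and is likely what separates the current $O(\sqrt d\,\log\log\log d/\log^2\log d)$ bound from the conjectured $O(d^{1/4})$.
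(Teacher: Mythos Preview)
The statement under review is a \emph{conjecture} in the paper, not a theorem; the paper offers no proof of it and only establishes the weaker bound $O(\sqrt d\,\log\log\log d/\log^2\log d)$ via Bourgain's junta theorem. So there is no paper proof to compare against, and your proposal should be read as a plan of attack on an open problem. You yourself identify the main obstacle correctly at the end: the second step requires a ``level-$\tau$ Kindler--Safra'' type statement with the precise tradeoff $J\cdot\tau=O(L(f_A)/b)$, and no such theorem is currently available. That is a genuine gap, not a routine calculation, and it is exactly why the paper states this as a conjecture.

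Beyond that central gap, two of the supporting steps also do not hold up as written. In your first step, the conclusion ``so we may assume $H(f_A)\le b/\sqrt d$'' is not justified: from $H(f_A)\ge b/\sqrt d$ you only obtain $h'(A)=\Omega(kb)$, which gives quality $O(\sqrt d)$, not $O(d^{1/4})$; the small-set-expansion bound from Lemma~\ref{lemma:imb} lets you dispose of the case $\log(1/b)\ge d^{1/4}$, but not the case of large $H(f_A)$ with moderate balance. In your third step, the appeal to Kindler--Safra is misplaced: that theorem says low-degree-concentrated Boolean functions are close to \emph{juntas}, not to halfspaces, and ``Majority is Stablest'' does not imply that every balanced $f$ with noise stability near Majority's is close to a halfspace (indeed, random LTFs and various monotone functions have comparable noise stability). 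So the balanced extremal case is not handled either. In short, your outline correctly isolates where the difficulty lies and matches the paper's intuition that Majority and sub-cubes are extremal, but it does not close the gap between the proved $o(\sqrt d)$ bound and the conjectured $O(d^{1/4})$.
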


\vspace{0.5pc}

\begin{theorem1}{Theorem}{thm:sep}
There is an infinite family of graphs for which the quality of the best
cut-sparsifier is $\Omega(\frac{\log^2 \log \log k}{\log\log\log\log k})$ better
than the best that a distribution on $0$-extensions can achieve. 
\end{theorem1}

\vspace{0.5pc}

\section{Improved Constructions via Lifting}\label{sec:alift}

In this section we give a polynomial time construction for a flow-sparsifier
that achieves quality at most the quality of the best flow-sparsifier that can
be realized as a distribution over $0$-extensions. Thus we give a construction
for flow-sparsifiers (and thus also cut-sparsifier) that achieve quality
$O(\frac{\log k}{\log \log k})$. Given that the current best upper bounds on the
quality of both flow and cut-sparsifiers are achieved as a distribution over
$0$-extensions, the constructive result we present here matches the best known
\emph{existential} bounds on the quality of cut or flow-sparsifiers. All
previous constructions \cite{M}, \cite{LM} need to sacrifice some super-constant
factor in order to actually construct cut or flow-sparsifiers. We achieve this
using a linear program that can be interpreted as a lifting of previous linear
programs used in constructive results. 

Our technique, we believe, is of independent interest: we perform a lifting on an appropriate linear program. This lifting allows us to implicitly enforce a constraint automatically that previously was difficult to enforce, and required an approximate separation oracle rather than an exact separation oracle. 

There are known ways for implicitly enforcing this constraint using an exponential number of variables, but surprisingly we are able to implicitly enforce this constraint using only polynomially many variables, after just a single lifting operation. The lifting operation that we perform is inspired by Earth-mover relaxations, and makes it a rare example of when an algorithm is actually able to use the Earth-mover constraints, as opposed to the usual use of such constraints in obtaining hardness from integrality gaps. 

\begin{theorem}\label{thm:lift}
 Given a flow sparsifier instance $\mathcal{H} = (G, k)$, there is a polynomial (in $n$ and $k$) time algorithm that outputs a flow sparsifier $H$ of quality $\alpha \leq \alpha'(\mathcal{H})$, where $\alpha'(\mathcal{H})$ is the quality of the best flow sparsifier that can be realized as a distributions over $0$-extensions.
\end{theorem}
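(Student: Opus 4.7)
The plan is to solve a polynomial-sized lifted LP whose optimum upper-bounds $\alpha'(\mathcal{H})$, and to read off the sparsifier $H$ directly from its variables. The LP is a single-level lift of the (exponentially large) natural formulation over distributions $\gamma$ on $0$-extensions, using vertex marginals $p_u(a)$, edge marginals $q_{uv}(a,b)$, and Earth-mover coupling constraints that implicitly enforce consistency.

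First I would introduce, for each $u\in V$ and $a\in K$, a variable $p_u(a)\ge 0$ (intended as $\Pr_\gamma[f(u)=a]$), and for each $(u,v)\in E$ and $a,b\in K$ a variable $q_{uv}(a,b)\ge 0$ (intended as $\Pr_\gamma[f(u)=a\wedge f(v)=b]$), plus a single objective variable $\alpha$. I would enforce the local consistency constraints $\sum_a p_u(a)=1$, $p_u(u)=1$ for $u\in K$, $\sum_b q_{uv}(a,b)=p_u(a)$ and $\sum_a q_{uv}(a,b)=p_v(b)$, so that on every edge $q_{uv}$ is a valid coupling of $p_u$ and $p_v$; this is the Earth-mover layer. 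The candidate sparsifier is then defined by $c_H(a,b)=\sum_{(u,v)\in E} c(u,v)\, q_{uv}(a,b)$, which is the expected weight that edge $(a,b)$ would receive in $G_f$ under the hypothetical distribution $\gamma$.

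Next I would impose the quality constraint in its dual metric form: for every semi-metric $D$ on $K$, $\sum_{a,b\in K} c_H(a,b)\, D(a,b) \le \alpha \cdot OPT_{sm}(G,K,D)$, and minimize $\alpha$. Although this is an infinite family, it admits a polynomial-time separation oracle---given a candidate $(p,q,\alpha)$, one searches over semi-metrics $\delta$ on $V$ with $D=\delta|_K$ for the maximizer of $\sum c_H(a,b)\, D(a,b) - \alpha \sum_{(u,v)\in E} c(u,v)\, \delta(u,v)$, which is itself a polynomial-sized LP over distances on $V\times V$ subject to the triangle inequality. Hence the full lifted LP is solvable in polynomial time by ellipsoid. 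The upper bound on its value is easy: any distribution $\gamma$ over $0$-extensions of quality $\alpha'$ induces a feasible point with $p_u(a)=\Pr_\gamma[f(u)=a]$ and $q_{uv}(a,b)=\Pr_\gamma[f(u)=a\wedge f(v)=b]$, reproducing $c_H$ as the edge weights of $\sum_f \gamma(f) G_f$ and certifying the quality constraint at $\alpha=\alpha'(\mathcal{H})$.

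The main obstacle will be showing that the $H$ read off from an optimal LP solution is actually a flow-sparsifier and not merely a weighted graph whose metric cost is bounded. The subtle inequality to verify is $h_K(U)\le h'(U)$ for every $U\subset K$, equivalently that $\vec{H}$ is routable in $G$ with congestion one; enforcing this naively required either exponentially many cut constraints or an approximate separation oracle, which is precisely why previous constructions lost a super-constant factor. The Earth-mover lifting handles this implicitly: treating each $q_{uv}$ as a local transport plan between the marginals $p_u$ and $p_v$, one constructs a fractional multicommodity routing in $G$ by sending $c(u,v)\, q_{uv}(a,b)$ units of $(a,b)$-flow across each edge $(u,v)$, and the coupling identities together with $p_u(u)=1$ at terminals give flow conservation after balancing within the fractional pre-images. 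This certifies $h_K(U)\le h'(U)$ directly from the polynomial-sized LP and, combined with the quality constraint, shows that $H$ is a flow-sparsifier of quality $\alpha\le\alpha'(\mathcal{H})$, as required.
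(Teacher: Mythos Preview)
Your LP is essentially the paper's: your couplings $q_{uv}(a,b)$ are its $x^{u,v}_{a,b}$, your marginals $p_u(a)$ are its $x^u_a$, and your metric quality constraint is, by flow--metric duality, exactly the paper's constraint $cong_G(\vec w)\le\alpha$ (with a max-concurrent-flow separation oracle). The feasibility of the optimal $0$-extension distribution, giving $\alpha\le\alpha'(\mathcal H)$, is also identical. So the setup is right.

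The gap is in your last paragraph, where the two directions get swapped. You correctly say the hard part is showing $H$ does not under-estimate, i.e.\ every demand routable in $G$ is routable in $H$ (for cuts, $h_K(U)\le h'(U)$). But you then claim this is ``equivalently that $\vec H$ is routable in $G$ with congestion one''. That is the \emph{opposite} direction: $cong_G(\vec H)\le 1$ would give $h'(U)\le h_K(U)$, and it is already subsumed by the LP constraint $cong_G(\vec H)\le\alpha$. Your proposed fix---push $c(u,v)\,q_{uv}(a,b)$ units of commodity $(a,b)$ across edge $(u,v)$---would, if it were a valid flow, certify exactly this wrong direction; and it is not a valid multicommodity flow anyway, because the couplings on different edges incident to a non-terminal $u$ agree only on the marginal $p_u$, not on the joint $(a,b)$ values, so commodity-wise conservation fails. (Were it valid you would get $cong_G(\vec H)\le 1$ for every feasible point, hence $\alpha=1$ always.)

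The paper handles the correct direction by a dual metric argument, not a primal routing. Given any semi-metric $\delta'$ on $K$, extend it to $V$ by $\delta(u,v)=\mathrm{EMD}_{\delta'}(p_u,p_v)$; since each $q_{uv}$ is a coupling of $p_u$ and $p_v$, one has $\sum_{a,b}q_{uv}(a,b)\,\delta'(a,b)\ge\delta(u,v)$, and summing over edges yields $\sum_{a<b}c_H(a,b)\,\delta'(a,b)\ge\sum_{(u,v)\in E}c(u,v)\,\delta(u,v)$. For any $\vec f$ routable in $G$, flow--metric duality then gives $\sum_{(u,v)}c(u,v)\,\delta(u,v)\ge\sum_{a<b}f_{a,b}\,\delta'(a,b)$, so $\sum c_H(a,b)\,\delta'(a,b)\ge\sum f_{a,b}\,\delta'(a,b)$ for every $\delta'$, i.e.\ $\vec f$ routes in $H$ with congestion $1$. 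That is the missing step; once you replace your routing sketch with this EMD-extension inequality, the argument goes through.
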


\begin{proof}
We show that the following LP can give a flow-sparsifier with the desired properties:

\begin{eqnarray*}
  \min & \alpha \\
  \mbox{s.t} & \\
   & \begin{array}{l}
   \begin{array}{rcl}
   cong_G(\vec{w}) &\leq& \alpha \\
   w_{i,j} &=& \displaystyle\sum_{u,v \in V, u\neq v} c(u,v)x^{u,v}_{i,j} \qquad \forall i, j \in K\\
   \end{array} \\
   \left \{ \begin{array}{llll}
  x^{u,v}_{i,j} &=& x^{v,u}_{j,i} & \qquad \forall u, v\in V,u\neq v, i,j\in K\\
  \displaystyle\sum_{j\in K}x^{u,v}_{i,j} &=& x^u_i & \qquad \forall u, v\in V, u\neq v, i \in K\\
  \displaystyle\sum_{i\in K}x^u_i &=& 1 & \qquad \forall u \in V\\
  x^i_i &=& 1 &\qquad \forall i \in K\\
  x^{u,v}_{i,j} &\geq& 0 &\qquad \forall u, v \in V, u\neq v, i,j \in K\\
    \end{array} \right \} \mbox{\large \textbf{Earth-mover Constraints}}\\
    \end{array}
\end{eqnarray*}

\begin{lemma}
 The value of the LP is $\alpha \leq \alpha'(\mathcal{H})$.
\end{lemma}

\begin{proof}
Let $\mathcal{F}$ be the best distribution of 0-extensions. We explicitly give
a satisfying assignment for all the variables : 
\begin{eqnarray*}
\alpha &=& \alpha'(\mathcal{H})\\
x^u_i &=& \Pr_{f \sim \mathcal{F}}\left[f(u) = i\right]\\
x^{u,v}_{i,j} &=& \Pr_{f \sim \mathcal{F}}\left[f(u) = i \wedge
f(v) = j\right]\\
w_{i,j} &=& \sum_{u,v \in V, u\neq v} c(u,v)x^{u,v}_{i,j} 
\end{eqnarray*}
It's easy to see that the graph $H$ formed by $\set{w_{i,j}}$ is exactly the same as the flow sparsifier obtained from $\mathcal{F}$. So $H$ can be routed in $G$ with conjestion at most $\alpha'$. One can also verify that all the other constraints are satisfied. Thus, the value of the LP is at most $\alpha'(\mathcal{H})$.
\end{proof}

There are qualitatively two types of constraints that are associated with good flow-sparsifiers $H$: All flows routable in $H$ with congestion at most $1$ must be routable in $G$ with congestion at most $\alpha$. Actually, there is a notion of a hardest flow feasible in $H$ to route in $G$: the flow that saturates all edges in $H$ (i.e. $\vec{H}$). So the constraint that all flows routable in $H$ with congestion at most $1$ be also routable in $G$ with congestion at most $\alpha$ can be enforced by ensuring that $\vec{H}$ can be routed in $G$ with congestion at most $\alpha$. This constraint can be written using an infinite number of linear constraints on $H$ associated with the dual to a maximum concurrent flow problem, and in fact an oracle for the maximum concurrent flow problem can serve as a separation oracle for these constraints. 

The second set of constraints associated with good flow-sparsifiers are that all flows routable in $G$ with congestion at most $1$ can also be routed in $H$ with congestion at most $1$. This constraint can also be written as an infinite number of linear constraints on $H$, but no polynomial time separation oracle is known for these constraints. Instead, previous work relied on using oblivious routing guarantees to get an approximate separation oracle for this problem.

Intuitively, the constraint that all flows routable in $G$ can be routed in $H$ can be enforced in a number of ways. The strategy outlined in the preceding paragraph attempts to incorporate these constraints into the linear program. Alternatively, one could enforce that $H$ be realized as a distribution over $0$-extensions $G_f$. This would automatically enforce that all flows routable in $G$ would also be routable in $H$. However, this would require a variable for each $0$-extension $G_f$, and there would be exponentially many such variables. 

Yet the above linear programming formulation is a hybrid between these two approaches. In previous linear programming formulations, the sparsifier $H$ was not required to be explicitly generated from $G$, hence the need to enforce that it actually be a flow-sparsifer. When there is a variable for each $0$-extension, then $H$ is forced to be generated from $G$ and this constraint is implicitly satisfied. Yet just enforcing the Earth-mover constraints, as above, actually forces $H$ to have enough structure inherited from $G$ that $H$ is automatically a flow-sparsifier! This is the reason that we are able to get improved constructive results. To re-iterate, a simple lifting (corresponding to the Earth-mover constraints) does actually impose enough structure on $H$, that we can implicitly impose the constraint that $H$ be a flow-sparsifier without using exponentially many variables for each $0$-extension $G_f$!

\begin{lemma}
$\set{w_{i,j}:i,j \in K, i<j }$ is a flow sparsifier of quality $\alpha$. 
\end{lemma}

\begin{proof}
Let $H$ be the capacitated graph on $K$ formed by $\set{w_{i,j}}$. The LP system guarantees that $H$ can be routed in $G$ with conjestion at most $\alpha$, and thus we only need to show the other direction: every multi-commodity flow in $G$ with end points in $K$ can be routed in $H$ with conjestion at most 1.

Consider a multi-commoditiy flow $\set{f_{i,j}:i, j \in K, i < j}$ that can be routed in $G$. By the LP duality, we have $$\sum_{u < v}c(u,v)\delta(u,v) \geq \sum_{i < j}f_{i,j}\delta(i,j)$$ for every metric $\delta$ over $V$.

Let $\delta'$ be any metric over $K$, then

\begin{eqnarray*}
\sum_{i<j}\delta'(i,j)w_{i,j} = \sum_{i<j}\delta'(i,j)\sum_{u\neq
v}c(u,v)x^{u,v}_{i,j}=\sum_{u<v}c(u,v)\sum_{i\neq
j}x^{u,v}_{i,j}\delta'(i,j)\geq \sum_{u < v}c(u,v)EMD_{\delta'}(x^u, x^v).
\end{eqnarray*}

Define $\delta(u,v) = EMD_{\delta'}(x^u, x^v)$. Clearly, $\delta$ is a metric over $V$ and $\delta(i,j) = \delta'(i,j)$ for every $i, j \in K$. We have
 
\begin{eqnarray*}
\sum_{i<j}\delta'(i,j)w_{i,j} \geq \sum_{u < v}c(u,v)\delta(u,v)\geq
\sum_{i<j}f_{i,j}\delta(i,j) = \sum_{i<j}f_{i,j}\delta'(i,j)
\end{eqnarray*}

We have proved that $\sum_{i<j}\delta'(i,j)w_{i,j}\geq \sum_{i<j}f_{i,j}\delta'(i,j)$ for every metric $\delta'$ over $K$. By the LP duality, $f$ can be routed in $H$ with conjestion 1.
\end{proof}

\begin{lemma}
 The LP can be solved in polynomial (in $n$ and $k$) time.
\end{lemma}

\begin{proof}
 The LP contains polynomial number of variables and hence it is sufficient to give a separation oracle between a given point and the polytope defined by the LP. All constraints except whether or not $cong_G(\vec{H}) \leq \alpha$ can be directly checked, and for this remaining constraint the exact separation oracle is given by solving a maximum concurrent flow problem.
\end{proof}
\end{proof}


\section{Abstract Integrality Gaps and Rounding Algorithms}

In this section, we give a generalization of the hierarchical decompositions
constructed in \cite{R}. This immediately yields an $O(\log k)$-competitive
Steiner oblivious routing scheme, which is optimal. Also, from our hierarchical
decompositions we can recover the $O(\log k)$ bound on the flow-cut gap for
maximum concurrent flows given in \cite{LLR} and \cite{AR}. Additionally, we can
also give an $O(\log k)$ flow-cut gap for the maximum multiflow problem, which
was originally given in \cite{GVY}. This even yields an $O(\log k)$ flow-cut gap
for the relaxation for the requirement cut problem, which is given in \cite{NR}.
In fact, we will be able to give an abstract framework to which the results in
this section apply (and yield $O(\log k)$ flow-cut gaps for), and in this sense
we are able to help explain the intrinsic robustness of the worst-case ratio
between integral cover compared to fractional packing problems in graphs. 

Philosophically, this section aims to answer the question: Do we really need to
pay a price in the approximation guarantee for reducing to a graph on size $k$?
In fact, as we will see, there is often a way to combine both the reduction to a
graph on size $k$ and the rounding needed to actually obtain a flow-cut gap on
the reduced graph, into one step! This is exactly the observation that leads to
our improved approximation guarantee for Steiner oblivious routing. 

\subsection{$0$-Decomposition}

We extend the notion of $0$-extensions, which we previously defined, to a notion
of $0$-decompositions. Intuitively, we would like to combine the notion of a
$0$-extension with that of a decomposition tree. 

Again, given a $0$-extension $f$, we will denote $G_f$ as the graph on $K$ that
results from contracting all sets of nodes mapped to any single terminal. Then
we will use $c_f$ to denote the capacity function of this graph. 

\begin{definition}
Given a tree $T$ on $K$, and a $0$-extension $f$, we can generate a
$0$-decomposition $G_{f, T} = (K, E_{f, T})$ as follows:

The only edges present in $G_{f, T}$ will be those in $T$, and for any edge $(a,
b) \in E(T)$, let $T_a, T_b$ be the subtrees containing $a, b$ respectively that
result from deleting $(a, b)$ from $T$. 

Then $c_{f, T}(a,b)$ (i.e. the capacity assigned to $(a, b)$ in $G_{f, T}$ is:
$c_{f, T}(a,b) = \sum_{u, v \in K \mbox{ and } u \in T_a, v \in T_b} c_f(u, v)$.
\end{definition}

Let $\Lambda$ denote the set of $0$-extensions, and let $\Pi$ denote the set of trees on $K$.

\begin{claim}
For any distribution $\gamma$ on $\Lambda \times \Pi$, and for any demand $\vec{d} \in \Re^{K \choose 2}$, $cong_H(\vec{d}) \leq cong_G(\vec{d})$ where $H = \sum_{f \in \Lambda, T \in \Pi} \gamma(f, T) G_{f, T}$
\end{claim}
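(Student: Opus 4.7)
\begin{proofsketch}
Let $c = cong_G(\vec d)$, so there is a feasible flow $\phi$ in $G$ that routes the demand vector $\vec d$ with total flow at most $c\cdot cap_G(e)$ on every edge $e$. My goal is to exhibit, for every $(f,T)$ in the support of $\gamma$, a flow $\phi_{f,T}$ that routes $\vec d$ in $G_{f,T}$ with congestion at most $c$, and then take the $\gamma$-weighted combination to get a routing in $H = \sum_{f,T} \gamma(f,T)\, G_{f,T}$ with congestion at most $c$.

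The plan for a single pair $(f,T)$ is to route each demand along the unique path in $T$ and to bound the resulting congestion cut-by-cut. Fix a tree edge $(a,b)\in E(T)$ and let $T_a,T_b$ be the two subtrees of $T$ obtained by deleting $(a,b)$. Consider the cut $(S,\bar S)$ in $G$ defined by $S = f^{-1}(T_a)$ and $\bar S = f^{-1}(T_b)$. By the definition of $G_f$ the capacity of this cut in $G$ equals
\[
cap_G(S,\bar S)=\sum_{u\in S,v\in \bar S} c(u,v)=\sum_{i\in T_a,j\in T_b} c_f(i,j)=c_{f,T}(a,b),
\]
where the last equality is precisely the definition of the tree-edge capacity $c_{f,T}(a,b)$. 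Since $\phi$ routes $\vec d$ in $G$ with congestion at most $c$, the total net demand crossing this cut is at most the total flow crossing it, so
\[
\sum_{i\in T_a,\,j\in T_b} d_{ij} \;\le\; c\cdot cap_G(S,\bar S) \;=\; c\cdot c_{f,T}(a,b).
\]
Routing each demand pair $(i,j)$ along its unique tree path in $T$, the load on $(a,b)$ is exactly $\sum_{i\in T_a,\,j\in T_b} d_{ij}$, so this routing $\phi_{f,T}$ has congestion at most $c$ on every tree edge.

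Finally, to lift back to $H$, observe that $H = \sum_{f,T} \gamma(f,T)\,G_{f,T}$ and take $\phi_H := \sum_{f,T} \gamma(f,T)\,\phi_{f,T}$. Since each $\phi_{f,T}$ routes $\vec d$ and the $\gamma(f,T)$ sum to one, $\phi_H$ also routes $\vec d$. The flow $\phi_H$ uses on any edge $e$ is at most $\sum_{f,T} \gamma(f,T)\,c\cdot c_{f,T}(e) = c\cdot c_H(e)$, so $cong_H(\vec d)\le c = cong_G(\vec d)$, as claimed.

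The only non-trivial step is the cut argument for a single $(f,T)$: one must recognize that the tree cut induced by deleting $(a,b)$ pulls back through $f$ to a bona fide cut in $G$ whose capacity matches $c_{f,T}(a,b)$ exactly; once this identification is in place, the flow/cut duality in $G$ together with routing along the unique tree path makes the rest of the argument mechanical, and the final weighted-combination step is immediate from linearity.
\end{proofsketch}
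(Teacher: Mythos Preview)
Your proof is correct and is essentially a more explicit version of the paper's argument. The paper proceeds in two black-box steps---first noting that contraction to $G_f$ can only make routing easier, then invoking that $G_{f,T}$ is a hierarchical decomposition tree for $G_f$ (so anything routable in $G_f$ is routable in $G_{f,T}$)---whereas you unfold both steps into a single cut argument by observing that each tree edge $(a,b)$ pulls back via $f$ to a cut in $G$ of exactly the same capacity $c_{f,T}(a,b)$. Your version is more self-contained since it does not assume the reader knows the decomposition-tree property, but the underlying identification is the same.
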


\begin{proof}
Clearly for all $f, T$, $\gamma(f, T) \vec{d}$ is feasible in $\gamma(f, T) G_f$ (because contracting edges only makes routing flow easier), and so because $G_{f, T}$ is a hierarchical decomposition tree for $G_f$, then it follows that $\gamma(f, T) \vec{d}$ is also feasible in $G_{f, T}$.
\end{proof}

\begin{claim}
Given any distribution $\gamma$ on $\Lambda \times \Pi$, let $H = \sum_{f \in \Lambda, T \in \Pi} \gamma(f, T) G_{f, T}$. Then $\sup_{\vec{d} \in \Re^{K \choose 2}} \frac{cong_G(\vec{d})}{cong_H(\vec{d})} = cong_G(\vec{H})$
\end{claim}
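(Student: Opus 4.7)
The plan is to establish the equality by a two-sided bound. The natural candidate to witness the supremum is $\vec{d} = \vec{H}$ itself, so the lower bound direction is essentially by inspection, while the upper bound direction amounts to showing that routings compose multiplicatively.

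For the upper bound $\sup_{\vec{d}} \frac{cong_G(\vec{d})}{cong_H(\vec{d})} \leq cong_G(\vec{H})$, I would fix an arbitrary demand $\vec{d} \in \Re^{K \choose 2}$ and combine two routings. Let $g_H$ be an optimal routing of $\vec{d}$ in $H$, so that the total flow $g_H(a,b)$ across edge $(a,b)$ of $H$ is at most $cong_H(\vec{d}) \cdot c_H(a,b)$. Let $g_G$ be an optimal routing of $\vec{H}$ in $G$, so that for every edge $(a,b) \in E_H$ there is a flow of value $c_H(a,b)$ from $a$ to $b$ in $G$, and the aggregate load on any edge $e \in E(G)$ is at most $cong_G(\vec{H}) \cdot c_G(e)$. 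Now replace each edge $(a,b)$ of $H$ in the support of $g_H$ by the corresponding $g_G$-routing, scaled by the factor $g_H(a,b)/c_H(a,b) \leq cong_H(\vec{d})$. The resulting $G$-flow routes $\vec{d}$, and its load on any edge $e$ is at most $cong_H(\vec{d})$ times the load of $g_G$ on $e$, hence at most $cong_H(\vec{d}) \cdot cong_G(\vec{H}) \cdot c_G(e)$. Therefore $cong_G(\vec{d}) \leq cong_H(\vec{d}) \cdot cong_G(\vec{H})$, and dividing through gives the desired bound on the ratio.

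For the lower bound, I would simply plug in $\vec{d} = \vec{H}$. The trivial routing that sends the demand $c_H(a,b)$ along the single edge $(a,b) \in E_H$ uses each edge of $H$ at exactly its capacity, so $cong_H(\vec{H}) \leq 1$. Hence
\[
\sup_{\vec{d} \in \Re^{K \choose 2}} \frac{cong_G(\vec{d})}{cong_H(\vec{d})} \;\geq\; \frac{cong_G(\vec{H})}{cong_H(\vec{H})} \;\geq\; cong_G(\vec{H}),
\]
and combining with the upper bound yields equality.

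There is no real obstacle here; the only step worth being careful about is verifying that substituting the $g_G$-routings into $g_H$ scales edge loads multiplicatively rather than additively, which follows because the substitution is linear and $g_H$'s load on each $H$-edge is a scalar multiple of its capacity. The claim should be viewed as saying that $\vec{H}$ is the hardest demand feasible in $H$ to route in $G$, and this is exactly why in the setting of Theorem~\ref{thm:agpp} it suffices to control $cong_G(\vec{H})$ when designing the sparsifier/decomposition $H$.
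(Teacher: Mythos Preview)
Your proof is correct. The paper actually states this claim without proof, treating it as a standard/folklore fact about flow routing (the intuition---that $\vec{H}$ is the ``hardest'' demand feasible in $H$ to route in $G$---is only mentioned informally later in Section~\ref{sec:alift}). Your two-sided argument is exactly the expected one: composing an optimal $H$-routing of $\vec{d}$ with an optimal $G$-routing of $\vec{H}$ gives the upper bound $cong_G(\vec{d}) \le cong_H(\vec{d})\cdot cong_G(\vec{H})$, and plugging in $\vec{d}=\vec{H}$ together with $cong_H(\vec{H})\le 1$ gives the matching lower bound.
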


\begin{theorem} \label{thm:zdconst}
There is a polynomial time algorithm to construct a distribution $\gamma$ on $\Lambda \times \Pi$ such that $cong_G(\vec{H}) = O(\log k)$ where $H = \sum_{f \in \Lambda, T \in \Pi} \gamma(f, T) G_{f, T}$. 
\end{theorem}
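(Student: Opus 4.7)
The plan is to exhibit $\gamma$ via the FRT probabilistic tree-embedding procedure \emph{restricted to terminal centers}. Given any metric $\delta$ on $V$, the sampler draws a uniformly random permutation of $K$ together with a random scale $\beta \in [1,2]$, and at each exponentially decreasing scale $2^i$ every vertex $u \in V$ is assigned to the first terminal (in permutation order) lying within $\delta$-distance $\beta \cdot 2^i$. This produces a laminar partition of $V$ in which every cluster is labeled by a terminal center; the bottom-level labels define a $0$-extension $f:V\to K$, and collapsing each Steiner internal node of the laminar hierarchy into its labeling terminal yields a tree $T$ on $K$.

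By LP duality, $cong_G(\vec H)$ equals the supremum over metrics $\delta$ on $V$ of $\sum_{(a,b)\in E_H} c_H(a,b)\delta(a,b)\,/\,\sum_{(u,v)\in E}c(u,v)\delta(u,v)$. For a fixed $(f,T)$, unrolling the definition of $c_{f,T}$ rewrites the numerator as $\sum_{(u,v)\in E} c(u,v)\,\delta_T^\delta(f(u),f(v))$, where $\delta_T^\delta$ is the shortest-path distance in $T$ with each tree edge $(a,b)$ weighted by $\delta(a,b)$. It therefore suffices to show, for every metric $\delta$ on $V$, that the centers-in-$K$ FRT sampler produces a random $(f,T)$ with $\E[\delta_T^\delta(f(u),f(v))] \leq O(\log k)\,\delta(u,v)$ for every $u,v \in V$. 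A min-max argument on the bilinear game between distributions $\gamma$ and metrics $\delta$ then supplies a single $\gamma$ attaining $cong_G(\vec H) \leq O(\log k)$ against every $\delta$ at once.

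The distortion bound itself is obtained by running the standard FRT cutting-probability argument while observing that only the $k$ terminals can compete to be a center at any scale, so the usual $\log(\text{competing points})$ factor is capped by $O(\log k)$ rather than $O(\log n)$. The domination property of the FRT tree metric gives $w_T(a,b) \geq \delta(a,b)$ for every tree edge $(a,b)$, which by summation along tree paths yields $\delta_T^\delta(a,b) \leq d_T(a,b)$ for all $a,b \in K$; hence the standard FRT bound $\E[d_T(f(u),f(v))] \leq O(\log k)\,\delta(u,v)$ (derived for the $k$-sized set of eligible centers) transfers immediately to $\delta_T^\delta$.

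For the polynomial-time construction I would solve the exponential-size LP defining the best such $\gamma$ by the multiplicative weights framework: maintain an effective metric $\delta_t$ on $V$, and in each round invoke the FRT-on-$K$ sampler with respect to $\delta_t$ to obtain $(f_t, T_t)$, compute the congestion of $\vec{G_{f_t,T_t}}$ in $G$ via max-concurrent flow, and update $\delta_{t+1}$ multiplicatively on the most congested edges. After $\mathrm{poly}(n,k,1/\epsilon)$ rounds the empirical distribution over $\{(f_t,T_t)\}$ attains $cong_G(\vec H) \leq (1+\epsilon)\,O(\log k)$. The main obstacle is establishing the distortion guarantee for pairs with both endpoints in $V\setminus K$: the classical FRT analysis charges cutting probabilities against balls of eligible centers drawn from the entire metric, so one must rework the book-keeping so that the harmonic-sum factor counts only terminals in $K$, and then use the domination step above to replace $d_T$ with $\delta_T^\delta$.
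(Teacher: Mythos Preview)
Your proposal is correct and follows essentially the same approach as the paper. Both arguments run the FRT hierarchical decomposition using only terminals as eligible centers (so the harmonic-sum factor is $O(\log k)$), pass through a minimax/zero-sum game between distributions on $(f,T)$ and metrics $\delta$ on $V$ to obtain a single distribution $\gamma$, and then make the result constructive by a R\"acke-style multiplicative-weights packing against an evolving metric. The paper phrases the cost analysis via the identity $cost(G_{f,T},\Delta_T)=cost(G_f,\Delta_T)$ whereas you unroll it as $\sum_{(u,v)\in E} c(u,v)\,\delta_T^\delta(f(u),f(v))$, but these are the same computation; likewise the paper's explicit cleanup step that turns the laminar FRT tree into a tree on $K$ is exactly your ``collapse internal Steiner nodes to their labeling terminal'' step, and the paper confirms your identification of the main technical point, namely that the FRT cutting-probability bound must be carried through for pairs $u,v\in V\setminus K$ that are clustered but never act as centers.
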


We want to show that there is a distribution $\gamma$ on $\Lambda \times \Pi$ such that $cong_G(\vec{H}) = O(\log k)$. This will yield a generalization of \cite{R}. So as in \cite{M}, we set up a zero-sum game in which the first player chooses $f, T$ and plays $G_{f, T}$. The second player then chooses some metric space $d: K \times K \rightarrow \Re^+$ s.t. there is some extension of $d$ to a metric space on $V$ s.t. $\sum_{(u,v) \in E} d(u, v) c(u, v) \leq 1$. Then the first player loses $\sum_{(a, b)} c_{f, T}(a, b) d(a, b)$, which we will refer to as the cost of the metric space $d$ against $G_{f, T}$. 

It follows immediately from \cite{M} or \cite{LM} that a bound of $O(\log k)$ on the game value will imply our desired structural result. 

\begin{theorem}
The game value $\nu$ is $O(\log k)$
\end{theorem}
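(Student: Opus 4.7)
\begin{proofsketch}
The plan is to invoke the minimax theorem, rewrite the cost in a form that reveals its structure as a sum over the original edges of $G$, and then construct the required distribution for Player~1 via the FRT tree embedding applied to the terminal set. The strategy sets are convex, the payoff is bilinear, and (after normalization) the admissible metrics form a bounded set, so
\[
\nu \;=\; \sup_d \; \inf_\gamma \; \E_{(f,T)\sim\gamma}\!\left[\sum_{(a,b)\in E(T)} c_{f,T}(a,b)\, d(a,b)\right],
\]
where $d$ ranges over metrics on $V$ with $\sum_{(u,v)\in E} c(u,v)\, d(u,v) \le 1$ and $\gamma$ over distributions on $\Lambda \times \Pi$. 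Thus it suffices to produce, for each admissible $d$, a distribution $\gamma$ of expected cost $O(\log k)$.

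First I would unwind the definitions of $c_{f,T}$ and $c_f$: a term $d(a,b)\, c_f(u,v)$ arises exactly when $(a,b)$ lies on the $T$-path between $u$ and $v$, so swapping the order of summation gives
\[
\sum_{(a,b)\in E(T)} c_{f,T}(a,b)\, d(a,b) \;=\; \sum_{\{u,v\}\subseteq K} c_f(u,v)\, D_T(u,v) \;=\; \sum_{(x,y)\in E} c(x,y)\, D_T(f(x),f(y)),
\]
where $D_T(u,v) := \sum_{(a,b)\in \mathrm{path}_T(u,v)} d(a,b)$ and, by triangle inequality on $d$, $D_T(u,v) \ge d(u,v)$. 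So the task reduces to exhibiting a distribution over $(f,T)$ with $\E[D_T(f(x),f(y))] \le O(\log k)\, d(x,y)$ for every edge $(x,y)\in E$; multiplying by $c(x,y)$ and summing then uses the constraint $\sum c(x,y)d(x,y) \le 1$.

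Given such a $d$, I would construct $(f,T)$ by the Calinescu--Karloff--Rabani / FRT random partition driven by the $k$ terminals inside the ambient metric $(V,d)$: sample a uniformly random permutation $\pi$ of $K$ and a random $\beta \in [1/2,1]$, and at each dyadic scale $2^i$ assign each $v \in V$ to the first terminal in $\pi$'s order whose $d$-ball of radius $\beta\cdot 2^i$ contains it. This produces a laminar hierarchy whose finest nontrivial level places exactly one terminal per cluster; read off $f$ at that level and let $T$ be the tree on $K$ obtained by collapsing each Steiner internal node into one of its descendant terminals. Let $T_0$ be the same hierarchy viewed as an HST with its natural FRT edge weights $\Theta(2^i)$ at level~$i$. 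Since $d \le d_{T_0}$ (the FRT metric dominates the ambient metric), $D_T(u,v) \le d_{T_0}(u,v)$ for all $u,v\in K$, so it suffices to bound $\E[d_{T_0}(f(x),f(y))]$ for $x,y\in V$.

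The main obstacle, and the heart of the proof, is the padding analysis that gives the $\log k$ rather than $\log n$ factor. The standard CKR bound
\[
\Pr[x,y \text{ separated at scale }2^i] \;\le\; O\bigl(d(x,y)/2^i\bigr)\cdot H_k
\]
holds with $H_k$ the $k$-th harmonic number precisely because only the $k$ terminals participate in the random permutation. Summing $2^i \cdot \Pr[\text{separated at scale }2^i]$ over the $O(\log \mathrm{diam}\, d)$ relevant scales yields $\E[d_{T_0}(f(x),f(y))] = O(\log k)\, d(x,y)$ for every pair $x,y\in V$, and therefore $\E[D_T(f(x),f(y))] = O(\log k)\, d(x,y)$. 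Plugging into the rewritten cost and applying $\sum c(x,y)d(x,y) \le 1$ gives $\E_\gamma[\mathrm{cost}] = O(\log k)$, proving $\nu = O(\log k)$.
\end{proofsketch}
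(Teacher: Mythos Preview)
Your approach is essentially the paper's: invoke minimax to reduce to a single metric $d$, run the FRT partition with only the $k$ terminals as centers so the stretch is $O(\log k)$ rather than $O(\log n)$, and read off $(f,T)$ from the resulting hierarchy; the paper carries out exactly this plan and cites \cite{FRT} for the per-edge stretch bound. One place where the paper is more careful than your sketch: the finest level of the terminal-driven FRT hierarchy has \emph{at most} one terminal per cluster, not exactly one, so $f$ is not immediately well-defined on vertices landing in terminal-free leaves; the paper spends a paragraph merging such leaves into nearby terminal-containing clusters (at a constant-factor loss) before extracting $f$ and $T$, and you should do the same.
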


\begin{proof}

We consider an arbitrary strategy $\lambda$ for the second player, which is a
distribution on metric spaces $d$ that can be realized in $G$ with distance
$\times$ capacity units at most $1$. In fact, if we take the average metric
space $\Delta = \sum_{d} \lambda(d) d$, then this metric space can also be
realized in $G$ with at most $1$ unit of  distance $\times$ capacity. 

So we can bound the game value by showing that for all metric spaces $\Delta$ that can be realized with  distance $\times$ capacity units at most $1$, there is a $0$-decomposition $G_{f, T}$ for which the cost against $\Delta$ is at most $O(\log k)$. 

We can prove this by a randomized rounding procedure that is almost the same as the rounding procedure in \cite{FRT}: Scaling up the metric space, we can assume that all distances in the extension of $\Delta$ to a metric space on $V$ have distance at least $1$, and we assume $2^{\delta}$ is an upper bound on the diameter of the metric space. Then we need to first choose a $0$-extension $f$ for which the cost against $\Delta$ is $O(\log k)$ times the cost of realizing $\Delta$ in $G$. We do this as follows:

\vspace{0.5pc}

\begin{algorithmic}
\STATE Choose a random permutation $\pi(1), \pi(2), ..., \pi(k)$ of $K$

\STATE Choose $\beta$ uniformly at random from $[1, 2]$

\STATE $D_\delta \leftarrow \{V\}, i \leftarrow \delta -1$

\WHILE{ $D_{i+1}$ has a cluster which contains more than one terminal}

\STATE $\beta_i \leftarrow 2^{i - 1} \beta$

\FOR{ $\ell = 1$ to $k$}

\FOR{every cluster $S$ in $D_{i + 1}$}

\STATE Create a new cluster of all unassigned vertices in $S$ closer than $\beta_i$ to $\pi(\ell)$

\ENDFOR

\ENDFOR

\STATE $i \leftarrow i - 1$

\ENDWHILE

\end{algorithmic}

\vspace{0.5pc}

Then, exactly as in \cite{FRT}, we can construct a decomposition tree from the rounding procedure. The root node is $V$ corresponding to the partition $D_\delta$, and the children of this node are all sets in the partition $D_{\delta - 1}$. Each successive $D_i$ is a refinement of $D_{i + 1}$, so each set of $D_i$ is made to be a child of the corresponding set in $D_{i + 1}$ that contains it. At each level $i$ of this tree, the distance to the layer above is $2^{i}$, and one can verify that this tree-metric associated with the decomposition tree dominates the original metric space $\Delta$ restricted to the set $K$. Note that the tree metric does not dominate $\Delta$ on $V$, because there are some nodes which are mapped to the same leaf node in this tree, and correspondingly have distance $0$.  

If we consider any edge $(u, v)$, we can bound the expected distance in this tree metric from the leaf node containing $u$ to the leaf-node containing $v$. In fact, this expected distance is only a function of the metric space $\Delta$ restricted to $K \cup \{u, v\}$. Accordingly, for any $(u, v)$, we can regard the metric space that generates the tree-metric as a metric space on just $k + 2$ points. 

Formally, the rounding procedure in \cite{FRT} is:

\vspace{0.5pc}

\begin{algorithmic}
\STATE Choose a random permutation $\pi(1), \pi(2), ..., \pi(n)$ of $V$

\STATE Choose $\beta$ uniformly at random from $[1, 2]$

\STATE $D_\delta \leftarrow \{V\}, i \leftarrow \delta -1$

\WHILE{ $D_{i+1}$  has a cluster that is not a singleton}

\STATE $\beta_i \leftarrow 2^{i - 1} \beta$

\FOR{ $\ell = 1$ to $n$}

\FOR{every cluster $S$ in $D_{i + 1}$}

\STATE Create a new cluster of all unassigned vertices in $S$ closer than $\beta_i$ to $\pi(\ell)$

\ENDFOR

\ENDFOR

\STATE $i \leftarrow i - 1$

\ENDWHILE

\end{algorithmic}

\vspace{0.5pc}

Formally, \cite{FRT} proves a stronger statement than just that the expected distance (according to the tree-metric generated from the above rounding procedure) is $O(\log n)$ times the original distance. We will say that $u, v$ are split at level $i$ if these two nodes are contained in different sets of $D_i$. Let $X_i$ be the indicator variable for this event.

Then the distance in the tree-metric $\Delta_T$ generated from the above
rounding procedure is $\Delta_T(u, v) = \sum_i 2^{i + 1} X_i$. In fact,
\cite{FRT} proves the stronger statement that this is true even if $u, v$ are
not in the metric space (i.e. $u, v \notin V$) but are always grouped in the
cluster which they would be if they were in fact in the set $V$ (provided of
course that they can be grouped in such a cluster). More formally, we set $V' =
V \cup \{u, v\}$ and if the step "Create a new cluster of all unassigned
vertices in $S$ closer than $\beta_i$ to $\pi(\ell)$" is replaced with "Create a
new cluster of all unassigned vertices in V' in $S$ closer than $\beta_i$ to
$\pi(\ell)$", then \cite{FRT} actually proves in this more general context that 

$$\sum_{i} 2^{i + 1} X_i \leq O(\log n) \Delta(u, v)$$

When we input the metric space $\Delta$ restricted to $K$ into the above rounding procedure (but at each clustering stage we consider all of $V$) then we get exactly our rounding procedure. So then the main theorem in \cite{FRT} (or rather our restatement of it) is

(If $\Delta_T$ is the tree-metric generated from the above rounding procedure)

\begin{theorem} \cite{FRT}
For all $u, v$, $E[\Delta_T(u, v)] \leq O(\log k) \Delta(u, v)$.
\end{theorem}

So at the end of the rounding procedure, we have a tree in which each leaf correspond to a subset of $V$ that contains at most $1$ terminal. We are given a tree-metric $\Delta_T$ on $V$ associated with the output of the algorithm, and this tree-metric has the property that $\sum_{(u, v) \in E} c(u, v) \Delta_T(u, v) \leq O(\log k)$. 

We would like to construct a tree $T'$ from $T$ which has only leafs which
contain exactly one terminal. We first state a simple claim that will be
instructive in order to do this:

\begin{claim}~\label{claim:tree}
Given a tree metric $\Delta_T$ on a tree $T$ on $K$, $cost(G_{f, T}, \Delta_T) = cost(G_f, \Delta_T)$. 
\end{claim}

\begin{proof}
The graph $G_{f, T}$ can be obtained from $G_f$ by iteratively re-routing some edge $(a, b) \in E_f$ along the path connecting $a$ and $b$ in $T$ and adding $c_f(a, b)$ capacity to each edge on this path, and finally deleting the edge $(a, b)$. The original cost of this edge is $c(a, b) \Delta_T(a, b)$, and if $a = p_1, p_2, ..., p_r = b$ is the path connecting $a$ and $b$ in $T$, the cost after performing this operation is $c(a, b) \sum_{i = 1}^{r-1} \Delta_T(p_i, p_{i+1}) = c(a, b) \Delta_T(a, b)$ because $\Delta_T$ is a tree-metric. 
\end{proof}

We can think of each edge $(u, v)$ as being routed between the deepest nodes in the tree that contain $u$ and $v$ respectively, and the edge pays $c(u, v)$ times the distance according to the tree-metric on this path. Then we can perform the following procedure: each time we find a node in the tree which has only leaf nodes as children and none of these leaf nodes contains a terminal, we can delete these leaf nodes. This cannot increase the cost of the edges against the tree-metric because every edge (which we regard as routed in the tree) is routed on the same, or a shorter path. After this procedure is done, every leaf node that doesn't contain a terminal contains a parent $p$ that has a terminal node $a$. Suppose that the deepest node in the tree that contains $a$ is $c$ We can take this leaf node, and delete it, and place all nodes in the tree-node $c$. This procedure only affects the cost of edges with one endpoint in the leaf node that we deleted, and at most doubles th
 e cost paid by the edge because distances in the tree are geometrically decreasing. So if we iteratively perform the above steps, the total cost after performing these operations is at most $4$ times the original cost. 

And it is easy to see that this results in a natural $0$-extension in which each node $u$ is mapped to the terminal corresponding to the deepest node that $u$ is contained in. 

Each edge pays a cost proportional to a tree-metric distance between the
endpoints of the edge. So we know that $cost(G_f, \Delta_T) = O(\log k)$ because
the cost increased by at most a factor of $4$ from iteratively performing the
above steps. Yet using the above Claim, we get a $0$-extension $f$ and a tree
$T$ such that $cost(G_{f, T}, \Delta_T) = O(\log k)$ and because $\Delta_T$
dominates $\Delta$ when restricted to $K$, this implies that $cost(G_{f, T},
\Delta) \leq cost(G_{f, T}, \Delta_T) = O(\log k)$ and this implies the bound on
the game value. 

\end{proof}

In turn, using the arguments in \cite{LM}, implies:

\begin{theorem}~\label{thm:zdexist}
There is a distribution $\gamma$ on $\Lambda \times \Pi$ such that $cong_G(\vec{H}) = O(\log k)$ where $H = \sum_{f \in \Lambda, T \in \Pi} \gamma(f, T) G_{f, T}$. 
\end{theorem}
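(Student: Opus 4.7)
\begin{proofsketch}
The plan is to deduce this existential statement directly from the preceding game-value bound by invoking von Neumann's minimax theorem, and then translate the resulting mixed strategy into a congestion bound via LP duality for the maximum concurrent flow problem.

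First I would verify that the minimax theorem applies to the zero-sum game set up in the proof of the previous theorem. Player 1's pure strategy space $\Lambda \times \Pi$ is finite (there are finitely many $0$-extensions and finitely many trees on the $k$-element terminal set $K$). Player 2's pure strategy space is the set of metrics $d: K \times K \to \Re^+$ that admit an extension $\tilde d$ to $V$ with $\sum_{(u,v)\in E} c(u,v)\tilde d(u,v) \le 1$; this set is convex (the extendability and capacity constraints are linear/conic) and compact (by the normalization). The payoff $\sum_{(a,b)} c_{f,T}(a,b)\, d(a,b)$ is bilinear in the players' mixed strategies, so the minimax theorem applies and the game has a common value, which by the previous theorem is at most $O(\log k)$. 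Hence there is a distribution $\gamma$ on $\Lambda \times \Pi$ such that for every admissible metric $d$,
\[
\sum_{f,T} \gamma(f,T) \sum_{(a,b)} c_{f,T}(a,b)\, d(a,b) \;\le\; O(\log k).
\]

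Next I would let $H = \sum_{f,T} \gamma(f,T)\, G_{f,T}$ and observe by linearity that the left-hand side equals $\sum_{(a,b)} c_H(a,b)\, d(a,b)$. So for every metric $d$ on $K$ extending to a metric on $V$ of unit distance-times-capacity cost, the quantity $\sum_{(a,b)} c_H(a,b)\, d(a,b)$ is at most $O(\log k)$. To recast this as a congestion statement, I would invoke LP duality for the maximum concurrent flow problem, which gives
\[
cong_G(\vec H) \;=\; \max_{\delta}\; \frac{\sum_{(a,b)} c_H(a,b)\, \delta(a,b)}{\sum_{(u,v) \in E} c(u,v)\, \delta(u,v)},
\]
where $\delta$ ranges over semi-metrics on $V$. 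For any such $\delta$, rescaling so that the denominator equals $1$ and then restricting to $K$ produces an admissible Player~2 strategy, so the ratio is bounded by $O(\log k)$, giving $cong_G(\vec H) = O(\log k)$ as desired.

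The one subtlety to verify carefully is that Player~2's strategy set lines up exactly with the right-hand side of the max-concurrent-flow dual. The numerator of the dual depends only on $\delta|_K$ because the edges of $H$ lie entirely inside $K$, while the optimal extension of a given $\delta|_K$ to $V$ for minimizing the denominator is precisely the ``extendable metric'' condition built into Player~2's strategy set. This matching, together with the minimax step, is really the only non-routine ingredient; the rest is bookkeeping inherited from the proof of the previous theorem and the standard LP duality for concurrent flow.
\end{proofsketch}
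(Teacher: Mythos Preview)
Your proposal is correct and matches the paper's approach: the paper simply says ``using the arguments in \cite{LM}'' to deduce Theorem~\ref{thm:zdexist} from the game-value bound, and those arguments are precisely the minimax/LP-duality steps you spell out. Your sketch is a faithful (and more explicit) rendering of what the paper defers to citation.
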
 

Also, using the arguments in \cite{R} (because each $G_{f, T}$ is a tree and hence has a unique routing scheme), this gives us an $O(\log k)$-competitive Steiner oblivious routing scheme:

\begin{corollary}
Given $G = (V, E)$ and $K \subset V$, there is a set of unit flows for all $a, b
\in K$ that sends a unit flow from $a$ to $b$, such that given any demand
restricted to $K$, $\vec{d}$, the congestion incurred by this oblivious routing
scheme is $O(\log k)$ times the minimum congestion routing of $\vec{d}$. 
\end{corollary}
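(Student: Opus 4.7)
The plan is to combine Theorem~\ref{thm:zdexist} with the fact that routing on a tree is both unique and optimal. First I would invoke Theorem~\ref{thm:zdexist} to obtain a distribution $\gamma$ on $\Lambda \times \Pi$ with $cong_G(\vec{H}) = O(\log k)$, where $H = \sum_{(f,T)} \gamma(f,T)\, G_{f,T}$. From a min-congestion routing of $\vec{H}$ in $G$, extract, for each edge $(u,v) \in E_H$, a unit flow $\hat\Phi_{u,v}$ from $u$ to $v$ in $G$ such that
$$\sum_{(u,v) \in E_H} c_H(u,v)\,\hat\Phi_{u,v}(e) \;\leq\; O(\log k)\,c(e) \qquad \text{for every } e \in E.$$

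For each pair $(a,b) \in K$, I then define the oblivious unit flow
$$F_{a,b} \;=\; \sum_{(f,T)} \gamma(f,T) \sum_{(u,v) \in \mathrm{path}_T(a,b)} \hat\Phi_{u,v},$$
where $\mathrm{path}_T(a,b)$ is the (unique) edge set of the $a$-$b$ path in $T$. Each inner sum telescopes to a unit flow from $a$ to $b$ in $G$, and since $\sum_{(f,T)} \gamma(f,T) = 1$ the outer convex combination is again a unit flow. This is an honest oblivious scheme: the collection $\{F_{a,b}\}_{a,b \in K}$ is fixed once and for all, independent of the demand.

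To bound the competitive ratio, fix any demand $\vec{d}$ on $K$ with $C := cong_G(\vec{d})$. Swapping the summation order, the flow that $\{F_{a,b}\}$ induces on edge $e \in E$ when routing $\vec{d}$ equals $\sum_{(u,v) \in E_H} W_{u,v}\,\hat\Phi_{u,v}(e)$, where
$$W_{u,v} \;=\; \sum_{(f,T)\,:\,(u,v)\in T} \gamma(f,T) \sum_{(a,b)\,:\,(u,v) \in \mathrm{path}_T(a,b)} d_{a,b}.$$
The key step is the bound $W_{u,v} \leq C\cdot c_H(u,v)$. This follows from two observations: (i) since contracting non-terminal nodes can only help routing, $cong_{G_f}(\vec{d}) \leq cong_G(\vec{d}) = C$ for every $0$-extension $f$, and by the first Claim of this section applied to the Dirac distribution on $(f,T)$ one gets $cong_{G_{f,T}}(\vec{d}) \leq cong_{G_f}(\vec{d}) \leq C$; and (ii) on the tree $G_{f,T}$ the only feasible routing sends exactly the total $T$-cut demand across each tree edge, so the inner sum is at most $C\,c_{f,T}(u,v)$. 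Averaging with weights $\gamma(f,T)$ then yields $W_{u,v} \leq C\sum_{(f,T)} \gamma(f,T)\,c_{f,T}(u,v) = C\,c_H(u,v)$.

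Combining the two pieces, the flow on edge $e$ is at most $C \sum_{(u,v)} c_H(u,v)\,\hat\Phi_{u,v}(e) \leq C \cdot O(\log k)\, c(e)$, which is the desired $O(\log k)$-competitive guarantee. The main conceptual point is the per-tree congestion inequality (i)--(ii): it packages both the contraction-monotonicity of congestion and the uniqueness/optimality of tree routing, so that the ``extra'' $O(\log k)$ factor incurred by implementing $H$-edges as flows in $G$ is the \emph{only} loss; everything else in the argument is rearrangement of sums.
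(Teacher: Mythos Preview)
Your proposal is correct and follows essentially the same approach the paper points to: the paper itself does not give a full proof of this corollary but simply invokes the arguments of R\"acke~\cite{R}, noting that each $G_{f,T}$ is a tree and hence has a unique routing scheme. Your argument is a faithful unpacking of exactly that idea---route $\vec{H}$ in $G$ with congestion $O(\log k)$, route each demand pair along the unique tree path in each $G_{f,T}$, and use the per-tree congestion bound $cong_{G_{f,T}}(\vec d)\leq cong_G(\vec d)$ (contraction plus tree uniqueness) to show that the only loss is the $O(\log k)$ from embedding $H$ into $G$---so there is no methodological difference to report.
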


Actually, the above theorem can be made constructive directly using the techniques in \cite{R}, which build on \cite{Y}. We will not repeat the proof, instead we note the only minor difference in the proof. 

\begin{definition}
Let $\cR$ denote the set of pairs $(G_{f, T}, g)$ where $G_{f, T}$ is a $0$-decomposition of $G$, and $g$ is a function from edges in $G_{f, T}$ to paths in $g$ so that an edge $(a, b)$ in $G_{f, T}$ is mapped to a path connecting $a$ and $b$ in $G$. 
\end{definition}

Given a metric space $\delta$ on $V$, we can define the notion of the cost of a $(G_{f, T}, g)$ against $\delta$:

\begin{definition}
$$cost((G_{f, T}, g), \delta) = \sum_{(a, b) \in E(G_{f, T})} \sum_{(u, v) \in g(a,b)} c_{f,T}(a, b) \delta(u, v) = \sum_{(a, b) \in E(G_{f, T})}  c_{f, T}(a, b) \delta(g(a, b))$$
\end{definition}

\begin{corollary}~\label{cor:zdexist}
For any metric $\delta$ on $V$, there is some $(G_{f, T}, g) \in \cR$ such that:
$$cost((G_{f, T}, g), \delta) \leq O(\log k) \sum_{(u, v)} c(u,v) \delta(u, v)$$
\end{corollary}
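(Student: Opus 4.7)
The plan is to derive the corollary from Theorem~\ref{thm:zdexist} together with LP duality for maximum concurrent flow, plus a careful choice of the routing $g$. The central observation is that $cost((G_{f,T},g),\delta) = \sum c_{f,T}(a,b)\,\delta(g(a,b))$ depends on $\delta$ only through its restriction to the edge set $E$, so there is freedom to replace $\delta$ with an equivalent metric before applying the congestion bound.

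First I would introduce an auxiliary metric $\tilde\delta$ on $V$, defined as the shortest-path metric in $G$ using edge lengths $\delta(u,v)$ for $(u,v)\in E$. Because $\delta$ is itself a metric on $V$, the triangle inequality forces $\tilde\delta(u,v)=\delta(u,v)$ whenever $(u,v)\in E$, and in general $\tilde\delta\geq\delta$ on $V\times V$. Consequently the quantity $\sum_{(u,v)\in E} c(u,v)\delta(u,v)$ is unchanged if $\delta$ is replaced by $\tilde\delta$ inside that sum.

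Next I would invoke Theorem~\ref{thm:zdexist}: the distribution $\gamma$ it produces satisfies $cong_G(\vec H)\leq O(\log k)$ for $H=\sum_\gamma G_{f,T}$. By LP duality for the maximum concurrent flow problem (with $H$ viewed as the demand vector on $\binom{K}{2}$), this is equivalent to the statement that for every dual-feasible pair $(d,D)$ with $d(e)\geq 0$ and $D(a,b)\leq \sum_{e\in P} d(e)$ for every $a$-$b$ path $P$ in $G$, one has $\sum_{(a,b)\in \binom{K}{2}} c_H(a,b) D(a,b) \leq O(\log k)\sum_{e\in E} c(e)d(e)$. Plugging in $d(e)=\delta(e)$ and $D(a,b)=\tilde\delta(a,b)$, which is dual-feasible by the definition of $\tilde\delta$, yields $\sum_{(a,b)} c_H(a,b)\,\tilde\delta(a,b) \leq O(\log k)\sum_{(u,v)\in E} c(u,v)\delta(u,v)$.

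Since $c_H(a,b) = \E_{(f,T)\sim\gamma}[c_{f,T}(a,b)]$, a standard averaging argument then produces a specific $(f,T)$ in the support of $\gamma$ satisfying $\sum_{(a,b)} c_{f,T}(a,b)\,\tilde\delta(a,b) \leq O(\log k)\sum_{(u,v)\in E} c(u,v)\delta(u,v)$. Defining $g(a,b)$ to be any shortest-$\delta$ path from $a$ to $b$ in $G$ makes $\delta(g(a,b))=\tilde\delta(a,b)$, and hence $cost((G_{f,T},g),\delta)=\sum c_{f,T}(a,b)\tilde\delta(a,b)$, matching the required bound. The only real subtlety is the LP-duality translation and the passage from $\delta$ to $\tilde\delta$ needed to align the dual variables with path-lengths actually realized in $G$; all of the combinatorial heavy lifting is already contained in Theorem~\ref{thm:zdexist}, so the corollary is essentially a repackaging of its existence statement.
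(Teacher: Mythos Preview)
Your argument is correct. Both your proof and the paper's start from Theorem~\ref{thm:zdexist} and exploit the bound $cong_G(\vec H)\le O(\log k)$, but they extract the corollary in slightly different ways. The paper takes the optimal low-congestion routing of $\vec H$ in $G$, path-decomposes it, and thereby obtains a distribution $\mu$ on pairs $(G_{f,T},g)\in\cR$ with the \emph{per-edge} guarantee $\E_\mu[\text{load on }e]\le O(\log k)c(e)$; multiplying by $\delta(e)$, summing, and averaging then gives the corollary. You instead fix $\delta$ first, pass to the shortest-path metric $\tilde\delta$, invoke LP duality to bound $\sum c_H(a,b)\tilde\delta(a,b)$, average over $\gamma$ to pick $(f,T)$, and only then choose $g$ as $\delta$-shortest paths. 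The paper's route yields the marginally stronger statement that a single distribution on $\cR$ (with $g$'s coming from the routing, independent of $\delta$) works simultaneously for all metrics, which is convenient for the Racke-style packing argument that follows; your route is a bit more direct and makes the LP-duality content explicit. For the corollary as stated, both are equally valid.
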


\begin{proof}
We can apply Theorem~\ref{thm:zdexist} which implies that there is a distribution $\mu$ on $\cR$ s.t. for all edges $e \in E$, $$E_{(G_{f, T}, g) \leftarrow \mu}[\sum_{(a, b) \in E(G_{f, T}) \mbox{ s.t. } e \ni g(a, b)} c_{f,T}(a,b)] \leq O(\log k) c(e)$$ because we can take the optimal routing of $H = \sum_{f \in \Lambda, T \in \Pi} \gamma(f, T) G_{f, T}$ in $G$, which requires congestion at most $O(\log k)$ and if we compute a path decomposition of the routing schemes of each $G_{f, T}$ in the support of $\gamma$, we can use these to express the routing scheme as a convex combination of pairs from $\cR$. 
\end{proof}

So we can use an identical proof as in \cite{R} to actually construct a distribution $\gamma$ on $0$-decompositions s.t. for $H = \sum_{f \in \Lambda, T \in \Pi} \gamma(f, T) G_{f, T}$ we have  $cong_G(\vec{H}) = O(\log k)$. All we need to modify is the actual packing problem. In \cite{R}, the goal of the packing problem is to pack a convex combination of decomposition trees into the graph $G$ s.t. the expected relative load on any edge is at most $O(\log n)$. Here our goal is to pack a convex combination of $0$-decompositions into $G$. So instead of writing a packing problem over decomposition trees, we write a packing problem over pairs $(G_{f, T}, g) \in \cR$ and the goal is to find a convex combination of these pairs s.t. the relative load on any edge is $O(\log k)$. 

\cite{R} find a polynomial time algorithm by relating the change (when a decomposition tree is added to the convex combination) of the worst-case relative load (actually a convex function that dominates this maximum) to the cost of a decomposition tree against a metric. Analogously, as long as we can always (for any metric space $\delta$ on $V$) find a pair $(G_{f, T}, g)$ as in Corollary~\ref{cor:zdexist} an identical proof as in \cite{R} will give us a constructive version of Theorem~\ref{thm:zdexist}. And we can do this by again using the Theorem due to \cite{FHRT} (which we restated above in a more convenient notation for our purposes). This will give us a $0$-decomposition $G_{f, T}$ for which $\sum_{(a, b)} c_{f, T}(a,b) \delta(a, b) \leq O(\log k) \sum_{(u, v)} c(u,v) \delta(u, v)$ and we still need to choose a routing of $G_{f, T}$ in $G$. We can do this in a easy way: for each edge $(a, b)$ in $G_{f, T}$, just choose the shortest path according to $\delta$ connecting
  $a$ and $b$ in $G$. The length of this path will be $\delta(a, b)$, and so we have that $cost((G_{f, T}, g), \delta) \leq O(\log k) \sum_{(u, v)} c(u,v) \delta(u, v)$ as desired. Then using the proof in \cite{R} in our context, this immediately yields Theorem \ref{thm:zdconst}

\subsection{Applications}

Also, as we noted, this gives us an alternate proof of the main results in
\cite{LLR}, \cite{AR} and \cite{GVY}. We first give an abstract framework into
which these problems all fit:

\vspace{0.5pc}

\begin{theorem1}{Definition}{def:gpp}
We call a fractional packing problem $P$ a graph packing problem if the goal of the dual covering problem $D$ is to minimize the ratio of the total units of distance $\times$ capacity allocated in the graph divided by some monotone increasing function of the distances between terminals.
\end{theorem1}

\vspace{0.5pc}

Let $ID$ denote the integral dual graph covering problem. To make this definition seem more natural, we demonstrate that a number of
well-studied problems fit into this framework. 

\begin{example} \cite{LR}, \cite{LLR}, \cite{AR}
P: maximum concurrent flow; ID: generalized sparsest cut
\end{example}

Here we are given some demand vector $\vec{f} \in \Re^{K \choose 2}$, and the
goal is to maximize the value $r$ such that $r \vec{f}$ is feasible in $G$. Then
the dual to this problem corresponds to minimizing the total distance $\times$
capacity units, divided by $\sum_{(a, b)} \vec{f}_{a, b}
d(a, b)$, where $d$ is the induced semi-metric on $K$. The function in the denominator is
clearly a monotone increasing function of the distances between pairs of
terminals, and hence is an example of what we call a graph packing problem. The generalized sparsest cut problem corresponds to the "integral" constraint on the dual, that the distance function be a cut metric. 

\begin{example} \cite{GVY}
P: maximum multiflow; ID: multicut
\end{example}

Here we are given some pairs of terminals $T \subset { K \choose 2}$, and the goal is to find a flow $\vec{f}$ that can be routed in $G$ that maximizes $\sum_{(a, b) \in T } \vec{f}_{a, b}$. The dual to this problem corresponds to minimizing the total distance $\times$ capacity units divided by $\min_{(a, b) \in T} \{d(a, b) \} $, again where where $d$ is the induced semi-metric on $K$. Also the function in the denominator is again a monotone increasing function of the distances between pairs of terminals, and hence is another an example of what we call a graph packing problem. The multicut problem corresponds to the "integral" constraint on the dual that the distance function be a partition metric. 

\begin{example}
ID: Steiner multi-cut
\end{example}

\begin{example}
ID: Steiner minimum-bisection
\end{example}

\begin{example} \cite{NR}
P: multicast routing; ID: requirement cut
\end{example}

This is another partitioning problem, and the input is again a set of subsets $\{R_i\}_i$. Each subset $R_i$ is also given at requirement $r_i$, and the goal is to minimize the total capacity removed from $G$, in order to ensure that each subset $R_i$ is contained in at least $r_i$ different components. Similarly to the Steiner multi-cut problem, the standard relaxation for this problem  is to minimize the total amount of distance $\times$ capacity units allocated in $G$, s.t. for each $i$ the minimum spanning tree $T_i$ (on the induced metric on $K$)  on every subset $R_i$ has total distance at least $r_i$. Let $\Pi_i$ be the set of spanning trees on the subset $R_i$. Then we can again cast this relaxation in the above framework because the goal is to minimize the total distance $\times$ capacity units divided by $\min_i \{ \frac{ \min_{T \in \Pi_i} \sum_{(a, b) \in T} d(a, b)}{r_i} \} $. The dual to this fractional covering problem is actually a common encoding of multicast
  routing problems, and so these problems as well are examples of graph packing problems. Here the requirement cut problem corresponds to the "integral" constraint that the distance function be a partition metric.

In fact, one could imagine many other examples of interesting problems that fit into this framework. One can regard maximum multiflow as an \emph{unrooted} problem of packing an edge fractionally into a graph $G$, and the maximum concurrent flow problem is a rooted graph packing problem where we are given a fixed graph on the terminals (corresponding to the demand) and the goal is to pack as many copies as we can into $G$ (i.e. maximizing throughput). The dual to the Steiner multi-cut is more interesting, and is actually a combination of rooted and unrooted problems where we are given subset $R_i$ of terminals, and the goal is to maximize the total spanning trees over the sets $R_i$ that we pack into $G$. This is a combination of a unrooted (each spanning tree on any set $R_i$ counts the same) and a rooted problem (once we fix the $R_i$, we need a spanning tree on these terminals). 

Then any other flow-problem that is combinatorially restricted can also be seen to fit into this framework. 

As an application of our theorem in the previous section, we demonstrate that all graph packing problems can be reduced to graph packing problems on trees at the loss of an $O(\log k)$. So whenever we are given a bound on the ratio of the integral covering problem to the fractional packing problem on trees of say $C$, this immediately translates to an $O(C \log k)$ bound in general graphs. So in some sense, these embeddings into distributions on $0$-decompositions helps explain the intrinsic robustness of graph packing problems, and why the integrality gap always seems to be $O(\log k)$. In fact, since we can actually construct these distributions on $0$-decompositions, we obtain an {\sc Abstract Rounding Algorithm} that works for general graph packing problems. 

\vspace{0.5pc}

\begin{theorem1}{Theorem}{thm:agpp}
There is a polynomial time algorithm to construct a distribution $\mu$ on (a polynomial number of) trees on the terminal set $K$, s.t. $$E_{T \leftarrow \mu}[OPT(P, T)] \leq O(\log k) OPT(P, G)$$ and such that any valid integral dual of cost $C$ (for any tree $T$ in the support of $\mu$) can be immediately transformed into a valid integral dual in $G$ of cost at most $C$. 
\end{theorem1}

\vspace{0.5pc}

We first demonstrate that the operations we need to construct a $0$-decomposition only make the dual to a graph packing problem more difficult: Let $\nu(G, K)$ be the optimal value of a dual to a graph packing problem on $G = (V, E)$, $K \subset V$.

\begin{claim}
 Replacing any edge $(u, v)$ of capacity $c(u, v)$ with a path $u = p_1, p_2,
..., p_r = v$, deleting the edge $(u, v)$ and adding $c(u, v)$ units of capacity
along the path does not decrease the optimal value of the dual. 
\end{claim}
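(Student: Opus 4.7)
The plan is to show $\nu(G, K) \leq \nu(G', K)$ by mapping every dual assignment on $G'$ to a dual assignment on $G$ whose ratio is no larger. A dual assignment is an edge-length function $d$ on the graph, inducing shortest-path distances $D(i, j)$ on the terminals, with objective $\sum_e d(e)\, c(e) / g(D)$, where $g$ is monotone increasing in the terminal distances by the definition of a graph packing problem.

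Given $d'$ on $G'$, I would set $d(e) = d'(e)$ for every edge $e$ of $G$ other than $(u, v)$, and $d(u, v) = \sum_{i=1}^{r-1} d'(p_i, p_{i+1})$. Edges of $G$ that lie off the path $p_1,\ldots,p_r$ and are distinct from $(u, v)$ contribute identically to the numerator on both sides. The remaining contribution in $G$ equals
\[
c(u, v)\sum_{i=1}^{r-1} d'(p_i, p_{i+1}) \;+\; \sum_{i=1}^{r-1} c(p_i, p_{i+1})\, d'(p_i, p_{i+1}),
\]
with $c(p_i, p_{i+1})$ taken to be $0$ if that edge is absent from $G$. In $G'$ the path edges have capacity $c(p_i, p_{i+1}) + c(u, v)$ and together contribute $\sum_{i=1}^{r-1} (c(p_i, p_{i+1}) + c(u, v))\, d'(p_i, p_{i+1})$, which matches the above exactly. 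Hence $\sum_e d(e)\, c(e) = \sum_{e'} d'(e')\, c'(e')$.

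For the denominator, I would argue that the shortest-path distance from $i$ to $j$ in $(G, d)$ is at least that in $(G', d')$: any $i$-to-$j$ path in $G$ lifts to an $i$-to-$j$ path in $G'$ of the same total $d'$-length, by replacing each use of the edge $(u, v)$ with the subpath $p_1, p_2, \ldots, p_r$ (whose total $d'$-length equals $d(u, v)$ by construction) and leaving all other edges alone, since those edges lie in both $G$ and $G'$ with identical $d$-value. Hence $D(i, j) \geq D'(i, j)$ for every terminal pair, and monotonicity of $g$ gives $g(D) \geq g(D')$. Combined with the numerator equality, the ratio in $G$ is at most the ratio in $G'$, which proves the claim. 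The main subtlety is purely bookkeeping of which edges and capacities live in which graph (path edges carry the extra $c(u, v)$ in $G'$, and the edge $(u, v)$ exists only in $G$); once that is set up, the numerator equality is one line of algebra, and the distance inequality reduces to the single observation that the edge $(u, v)$ in $G$ is simulated by the subpath in $G'$ with no length overhead.
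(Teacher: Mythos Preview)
Your proof is correct and shares the paper's core idea of directly comparing the dual objective on $G$ and $G'$. There is only a cosmetic difference in where the slack appears. The paper treats the dual variable as a metric on $V$ (so the feasible region is identical for both graphs), holds this metric fixed, and uses the triangle inequality $\sum_{i} d(p_i,p_{i+1}) \geq d(u,v)$ to show that the \emph{numerator} can only go up in passing from $G$ to $G'$, with the denominator unchanged. You instead work with edge lengths, set $d(u,v) = \sum_i d'(p_i,p_{i+1})$ so that the numerators are \emph{equal}, and then push the inequality into the \emph{denominator} via your path-lifting observation --- which is the same triangle inequality read in the other direction. Either packaging finishes the claim in one line once the setup is done.
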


\begin{proof}
We can scale the distance function of the optimal dual so that the monotone increasing function of the distances between terminals is exactly $1$. Then the value of the dual is exactly the total capacity $\times$ distance units allocated. If we maintain the same metric space on the vertex set $V$, then the monotone increasing function of terminal distances is still exactly $1$ after replacing the edge $(u, v)$ by the path $u = p_1, p_2,
..., p_r = v$. However this replacement does change the cost (in terms of the total distance $\times$ capacity units). Deleting the edge reduces the cost by $c(u, v) d(u, v)$, and augmenting along the path increases the cost by $c(u, v) \sum_{i = 1}^{r-1} d(p_i, p_{i + 1})$ which, using the triangle inequality, is at least $c(u, v) d(u, v)$. 
\end{proof}

\begin{claim}
Suppose we join two nodes $u, v$ (s.t. not both of $u, v$ are terminals) into a new node $u'$, and replace each edge into $u$ or $v$ with a corresponding edge of the same capacity into $u'$. Then the optimal value of the dual does not decrease. 
\end{claim}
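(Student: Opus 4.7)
The plan is to take an optimal dual solution on the contracted graph $G'$ and lift it to a feasible dual solution on the original graph $G$ that achieves the same ratio. Since the dual is a minimization, this will establish $\nu(G, K) \le \nu(G', K)$, i.e., the optimal value does not decrease under the contraction.

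Let $\pi: V \to V'$ be the identification map sending $u$ and $v$ to the new node $u'$ and acting as the identity elsewhere. Given any feasible dual solution $d'$ on $G'$ (a semi-metric on $V'$), define its pullback $d$ on $V$ by $d(a, b) = d'(\pi(a), \pi(b))$. Triangle inequality is inherited from $d'$, so $d$ is a pseudo-metric on $V$ (with $d(u, v) = 0$), which is admissible as a dual solution.

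First I would check that the denominator of the ratio is preserved. By hypothesis at most one of $u, v$ is a terminal, so $\pi$ restricts to a bijection between $K$ and the terminal set of $G'$, and $d$ agrees with $d'$ on every pair of terminals. Hence whatever monotone increasing function of pairwise terminal distances appears in the denominator takes the same value under $d$ and $d'$. Next I would check the numerator $\sum_{(a, b) \in E(G)} c(a, b)\, d(a, b)$. Edges not incident to $u$ or $v$ map to themselves in $G'$ and contribute identically. An edge $(u, x)$ with $x \notin \{u, v\}$ maps to an edge $(u', x)$ of the same capacity in $G'$ and contributes $c(u, x)\, d'(u', x)$ in both sums; similarly for $(v, x)$. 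Any edge $(u, v)$ becomes a self-loop in $G'$ contributing $0$, and contributes $c(u, v)\, d(u, v) = c(u, v)\, d'(u', u') = 0$ in $G$ as well. Summing and noting that parallel edges at $u'$ in $G'$ just aggregate capacities, the numerators agree exactly.

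Combining, $d$ is a feasible dual for $G$ whose ratio equals that of $d'$ for $G'$, so taking $d'$ to be an optimum yields $\nu(G, K) \le \nu(G', K)$. There is no real obstacle: the argument is a direct verification. The only point where the hypothesis is used is the denominator calculation; if both $u$ and $v$ were terminals, the contraction would destroy a terminal and alter the set of pairwise terminal distances on which the denominator depends, breaking the preservation of the ratio. The self-loop contingency and the aggregation of parallel edges are handled automatically by the fact that $d(u, v) = 0$ under the lift.
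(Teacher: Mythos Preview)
Your argument is correct. You lift an optimal dual metric from the contracted graph $G'$ back to $G$ via the pullback $d(a,b)=d'(\pi(a),\pi(b))$, then verify that both the numerator (distance $\times$ capacity) and the denominator (the monotone function of terminal distances) are preserved, giving $\nu(G,K)\le \nu(G',K)$. The use of the hypothesis is exactly where you identified it: if both $u$ and $v$ were terminals, $\pi|_K$ would not be a bijection and the terminal-distance pattern would change.

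The paper argues the same inequality from the other side, and more tersely: it views the contraction of $u$ and $v$ as adding an infinite-capacity edge between them in $G$. Since this does not change the set of metrics for which the denominator is at least $1$, and only adds a nonnegative term $\infty\cdot d(u,v)$ to the numerator, the minimum can only go up. In effect the paper shows that every metric on $G$ has cost on $G'$ at least its cost on $G$, whereas you show that every metric on $G'$ lifts to a metric on $G$ of the same cost. These are equivalent (your pullback metrics are exactly the metrics with $d(u,v)=0$, which are the only finite-cost metrics under the infinite-capacity-edge view), so the two arguments are really the same idea in contragredient form. Your version is more explicit about the bookkeeping; the paper's is a one-liner.
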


\begin{proof}
We can equivalently regard this operation as placing an edge of infinite capacity connecting $u$ and $v$, and this operation clearly does not change the set of distance functions for which the monotone increasing function of the terminal distances is at least $1$. And so this operation can only increase the cost of the optimal dual solution.
\end{proof}

We can obtain any $0$-decomposition $G_{f, T}$ from some combination of these operations. So we get that for any $f, T$:

\begin{corollary}
$\nu(G_{f, T}, K) \geq \nu(G, K)$
\end{corollary}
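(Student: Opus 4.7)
The plan is to exhibit $G_{f,T}$ as the result of applying, in sequence, the two operations from the preceding claims starting from $G$, so that the corollary follows immediately by monotonicity along each step.

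First I would reduce $G$ to $G_f$ using only the node-joining operation of the second claim. The $0$-extension $f$ partitions $V$ into clusters $f^{-1}(a)$, one per terminal $a \in K$. I would process the non-terminal vertices one at a time: for each $u \in V \setminus K$, join $u$ with the current representative of the cluster containing $f(u)$. At every such step at least one endpoint of the merger is a non-terminal, so the hypothesis of the second claim (not both endpoints are terminals) is satisfied. After processing all non-terminals, the resulting graph has one node per terminal, and by the definition of $c_f$ as the sum of capacities of all edges of $G$ between $f^{-1}(a)$ and $f^{-1}(b)$, the capacities agree with those of $G_f$. Hence $\nu(G_f, K) \geq \nu(G, K)$.

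Next I would reduce $G_f$ to $G_{f,T}$ using only the path-replacement operation of the first claim. For each edge $(a,b) \in E_f$ of $G_f$, let $a = p_1, p_2, \ldots, p_r = b$ be the unique path from $a$ to $b$ in the tree $T$, and replace $(a,b)$ by this path, adding $c_f(a,b)$ units of capacity along each edge $(p_i, p_{i+1})$ and deleting $(a,b)$. Performing this operation once for each original edge of $G_f$ and summing, the total capacity deposited on a tree edge $(a,b) \in E(T)$ equals $\sum c_f(u,v)$ over all pairs $(u,v)$ whose $T$-path crosses $(a,b)$, i.e.\ over all $u \in T_a, v \in T_b$, which is exactly $c_{f,T}(a,b)$. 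Every intermediate graph satisfies the first claim's hypothesis, so the dual value does not decrease at any step.

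Stringing the two stages together yields $\nu(G_{f,T}, K) \geq \nu(G_f, K) \geq \nu(G, K)$. The only thing to verify is the bookkeeping that the two sequences actually reproduce $G_{f,T}$, which is routine since both merging and path-augmentation act linearly on capacities; there is no substantive obstacle beyond this.
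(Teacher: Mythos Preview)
Your proposal is correct and is exactly the approach the paper takes: the paper simply asserts that any $0$-decomposition $G_{f,T}$ can be obtained from $G$ by a combination of the two operations in the preceding claims, and your two-stage decomposition (first merge non-terminals into their terminal clusters to get $G_f$, then reroute each edge along its $T$-path to get $G_{f,T}$) is precisely the spelled-out version of that assertion.
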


Let $\gamma$ be the distribution on $\Lambda \times \Pi$ s.t.  $H = \sum_{f \in \Lambda, T \in \Pi} \gamma(f, T) G_{f, T}$ and $cong_{G}(\vec{H}) \leq O(\log k)$. 

\begin{lemma}
$E_{(f, T) \leftarrow \gamma}[\nu(G_{f, T}, K)] \leq O(\log k) \nu(G, K)$.
\end{lemma}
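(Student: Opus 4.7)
The plan is to use the optimal dual solution for $G$ itself as a feasible dual solution for each $G_{f,T}$, then apply linearity of expectation and the routing guarantee $cong_G(\vec{H}) \leq O(\log k)$ to obtain the bound.

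First, let $d^*$ be an optimal semi-metric on $V$ witnessing $\nu(G,K)$. Scaling so that the monotone increasing terminal-distance function $F$ satisfies $F(d^*|_K) = 1$, we have $\nu(G,K) = \sum_{e \in E} c(e) d^*(e)$. For any $0$-decomposition $G_{f,T} = (K, E_{f,T})$, the restriction $d^*|_K$ is itself a semi-metric on the vertex set of $G_{f,T}$, and its value of $F$ is still $1$. Since $\nu(G_{f,T}, K)$ is a minimum over all feasible dual metrics, substituting $d^*|_K$ yields the pointwise bound
\[
\nu(G_{f,T}, K) \;\leq\; \sum_{(a,b)} c_{f,T}(a,b)\, d^*(a,b).
\]

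Next, I take expectation over $(f,T) \leftarrow \gamma$ and pull it inside the sum by linearity, using the definition $c_H(a,b) = E_{(f,T) \leftarrow \gamma}[c_{f,T}(a,b)]$:
\[
E_{(f,T) \leftarrow \gamma}[\nu(G_{f,T}, K)] \;\leq\; \sum_{(a,b)} c_H(a,b)\, d^*(a,b).
\]

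Finally, I invoke LP duality for maximum concurrent flow. The fact that $\vec{H}$ can be routed in $G$ with congestion at most $O(\log k)$ is, by the dual program displayed in Section~2, equivalent to the statement that for every semi-metric $d$ on $V$,
\[
\sum_{(a,b)} c_H(a,b)\, d(a,b) \;\leq\; cong_G(\vec{H}) \sum_{e \in E} c(e)\, d(e) \;\leq\; O(\log k) \sum_{e \in E} c(e)\, d(e).
\]
Applying this with $d = d^*$ and combining with the previous displays gives
\[
E_{(f,T) \leftarrow \gamma}[\nu(G_{f,T}, K)] \;\leq\; O(\log k) \sum_{e \in E} c(e)\, d^*(e) \;=\; O(\log k)\, \nu(G,K),
\]
as desired.

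The only genuinely delicate point is the very first step: one must verify that for a general graph packing problem (per Definition~\ref{def:gpp}), the optimum of the dual covering problem really is invariant under pushing a metric from $V$ down to $K$ when the host graph already has vertex set $K$. This follows because the denominator $F$ only depends on terminal distances (so $F(d^*|_K) = F(d^*)$ when we view both as functions of the terminal pairs), while the numerator on $G_{f,T}$ only involves distances between terminals; thus $d^*|_K$ is a legal candidate dual on $G_{f,T}$ with cost exactly $\sum_{(a,b)} c_{f,T}(a,b) d^*(a,b)$. Everything after that is linearity of expectation and the routing bound.
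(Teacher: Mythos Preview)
Your proof is correct and follows essentially the same approach as the paper's: take an optimal dual metric $d^*$ for $G$, use it (restricted to $K$) as a feasible dual on each $G_{f,T}$, and bound the expected cost via the congestion guarantee. The only difference is presentational. The paper explicitly decomposes the routing of $\vec{H}$ into paths and defines a metric $d_{f,T}$ on each tree by averaging path lengths, then rewrites the total expected cost as $\sum_{e} flow_{\vec{H}}(e)\, d^*(e) \leq cong_G(\vec{H}) \sum_e c(e)\, d^*(e)$; you skip this unpacking and invoke the same inequality directly as LP duality for maximum concurrent flow. Your version is slightly cleaner for exactly that reason. One small point worth making explicit in your write-up: when you plug $d^*|_K$ into the dual on the tree $G_{f,T}$, the induced shortest-path metric along tree edges dominates $d^*|_K$ by the triangle inequality, so monotonicity of $F$ is what guarantees $F \geq 1$ there; you allude to this in your final paragraph but it could be stated more directly.
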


\begin{proof}
We know that there is a metric $d$ on $V$ s.t. $\sum_{(u, v)} c(u, v) d(u, v) =
\nu(G, K)$ and that the monotone increasing function of $d$ (restricted to $K$)
is at least $1$. 

We also know that there is a simultaneous routing of each $\gamma(f, T) G_{f,
T}$ in $G$ so that the congestion on any edge in $G$ is $O(\log k)$. Then
consider the routing of one such $\gamma(f, T) G_{f, T}$ in this simultaneous
routing. Each edge $(a, b) \in E_{f, T}$ is routed to some distribution on paths
connecting $a$ and $b$ in $G$. In total $\gamma(f, T) c_f(a, b)$ flow is routed
on some distribution on paths, and consider a path $p$ that carries $C(p)$ total
flow from $a$ to $b$ in the routing of $\gamma(f, T) G_{f, T}$. If the total
distance along this path is $d(p)$, we increment the distance $d_{f, T}$ on the
edge $(a, b)$ in $G_{f, T}$ by $\frac{d(p) C(p)}{\gamma(f, T) c_{f, T}(a, b)}$,
and we do this for all such paths. We do this also for each $(a, b)$ in $G_{f,
T}$. 

If $d_{f, T}$ is the resulting semi-metric on $G_{f, T}$, then this distance function dominates $d$ restricted to $K$, because the distance that we allocate to the edge $(a, b)$ in $G_{f, T}$ is a convex combination of the distances along paths connecting $a$ and $b$ in $G$, each of which is at least $d(a, b)$. 

So if we perform the above distance allocation for each $G_{f, T}$, then each resulting $d_{f, T}, G_{f, T}$ pair satisfies the condition that the monotone increasing function of terminal distances ($d_{f, T}$) is at least $1$. But how much distance $\times$ capacity units have we allocated in expectation? 

$$E_{(f, T) \leftarrow \gamma}[\nu(G_{f, T}, K)] \leq \sum_{f, T} \gamma(f, T) \sum_{(a, b) \in E_{f, T}} c_{f, T}(a, b) d_{f, T}(a, b)$$

We can re-write

$$\sum_{f, T} \gamma(f, T) \sum_{(a, b) \in E_{f, T}} c_{f, T}(a, b) d_{f, T}(a,
b) = \sum_{(a, b) \in E} flow_{\vec{H}}(e) d(a, b) \leq cong_{G}(\vec{H})
\sum_{(a, b) \in E} c(a, b) d(a, b) \leq O(\log k)  \nu(G, K)$$
\end{proof}

And this implies:

\vspace{0.5pc}

\begin{theorem1}{Theorem}{thm:gpp}
For any graph packing problem $P$, the maximum ratio of the integral dual to the
fractional primal is at most $O(\log k)$ times the maximum ratio restricted to
trees. 
\end{theorem1}

\vspace{0.5pc}

And since we can actually construct such a distribution on $0$-decompositions in polynomial time, using Theorem~\ref{thm:zdconst}, this actually gives us an {\sc Abstract Rounding Algorithm}: We can just construct such a distribution on $0$-decompositions, sample one at random, apply a rounding algorithm to the tree to obtain a integral dual on the $0$-decomposition $G_{f, T}$ within $O(\log k)C$ times the value of the primal packing problem on $G$. This integral dual on the $0$-decomposition $G_{f, T}$ can then be easily mapped back to an integral dual on $G$ at no additional cost precisely because we can set the distance in $G$ of any edge $(a, b)$ to be the tree-distance according to the integral dual on $G_{f, T}$ between $a$ and $b$. Using Claim~\ref{claim:tree}, this implies that the cost of the dual in $G_f$ is equal the cost of the dual in $G_{f, T}$. And we can choose an integral dual $\delta'$ in $G$ in which for all $u, v$, $\delta'(u, v) = \delta(f(u), f(v))$ and 
 the cost of this dual $\delta'$ on $G$ is exactly the cost of $G_f$ on $\delta$. And so we have an integral dual solution in $G$ of cost at most $O(\log k) C$ times the cost of the fractional primal packing value in $G$, where $C$ is the maximum integrality gap of the graph packing problem restricted to trees. This yields our {\sc Abstract Rounding Algorithm}:

\vspace{0.5pc}

\begin{theorem1}{Theorem}{thm:agpp}
There is a polynomial time algorithm to construct a distribution $\mu$ on (a polynomial number of) trees on the terminal set $K$, s.t. $$E_{T \leftarrow \mu}[OPT(P, T)] \leq O(\log k) OPT(P, G)$$ and such that any valid integral dual of cost $C$ (for any tree $T$ in the support of $\mu$) can be immediately transformed into a valid integral dual in $G$ of cost at most $C$. 
\end{theorem1}

\vspace{0.5pc}

\begin{corollary}
If there is a $C$-approximation algorithm for a graph partitioning problem restricted to trees, then there is an $O(C\log k)$ approximation algorithm for the graph partitioning problem in general graphs.
\end{corollary}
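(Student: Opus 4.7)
The plan is to invoke Theorem~\ref{thm:agpp} to pass from $G$ to a polynomial collection of trees, solve the partitioning problem on each tree with the assumed $C$-approximation algorithm, and lift the best solution back to $G$. Interpreting a graph partitioning problem as the integral dual $ID$ of some graph packing problem $P$, and interpreting the $C$-approximation on trees as producing a feasible integral dual of cost at most $C\cdot OPT(P,T)$ (this is the standard LP-relative guarantee, which is how all the worked examples like multicut and requirement cut are analyzed), the argument will be essentially a one-line averaging calculation given Theorem~\ref{thm:agpp}.

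Concretely, I would first run the polynomial-time constructive procedure from Theorem~\ref{thm:agpp} on the instance $(G,K)$ to obtain the distribution $\mu$ supported on a polynomial-size family $\mathcal{T}$ of trees on $K$, together with the associated lifting that converts any valid integral dual $\delta_T$ on a tree $T\in\mathcal{T}$ into a valid integral dual $\delta_G$ on $G$ of cost at most $\mathrm{cost}_T(\delta_T)$. For each $T\in\mathcal{T}$ I would run the assumed $C$-approximation to obtain some $\delta_T$ with $\mathrm{cost}_T(\delta_T)\le C\cdot OPT(P,T)$, lift it to $\delta_G^{(T)}$ on $G$ of cost at most $C\cdot OPT(P,T)$, and output the $\delta_G^{(T)}$ of minimum cost. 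The expected cost under $T\sim\mu$ is at most $C\cdot \E_{T\sim\mu}[OPT(P,T)] \le O(C\log k)\cdot OPT(P,G)$ by Theorem~\ref{thm:agpp}, so the best of the polynomially many candidates does at least this well. Since $OPT(P,G)$ lower-bounds the optimal integral dual value in $G$ by weak LP duality for the packing--covering pair, this yields an $O(C\log k)$-approximation to the integer optimum in $G$.

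The plan faces essentially no real obstacle because Theorem~\ref{thm:agpp} has already carried out all the heavy lifting -- the lifting map, the bound on the expected LP value, and the polynomial running time are all in place. The only bookkeeping point deserving care is the interpretation of ``$C$-approximation on trees'': if the tree-case algorithm is stated relative to the integral optimum rather than the LP, one notes that the relevant quantity in Theorem~\ref{thm:gpp} is precisely the integral-dual-to-fractional-primal ratio on trees, so the two notions of $C$ differ only by constants that can be absorbed. With that convention fixed, the corollary is an immediate consequence of Theorem~\ref{thm:agpp}.
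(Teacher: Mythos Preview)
Your proposal is correct and follows essentially the same route as the paper: both derive the corollary directly from Theorem~\ref{thm:agpp} by solving the integral dual on each tree in the support of $\mu$, lifting back to $G$ at no extra cost, and comparing against the fractional primal $OPT(P,G)$ via weak duality. The only cosmetic difference is that the paper's generic algorithm samples a single tree from $\mu$ (giving a randomized guarantee), whereas you enumerate the polynomial-size support and take the best lifted solution; since the support is polynomial this is a trivial derandomization of the same argument, and your remark about the LP-relative versus integral-relative reading of ``$C$-approximation on trees'' is, if anything, more careful than the paper's own discussion.
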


So, there is a natural, generic algorithm  associated with this theorem : 
\begin{algorithmic}[1]
\STATE Decompose $G$ into an $O(\log k)$-oblivious distribution $\mu$ of 0-decompostion trees;
\STATE Randomly select a tree $G_{f,\tau}$ from the distribution $\mu$;
\STATE Solve the problem on the tree $G_{f, \tau}$, let $\delta$ be the metric the algorithm output;
\STATE Return $(\delta, f)$.
\end{algorithmic}

 For example, this gives a generic algorithm that achieves an $O(\log k)$ guarantee for \emph{both} generalized sparsest cut and multicut. The previous techniques for rounding a fractional solution to generalized sparsest cut \cite{LLR}, \cite{AR} rely on metric embedding results, and the techniques for rounding fractional solutions to multicut \cite{GVY} rely on purely combinatorial, region-growing arguments. Yet, through this theorem, we can give a unified rounding algorithm that achieves an $O(\log k)$ guarantee for both 
 of these problems, and more generally for graph packing problems (whenever the integrality gap restricted to trees is a constant). 




\newpage


\section{Acknowledgments}

We would like to thank Swastik Kopparty, Ryan O'Donnell and Yuval Rabani for many helpful discussions.

\nocite{JLS}

\bibliographystyle{latex4}
\bibliography{latex4}

\begin{thebibliography}{10}\setlength{\itemsep}{-1ex}\small

\bibitem{AR}
Y.~Aumann and Y.~Rabani.
\newblock An ${O}(\log k)$ approximate min-cut max-flow theorem and
  approximation algorithm.
\newblock {\em SIAM Journal on Computing}, 27:291--301, 1998.

\bibitem{Bec}
W.~Beckner.
\newblock Inequalities in fourier analysis.
\newblock {\em Annals of Mathematics}, pages 159--182, 1975.

\bibitem{BK}
A.~A. Bencz{\'u}r and D.~R. Karger.
\newblock Approximating s-t minimum cuts in \, {O}(n2) time.
\newblock In {\em STOC '96: Proceedings of the twenty-eighth annual ACM
  symposium on Theory of computing}, pages 47--55, New York, NY, USA, 1996.
  ACM.

\bibitem{Bon}
A.~Bonami.
\newblock Etude des coefficients de foureir des fonctions de $l^p(g)$.
\newblock {\em Annales de ''institut fourier}, pages 335--402, 1970.

\bibitem{Bou}
J.~Bourgain.
\newblock On the distribution of the fourier spectrum of boolean functions.
\newblock {\em Israel Journal of Mathematics}, pages 269--276, 2002.

\bibitem{CKR}
G.~Calinescu, H.~Karloff, and Y.~Rabani.
\newblock Approximation algorithms for the 0-extension problem.
\newblock {\em Symposium on Discrete Algorithms}, pages 8--16, 2001.

\bibitem{DFKO}
I.~Dinur, E.~Friedgut, G.~Kindler, and R.~O'Donnell.
\newblock On the fourier tails of bounded functions over the discrete cube.
\newblock {\em Israel Journal of Mathematics}, pages 389--412, 2007.

\bibitem{FHRT}
J.~Fakcharoenphol, C.~Harrelson, S.~Rao, and K.~Talwar.
\newblock An improved approximation algorithm for the $0$-extension problem.
\newblock {\em Symposium on Discrete Algorithms}, pages 257--265, 2003.

\bibitem{FRT}
J.~Fakcharoenphol, S.~Rao, and K.~Talwar.
\newblock A tight bound on approximating arbitrary metrics by tree metrics.
\newblock In {\em STOC '03: Proceedings of the thirty-fifth annual ACM
  symposium on Theory of computing}, pages 448--455, New York, NY, USA, 2003.
  ACM.

\bibitem{GVY}
N.~Garg, V.~Vazirani, and M.~Yannakakis.
\newblock Approximate max-flow min-(multi)cut theorems and their applications.
\newblock {\em SIAM Journal on Computing}, 25:235--251, 1996.

\bibitem{JLS}
W.~Johnson, J.~Lindenstrauss, and G.~Schechtman.
\newblock Extensions of lipschitz maps into banach spaces.
\newblock {\em Israel Journal of Mathematics}, pages 129--138, 1986.

\bibitem{K}
A.~Karzanov.
\newblock Minimum 0-extensions of graph metrics.
\newblock {\em European Journal of Combinatorics}, pages 71--101, 1998.

\bibitem{KN}
S.~Khot and A.~Naor.
\newblock Nonembeddability theorems via fourier analysis.
\newblock {\em Foundations of Computer Science}, pages 101--112, 2005.

\bibitem{KV}
S.~Khot and N.~Vishnoi.
\newblock The unique games conjecture, integrality gap for cut problems and
  embeddability of negative type metrics into $\ell_1$.
\newblock {\em Foundations of Computer Science}, pages 53--62, 2005.

\bibitem{LM}
T.~Leighton and A.~Moitra.
\newblock Extensions and limits to vertex sparsification.
\newblock {\em Symposium on Theory of Computing}, 2010, to appear.

\bibitem{LR}
T.~Leighton and S.~Rao.
\newblock Multicommodity max-flow min-cut theorems and their use in designing
  approximation algorithms.
\newblock {\em Journal of the ACM}, pages 787--832, 1999.

\bibitem{LLR}
N.~Linial, E.~London, and Y.~Rabinovich.
\newblock The geometry of graphs and some of its algorithmic applications.
\newblock {\em Combinatorica}, 15:215--245, 1995.

\bibitem{M}
A.~Moitra.
\newblock Approximation algorithms for multicommodity-type problems with
  guarantees independent of the graph size.
\newblock {\em Foundations of Computer Science}, pages 3--12, 2009.

\bibitem{NR}
V.~Nagarajan and R.~Ravi.
\newblock Approximation algorithms for requirement cut on graphs.
\newblock In {\em In APPROX + RANDOM}, pages 209--220, 2005.

\bibitem{O}
R.~O'Donnell.
\newblock Some topics in the analysis of boolean functions.
\newblock {\em Symposium on Theory of Computing}, pages 569--578, 2008.

\bibitem{R}
H.~R$\ddot{a}$cke.
\newblock Optimal hierarchical decompositions for congestion minimization in
  networks.
\newblock {\em Symposium on Theory of Computing}, pages 255--264, 2008.

\bibitem{Y}
N.~E. Young.
\newblock Sequential and parallel algorithms for mixed packing and covering.
\newblock In {\em In 42nd Annual IEEE Symposium on Foundations of Computer
  Science}, pages 538--546, 2001.

\end{thebibliography}


\newpage

\appendix
\section{Harmonic Analysis}\label{sec:aha}

We consider the group $F_2^d = \{-1, +1\}^d$ equipped with the group operation $s \circ t = [s_1 * t_1, s_2 * t_2, ... s_d * t_d] \in F_2^d$. Any subset $S \subset [d]$ defines a character $\chi_S(x) = \prod_{i \in S} x_i : F_2^d \rightarrow \{-1, +1\}$. See \cite{O} for an introduction to the harmonic analysis of Boolean functions. 

Then any function $f: \{-1, +1\}^d \rightarrow \Re$ can be written as:
$$f(x) = \sum_S \hat{f}_S \chi_S(x)$$

\begin{fact}
For any $S, T \subset [d]$ s.t. $S \neq T$, $E_x [\chi_S(x) \chi_T(x)] = 0$
\end{fact}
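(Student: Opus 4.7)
The plan is to exploit the multiplicative structure of the characters together with the fact that each coordinate $x_i \in \{-1,+1\}$ satisfies $x_i^2 = 1$ and, under the uniform distribution on $\{-1,+1\}^d$, has expectation $0$ with the coordinates independent of each other.

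First, I would simplify the product $\chi_S(x)\chi_T(x)$. By the definition of the characters,
\[
\chi_S(x)\chi_T(x) \;=\; \prod_{i \in S} x_i \;\cdot\; \prod_{i \in T} x_i \;=\; \prod_{i \in S \cap T} x_i^2 \;\cdot\; \prod_{i \in S \triangle T} x_i \;=\; \prod_{i \in S \triangle T} x_i \;=\; \chi_{S \triangle T}(x),
\]
using $x_i^2 = 1$ for every $i$. So the expectation we want is $\mathbb{E}_x[\chi_{S \triangle T}(x)]$.

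Next, I would use $S \neq T$, which gives $S \triangle T \neq \emptyset$, so we may pick some coordinate $j \in S \triangle T$. Since the coordinates of $x$ are independent and uniform on $\{-1,+1\}$,
\[
\mathbb{E}_x[\chi_{S\triangle T}(x)] \;=\; \prod_{i \in S \triangle T} \mathbb{E}_{x_i}[x_i] \;=\; 0,
\]
because the factor corresponding to index $j$ (and in fact every factor) equals $\tfrac{1}{2}(+1) + \tfrac{1}{2}(-1) = 0$. Combining with the first step gives $\mathbb{E}_x[\chi_S(x)\chi_T(x)] = 0$, as claimed.

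There is no real obstacle here: the only things being used are the identity $x_i^2 = 1$, independence of coordinates under the uniform measure, and the hypothesis $S \neq T$ to guarantee the symmetric difference is nonempty. The one small point worth stating carefully is the passage from $\chi_S \chi_T$ to $\chi_{S \triangle T}$, which is where the group structure on $F_2^d$ enters the picture.
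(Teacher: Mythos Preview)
Your proof is correct and is the standard argument. The paper states this as a \emph{Fact} without proof, so there is no proof in the paper to compare against; your argument is exactly what one would supply.
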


For any $p > 0$, we will denote the $p$-norm of $f$ as $||f||_p = \Big( E_x[f(x)^p] \Big)^{1/p}$. Then

\begin{theorem}
[Parseval]
$$\sum_S \hat{f}_S^2 = E_x[f(x)]^2] = ||f||_2^2$$
\end{theorem}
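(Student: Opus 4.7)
The plan is to deduce Parseval's identity directly from the two facts already stated in the appendix: the Fourier expansion $f(x) = \sum_S \hat{f}_S \chi_S(x)$ (which the excerpt asserts exists) and the orthogonality relation $E_x[\chi_S(x) \chi_T(x)] = 0$ for $S \neq T$. The middle equality $E_x[f(x)^2] = \|f\|_2^2$ is just the defining equation of the $2$-norm given in the appendix, so the only content is the left equality.

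First I would substitute the Fourier expansion into $E_x[f(x)^2]$, obtaining
$$E_x[f(x)^2] = E_x\Bigl[\bigl(\sum_{S} \hat{f}_S \chi_S(x)\bigr)\bigl(\sum_{T} \hat{f}_T \chi_T(x)\bigr)\Bigr] = \sum_{S, T} \hat{f}_S \hat{f}_T \, E_x[\chi_S(x)\chi_T(x)],$$
where I have used linearity of expectation to pull the (finite) double sum out. Next I would split this double sum into the diagonal terms $S = T$ and the off-diagonal terms $S \neq T$. By the orthogonality fact quoted just above the theorem, every off-diagonal term vanishes, leaving only $\sum_{S} \hat{f}_S^2 \, E_x[\chi_S(x)^2]$.

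Finally I would observe that $\chi_S(x) = \prod_{i \in S} x_i \in \{-1, +1\}$ for every $x \in \{-1,+1\}^d$, so $\chi_S(x)^2 \equiv 1$ and hence $E_x[\chi_S(x)^2] = 1$. Combining gives $E_x[f(x)^2] = \sum_S \hat{f}_S^2$, which together with the definition $\|f\|_2 = (E_x[f(x)^2])^{1/2}$ from the appendix yields the full chain of equalities in the statement.

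There is essentially no obstacle here: the argument is a one-line orthogonality computation and the only subtlety is making sure the identity $\chi_S(x)^2 = 1$ is invoked explicitly (this is what makes the orthonormal basis actually orthonormal, not merely orthogonal). No convergence issues arise because the sum over $S \subseteq [d]$ is finite.
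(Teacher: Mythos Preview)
Your argument is correct and is exactly the standard derivation of Parseval's identity from orthogonality of the characters. Note, however, that the paper does not actually give its own proof of this statement: Parseval's theorem is simply quoted in the appendix as a standard fact from the harmonic analysis of Boolean functions, with no accompanying argument. So there is nothing in the paper to compare your proof against; you have supplied the (elementary) details that the authors omitted as well-known background.
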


\begin{definition}
Given $-1 \leq \rho \leq 1$, Let $y \sim_\rho x$ denote choosing $y$ depending on $x$ s.t. for each coordinate $i$, $E[y_i x_i] = \rho$. 
\end{definition}

\begin{definition}
Given $-1 \leq \rho \leq 1$, the operator $T_\rho$ maps functions on the Boolean cube to functions on the Boolean cube, and for $f:  \{-1, +1\}^d \rightarrow \Re$, $T_\rho(f(x)) = E_{y \sim_\rho x} [f(y)]$. 
\end{definition}

\begin{fact}
$T_\rho(\chi_S(x)) = \chi_S(x) \rho^{|S|}$
\end{fact}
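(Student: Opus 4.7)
The plan is to unfold the definitions of $T_\rho$ and $\chi_S$ and then exploit the fact that under the correlated distribution $y \sim_\rho x$ the coordinates of $y$ are conditionally independent given $x$. By definition, $T_\rho(\chi_S(x)) = \E_{y \sim_\rho x}[\chi_S(y)] = \E_{y \sim_\rho x}\bigl[\prod_{i \in S} y_i\bigr]$, so the first step is to justify pulling the product outside the expectation.

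Next, I would observe that in the standard $\rho$-correlated coupling, the coordinates $y_i$ are generated independently conditional on $x$: each $y_i$ depends only on $x_i$ and satisfies $\E[y_i x_i] = \rho$. Since $y_i, x_i \in \{-1,+1\}$, the condition $\E[y_i x_i] = \rho$ together with $x_i^2 = 1$ forces $\E[y_i \mid x_i] = \rho x_i$. Conditional independence across coordinates then gives
\[
\E_{y \sim_\rho x}\Bigl[\prod_{i \in S} y_i \Bigr] = \prod_{i \in S} \E[y_i \mid x_i] = \prod_{i \in S} \rho\, x_i = \rho^{|S|} \prod_{i \in S} x_i = \rho^{|S|}\chi_S(x).
\]

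The only subtlety is making sure the coupling $y \sim_\rho x$ is the coordinate-wise independent one (which is the standard definition; the paper's Definition fixes only the pairwise correlation $\E[y_i x_i] = \rho$, and coordinate-wise independence is the implicit convention needed for the operator $T_\rho$ to act diagonally on characters). Once that is in place the computation is a one-line consequence of linearity of expectation and independence, so I would not expect any real obstacle.
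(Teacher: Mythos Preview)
Your argument is correct and is the standard derivation. The paper states this as a \emph{Fact} without proof, so there is nothing to compare against; your write-up supplies exactly the expected justification.
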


In fact, because $T_\rho$ is a linear operator on functions, we can use the Fourier representation of a function $f$ to easily write the effect of applying the operator $T_\rho$ to the function $f$:

\begin{corollary}
$T_\rho(f(x)) = \sum_S \rho^{|S|} \hat{f}_S \chi_S(x)$
\end{corollary}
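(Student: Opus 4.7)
The plan is to apply the operator $T_\rho$ directly to the Fourier expansion of $f$ and exploit its linearity. First I would write $f$ in the Fourier representation stated at the outset of this appendix, namely $f(x) = \sum_S \hat{f}_S \chi_S(x)$. Since $T_\rho$ is defined by $T_\rho(g(x)) = \E_{y \sim_\rho x}[g(y)]$, i.e.\ as an expectation over the noisy copy $y \sim_\rho x$, it is a linear operator on real-valued functions on $\{-1,+1\}^d$: it commutes with finite sums and scalar multiplication because expectation does. Applying this linearity to the Fourier expansion gives
$$T_\rho(f(x)) \;=\; T_\rho\!\left(\sum_S \hat{f}_S \chi_S(x)\right) \;=\; \sum_S \hat{f}_S \, T_\rho(\chi_S(x)).$$

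Next I would substitute the preceding Fact, $T_\rho(\chi_S(x)) = \rho^{|S|}\chi_S(x)$, which packages the elementary coordinatewise calculation that each coordinate in $S$ contributes an independent factor of $\rho$ under the noise operator (since $\E[y_i x_i]=\rho$ and the coordinates of $y$ are conditionally independent given $x$). Substituting yields immediately the claimed identity $T_\rho(f(x)) = \sum_S \rho^{|S|} \hat{f}_S \chi_S(x)$.

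There is essentially no obstacle here. The only point that needs even nominal justification is the interchange of $T_\rho$ with the finite sum, which is just linearity of expectation and hence entirely routine. Conceptually the corollary is simply recording that the characters $\{\chi_S\}_{S \subseteq [d]}$ form an eigenbasis of $T_\rho$ with eigenvalues $\rho^{|S|}$, and the Fourier expansion of $f$ makes this diagonalization of the noise operator explicit; this is exactly the form in which the operator will be used in the analysis of the noise-sensitive cut-sparsifier (for example in Lemma~\ref{lemma:fcut} and Lemma~\ref{lemma:imb}).
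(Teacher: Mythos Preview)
Your proposal is correct and follows exactly the same approach as the paper: the paper simply notes that $T_\rho$ is linear and invokes the preceding Fact on characters, which is precisely what you do (with a bit more detail than the paper bothers to give).
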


\begin{definition}
The \emph{Noise Stability} of a function $f$ is $NS_{\rho}(f) = E_{x, y \sim_{\rho} x}[f(x) f(y)]$
\end{definition}

\begin{fact}
$NS_\rho(f) = \sum_S \rho^{|S|} \hat{f}_S^2$
\end{fact}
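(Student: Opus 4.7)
The plan is to derive the identity by rewriting the noise stability as an inner product of $f$ with $T_\rho f$, then expanding both functions in the Fourier basis and invoking orthogonality of the characters $\chi_S$. All the ingredients are already stated as facts or corollaries earlier in the appendix, so the argument is essentially a bookkeeping computation.

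First I would use the tower property of conditional expectation to rewrite
\[
NS_\rho(f) \;=\; \E_{x,\,y\sim_\rho x}[f(x)f(y)] \;=\; \E_x\!\left[f(x)\,\E_{y\sim_\rho x}[f(y)]\right] \;=\; \E_x[f(x)\,T_\rho f(x)],
\]
using the definition of $T_\rho$ given just above the fact. Next I would substitute the Fourier expansion $f(x)=\sum_S \hat f_S \chi_S(x)$ on one side and, using the Corollary that $T_\rho f(x)=\sum_T \rho^{|T|}\hat f_T \chi_T(x)$, substitute on the other side. This gives
\[
NS_\rho(f) \;=\; \sum_{S,T} \hat f_S\,\hat f_T\,\rho^{|T|}\,\E_x[\chi_S(x)\chi_T(x)].
\]

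Finally I would apply the orthogonality fact: $\E_x[\chi_S(x)\chi_T(x)]=0$ whenever $S\neq T$, while $\E_x[\chi_S(x)^2]=1$ because each character takes values in $\{-1,+1\}$. Only the diagonal terms $S=T$ survive, and the double sum collapses to $\sum_S \rho^{|S|}\hat f_S^2$, which is the claimed identity.

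There is no genuine obstacle here: the proof is a three-line application of results that are explicitly stated just before the fact. The only thing one has to be a little careful about is justifying the interchange of the expectation over $(x,y)$ with the double sum over $S,T$, which is immediate because $f$ is a function on a finite set and the Fourier expansion is a finite sum. One could equivalently present the proof in a single line by invoking Parseval/Plancherel in the form $\E_x[g(x)h(x)]=\sum_S \hat g_S \hat h_S$ with $g=f$ and $h=T_\rho f$, but since Plancherel in that form is not stated explicitly in the appendix (only the $g=h$ case is given as Parseval), the cleaner route is the direct expansion above.
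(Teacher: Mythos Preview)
Your proof is correct and complete; it is precisely the standard derivation of this identity from orthogonality of characters and the formula for $T_\rho f$. Note that the paper does not actually prove this statement---it is recorded as a \textbf{Fact} without proof in the appendix---so there is no alternative argument in the paper to compare against.
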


\begin{theorem}
[Hypercontractivity] \cite{Bon} \cite{Bec}
For any $q \geq p \geq 1$, for any $\rho \leq \sqrt{\frac{p-1}{q-1}}$
$$||T_\rho f||_q \leq ||f||_p$$
\end{theorem}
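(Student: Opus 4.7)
The plan is to prove the $d$-dimensional statement by first reducing it to the one-dimensional (two-point) inequality and then establishing that two-point inequality directly. The standard template: hypercontractivity tensorizes, so all the real work sits in dimension one.

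First I would observe that $T_\rho$ acts coordinatewise, i.e.\ $T_\rho = T_\rho^{(1)} \otimes \cdots \otimes T_\rho^{(1)}$ where $T_\rho^{(1)}$ is the analogous operator on functions of a single $\pm 1$ variable. Given a one-dimensional bound $\|T_\rho^{(1)} g\|_q \leq \|g\|_p$, I would tensorize by induction on $d$: write $f(x_1,\dots,x_d)$, fix the last coordinate, and apply the $(d-1)$-dimensional inductive hypothesis to the partial function $f(\cdot, x_d)$. This controls the $L^q$ norm of $T_\rho f$ in the first $d-1$ coordinates by the $L^p$ norm of $T_\rho^{(1)}$ applied in the last coordinate to a function taking values in $L^p$ of the first $d-1$ coordinates. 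Then I would use Minkowski's integral inequality (valid because $q \geq p$) to swap the order of the norms, and finally apply the one-dimensional bound on the outer variable. This is the standard ``associativity of norms'' argument and is routine once the one-dimensional case is in hand.

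The main obstacle is the two-point inequality. For $f:\{-1,+1\}\to\mathbb{R}$, write $f(x)=a+bx$, so $T_\rho^{(1)} f(x)=a+\rho b x$. I need
\[
\left(\tfrac{1}{2}|a+\rho b|^q+\tfrac{1}{2}|a-\rho b|^q\right)^{1/q}\leq\left(\tfrac{1}{2}|a+b|^p+\tfrac{1}{2}|a-b|^p\right)^{1/p}
\]
whenever $\rho^2\leq (p-1)/(q-1)$. My plan is to normalize by homogeneity so that $a=1$ (the case $a=0$ being trivial) and to set $t=b$, reducing the claim to
\[
\phi(t):=\frac{\bigl(\tfrac12(1+b)^p+\tfrac12(1-b)^p\bigr)^{1/p}}{\bigl(\tfrac12(1+\rho b)^q+\tfrac12(1-\rho b)^q\bigr)^{1/q}}\geq 1
\]
for all $b$ in the relevant range. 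I would verify this by showing $\phi(0)=1$ and then comparing Taylor expansions around $b=0$: the quadratic term on the numerator is $\tfrac{p-1}{2}b^2$ while on the denominator it is $\tfrac{q-1}{2}\rho^2 b^2$, so the threshold $\rho^2\leq (p-1)/(q-1)$ is exactly what makes the leading correction nonnegative. To handle the higher order terms (and the region where $|b|$ is close to $1$), I would pass to logarithms and show that $\frac{1}{p}\log\mathbb{E}|1+bX|^p - \frac{1}{q}\log\mathbb{E}|1+\rho b X|^q$ is nondecreasing in $|b|$ by differentiating once, reducing the inequality to a pointwise statement that can be checked by elementary calculus (or, equivalently, by the standard Bonami argument that compares the two sides term by term in the power series after writing $|1+u|^r$ via its binomial expansion for $|u|<1$).

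Once the two-point inequality is established, the tensorization step completes the proof, yielding $\|T_\rho f\|_q\leq\|f\|_p$ for all $f:\{-1,+1\}^d\to\mathbb{R}$. I expect the algebraic verification of the two-point inequality to be where essentially all of the content lies; the rest is bookkeeping.
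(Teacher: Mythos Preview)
The paper does not supply its own proof of this theorem: it is quoted as a black box from \cite{Bon}, \cite{Bec}, with a pointer to expositions in \cite{O} and \cite{DFKO}. So there is no ``paper's proof'' to compare against; the authors simply invoke the result.

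That said, your outline is the classical Bonami route and is essentially correct. Tensorization via induction on $d$ together with Minkowski's integral inequality (using $q\geq p$) is exactly how one reduces to the two-point case, and your identification of the quadratic coefficients $\tfrac{p-1}{2}$ versus $\tfrac{q-1}{2}\rho^2$ correctly pinpoints why the threshold $\rho^2\leq (p-1)/(q-1)$ appears. The one place where your plan is genuinely underspecified is the full two-point inequality beyond the quadratic order: the Taylor argument at $b=0$ shows only that the inequality holds locally, and the sentence about ``differentiating once'' and ``checking by elementary calculus'' hides real work. The clean way to finish (and what Bonami actually does) is to expand both $\bigl(\tfrac12|1+b|^p+\tfrac12|1-b|^p\bigr)$ and $\bigl(\tfrac12|1+\rho b|^q+\tfrac12|1-\rho b|^q\bigr)$ as power series in $b$, raise to the appropriate powers, and compare coefficients term by term; alternatively one can use the ``two-function'' formulation and a convexity/log-Sobolev argument. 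Either way, the case $|b|\geq 1$ (where $1\pm b$ can change sign) needs a separate one-line reduction via symmetry and monotonicity. If you flesh out that step, the proposal is a complete and standard proof.
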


A statement of this theorem is given in \cite{O} and \cite{DFKO} for example.

\begin{definition}
A function $g: \{-1, +1\}^d \rightarrow \Re$ is a $j$-junta if there is a set $S \subset [d]$ s.t. $|S| \leq j$ and $g$ depends only on variables in $S$ - i.e. for any $x, y \in F_2^d$ s.t. $\forall_{i \in S} x_i = y_i$ we have $g(x) = g(y)$. We will call a function $f$ an $(\epsilon, j)$-junta if there is a function $g: \{-1, +1\}^d \rightarrow \Re$ that is a $j$-junta and $Pr_x[f(x) \neq g(x)] \leq \epsilon$.
\end{definition}

We will use a quantitative version of Bourgain's Junta Theorem \cite{Bou} that is given by Khot and Naor in \cite{KN}:

\begin{theorem}
[Bourgain] \cite{Bou}, \cite{KN}
Let $f \{-1, +1\}^d \rightarrow \{-1, +1\}$ be a Boolean function. Then fix any $\epsilon, \delta \in (0, 1/10)$.  Suppose that
$$\sum_S (1-\epsilon)^{|S|} \hat{f}_S^2 \geq 1 - \delta$$
then for every $\beta > 0$, $f$ is a
$$\Big (2^{c \sqrt{\log 1/\delta \log \log 1/\epsilon}}\Big ( \frac{\delta}{\sqrt{\epsilon}} + 4^{1/\epsilon} \sqrt{\beta}\Big), \frac{1}{\epsilon \beta} \Big ) \mbox{-junta}$$
\end{theorem}

This theorem is often described as mysterious, or deep, and has lead to some breakthrough results in theoretical computer science \cite{KN}, \cite{KV} and is also quite subtle. For example, this theorem crucially relies on the property that $f$ is a Boolean function, and in more general cases only much weaker bounds are known \cite{DFKO}.

\end{document}